\newcommand{\N}{\mathds{N}}
\newcommand{\F}{\mathds{F}}
\newcommand{\R}{\mathbb{R}}
\newcommand{\E}{\mathds{E}}
\newcommand{\cC}{\mathcal{C}}
\newcommand{\set}[1]{\left\{{#1}\right\}}
\newcommand{\wt}{\mathrm{wt}}
\newcommand{\Sp}{\operatorname{Span}}
\newcommand{\Unif}{\mathrm{Unif}}
\newcommand{\supp}{\operatorname{Supp}}
\newcommand{\eps}{\varepsilon}
\newtheorem{theorem}{Theorem}[section]
\newtheorem{proposition}[theorem]{Proposition}
\newtheorem{claim}[theorem]{Claim}
\newtheorem{lemma}[theorem]{Lemma}
\newtheorem{corollary}[theorem]{Corollary}
\newtheorem{definition}[theorem]{Definition}
\newtheorem{remark}[theorem]{Remark}
\newcommand{\poly}{\operatorname{poly}}
\newcommand{\polylog}{\operatorname{polylog}}
\newcommand{\Mod}[1]{\ (\mathrm{mod}\ #1)}
\newcommand{\inparen}[1]{\left(#1\right)}
\let\originalleft\left
\let\originalright\right
\renewcommand{\left}{\mathopen{}\mathclose\bgroup\originalleft}
\renewcommand{\right}{\aftergroup\egroup\originalright}
\title{List-Recovery of Random Linear Codes over Small Fields}
\newif\ifauthors
\author{Dean Doron\thanks{Ben-Gurion University of the Negev. \texttt{deand@bgu.ac.il}.}
\and Jonathan Mosheiff\thanks{Ben-Gurion University of the Negev. \texttt{mosheiff@bgu.ac.il}.}
\and Nicolas Resch\thanks{Informatics Institute, University of Amsterdam. \texttt{n.a.resch@uva.nl}.}
\and João Ribeiro\thanks{Instituto de Telecomunicações and Departamento de Matemática, Instituto Superior Técnico, Universidade de Lisboa. \texttt{jribeiro@tecnico.ulisboa.pt}.}}
\date{}
\begin{document}
	
\maketitle
	
\begin{abstract}

We study list-recoverability of random linear codes over small fields, both from errors and from erasures. 
    We consider codes of rate $\eps$-close to capacity, and aim to bound the dependence of the output list size $L$ on $\eps$, the input list size $\ell$, and the alphabet size $q$. 
    Prior to our work, the best upper bound was $L = q^{O(\ell/\eps)}$ (Zyablov and Pinsker, Prob.\ Per.\ Inf.\ 1981).
    
    Previous work has identified cases in which \emph{linear} codes provably perform worse than non-linear codes with respect to list-recovery. While there exist non-linear codes that achieve $L=O(\ell/\eps)$, we know that $L \ge \ell^{\Omega(1/\eps)}$ is necessary for list recovery from erasures over fields of small characteristic, and for list recovery from errors over large alphabets.

    We show that in other relevant regimes there is no significant price to pay for linearity, in the sense that we get the correct dependence on the gap-to-capacity $\eps$ and go beyond the Zyablov--Pinsker bound for the first time. Specifically, when $q$ is constant and $\eps$ approaches zero,
    \begin{itemize}
        \item For list-recovery from erasures over \emph{prime fields}, we show that $L \leq C_1/\eps$. By prior work, such a result cannot be obtained for low-characteristic fields.

        \item For list-recovery from errors over \emph{arbitrary fields}, we prove that $L \leq C_2/\eps$.
    \end{itemize}
    Above, $C_1$ and $C_2$ depend on the decoding radius, input list size, and field size.
    We provide concrete bounds on the constants above, and the upper bounds on $L$ improve upon the Zyablov--Pinsker bound whenever $q\leq 2^{(1/\eps)^c}$ for some small universal constant $c>0$. 
\end{abstract}

\thispagestyle{empty}

\newpage

\section{Introduction}
Error-correcting codes enable reliable communication over noisy channels by encoding messages $m \in \Sigma^k$ as codewords $c \in \Sigma^n$. A code $\cC \subseteq \Sigma^n$ of rate $R = k/n$ and minimum (relative) Hamming distance $\delta$ allows for reliable communication over an adversarial noisy channel that corrupts up to a $\delta/2$-fraction of codeword symbols.
To tolerate more corruptions, one can relax unique decoding to \emph{list decoding}, where the decoder outputs all codewords within a given Hamming radius. 

This notion is further generalized by \emph{list recovery (from errors)}, which models scenarios where the receiver gets a small list of possible values for each symbol. Formally, a code $\cC$ is said to be \emph{$(\rho, \ell, L)$-list-recoverable} if for every sequence of sets $T_1, \dots, T_n \subseteq \Sigma$ with $|T_i| \leq \ell$, we have
\[
    |\cC \cap B_{\rho}(T_1 \times \cdots \times T_n)| \le L,
\]
where $B_{\rho}(T_1 \times \cdots \times T_n)$ denotes the Hamming ball consisting of all words in $\Sigma^n$ that agree with $T_1, \dots, T_n$ in at least $(1 - \rho)n$ coordinates. When $\ell = 1$, list-recovery from errors reduces to standard list-decoding (from errors).

A related variant is \emph{list recovery from erasures}, where some coordinates are entirely unknown, modeled by setting $T_i = \Sigma$. A code is said to be \emph{$(\alpha, \ell, L)$-list-recoverable from erasures} if
\[
    \left|\cC \cap (T_1\times\dots\times T_n)\right| \leq L
\]
whenever $|T_i| \leq \ell$ for at least $(1 - \alpha)n$ positions. Here too, the case $\ell = 1$ corresponds to list-decoding from erasures.

List recoverable codes are used as a building block for list-decodable and uniquely decodable codes \cite{GI02,GI03,GI04,GI05,KMRSZ17,GKO+18,HRW2020}. They have also gained a significant independent interest, in part due to their applications in pseudorandomness \cite{TZ2004,GUV2009,DMOZ22,KT22}, algorithms (in particular, for heavy hitters, compressed
sensing, and combinatorial group testing \cite{INR10,NPR11,LNN+2016, GLPS17,DW22}), and cryptography \cite{HIOS15,HLR21}.

For both list-recovery from errors and from erasures, there exists a well-defined \emph{capacity threshold} that characterizes the maximal achievable rate for which bounded list-size decoding is possible. Specifically, given parameters $\rho$, $\ell$, and alphabet size $q$, there is a critical rate $R^* = R^*(\rho,\ell,q)$ such that:
\begin{itemize}
    \item For every $\eps > 0$ and any large enough block length, there exist codes of rate $R^* - \eps$ that are $(\rho, \ell, L)$-list-recoverable (from errors or erasures) with
    \begin{equation}\label{eq:ListSize}
        L = O_{\ell,\eps}(1).
    \end{equation}
    \item For every $\eps>0$ and any large enough block length $n$, no code of rate $R^* + \eps$ is $(\rho, \ell, L)$-list-recoverable for $L = q^{o(n)}$.
\end{itemize}
The exact threshold depends on the recovery model:
\begin{itemize}
    \item For $(\rho,\ell,L)$-list-recoverability from \emph{errors}, the threshold rate is
    \[
        R^*_{\mathrm{errors}} = 1 - h_{q,\ell}(\rho),
    \]
    where
    \[
        h_{q,\ell}(\rho) = \rho\,\log_q\!\inparen{\frac{q-\ell}{\rho}} + (1-\rho)\,\log_q\!\inparen{\frac{\ell}{1-\rho}},
    \]
    valid for $0 \le \rho \le 1 - \frac{\ell}{q}$~\cite[Theorem 2.4.12]{Res20}.
    
    \item For $(\alpha,\ell,L)$-list-recoverability from \emph{erasures}, the corresponding threshold is 
    \[
        R^*_{\mathrm{erasures}} = (1 - \alpha)\cdot (1 - \log_q \ell).
    \]
\end{itemize}

The dependence of the list size $L$ on the parameters $\ell$ and $\eps$ (see \cref{eq:ListSize}) is often critical, and has been the focus of extensive research (e.g.,~\cite{RW14,RW18,LP2020,GLS+21,GLMRSW22,GST2022,Tamo2024,CZ2024,LS2025}). 

Using the probabilistic method, it is easy to show that plain random codes achieve $L = O(\ell/\eps)$, and this dependence is often viewed as the optimal benchmark. 
Codes achieving this tradeoff are said to match the \emph{Elias bound for list-recovery}, in reference to the analogous threshold in list-decoding~\cite{Elias1957} (see also~\cite{MRSY24}).

To set expectations for list-recovery of \emph{linear} codes, we recall the classic argument of Zyablov and Pinsker~\cite{ZP81}, adapted to the setting of list-recovery by Guruswami~\cite{Gur03}. Since any set of $L+1$ vectors has a linearly independent subset of size $\log_q (L+1)$, and since the events that linearly independent vectors lie in a linear code are stochastically independent, the argument for plain random codes gives $L = q^{O(\ell/\eps)}$. 
This naturally raises the question: what is the actual \emph{price of linearity} in list-recovery? While various forms of degradation are possible, we seek to understand how the requirement of linearity affects the list-size achievable near capacity.

\paragraph{Prior work}
Most previous results concerning the output list size have focused on the \emph{large alphabet regime}, where $q$ is at least exponential in $1/\eps$. In this setting, Li and Shagrithaya \cite{LS2025} recently proved that random linear codes are almost surely list-recoverable with list size $L \le \inparen{\ell/\eps}^{O\inparen{\ell/\eps}}$. By a reduction between code ensembles~\cite{LMS24}, the same upper bound also holds with high probability for Reed--Solomon codes over random evaluation sets. On the other hand, \cite{LS2025} proved that all $(\rho,\ell,L)$-list-recoverable (from errors) linear codes over a large alphabet must satisfy $L \ge \ell^{\Omega(R/\eps)}$, implying an exponential gap in the list-size between linear codes and plain random codes. A similar negative result was previously proven by Chen and Zhang \cite{CZ2024} for Reed-Solomon codes and folded Reed--Solomon codes. Recently, Komech and Mosheiff \cite{KM2025} constructed a new ensemble of non-linear codes that achieve $L\le O(\ell/\eps)$ in list-recovery from errors. These are the only codes other than plain random codes known to achieve the list-recovery Elias bound.

Older works of Rudra and Wootters~\cite{RW14,RW18}, incomparable to \cite{LS2025}, show (in our terms) that random linear codes achieve  list-size $L \le \ell^{\frac 1\eps\cdot \log^2\inparen{\ell/\eps}}$ for list-recovery (from errors) in the large alphabet regime, but only under the guarantee that $\rho = 1- \Omega(\eps)$.

Other works concern the list-recoverability of folded Reed--Solomon codes, multiplicity codes, tensor codes, and variants of them (e.g., \cite{KRS+2018,HRW2020,Tamo2024}). In particular, Tamo~\cite{Tamo2024} shows that folded Reed--Solomon codes and multiplicity codes achieve list size $L\le \inparen{\ell/\eps}^{O\inparen{\frac {1+\log \ell}\eps}}$ (in the errors case) in the large alphabet regime.

In contrast, very little is known about list-recovery in the \emph{small alphabet regime}, where $q$ is sub-exponential in $1/\eps$. To the best of our knowledge, the only positive result in this setting is the aforementioned $L=q^{O(\ell/\eps)}$ for random linear codes \cite{ZP81}. On the other hand, we know that when $q$ is a power of a small prime, random linear codes in this regime are very unlikely to be $(\alpha,\ell,L)$-list-recoverable from erasures with $\ell^{o(1/\eps)}$~\cite{GLMRSW22} (we state this as \Cref{thm:list-rec-erasures-lb} below). 

Provable separation between linear and nonlinear codes were also established in the setting of list decoding from erasures. 
For the special case of $\F_2$, and aiming for an erasure decoding radius of $\alpha = 1-\eps$, Guruswami \cite{Gur03} showed that the output list size must satisfy $L=\Omega(1/\eps)$, as long as the rate is sufficiently non-vanishing. In contrast, plain random codes achieve $L=O(\log(1/\eps))$, and we even have explicit constructions that nearly match plain random codes in some parameter regimes \cite{BDT20}.

\subsection{Our Contribution}

We establish new upper bounds on the output list size for list-recovery of random linear codes near capacity in the small alphabet regime. To the best of our knowledge, these are the only known bounds for general list-recovery in this setting beyond the classical Zyablov--Pinsker argument. Notably, our results achieve list sizes with only linear dependence on \(1/\eps\), in contrast to the exponential dependence in previous bounds.

\paragraph{List-recovery from erasures over prime fields.}
For our first result, we consider $(\alpha,\ell,L)$-list-recovery from erasures.
Recall that the capacity in this case is 
\[
    R^*_{\mathrm{erasures}}(\alpha,\ell,q) = (1-\alpha)(1-\log_q\ell).
\]
As mentioned before,~\cite{GLMRSW22} showed that there is a price to pay for linearity over fields of small characteristic.
More precisely, they proved the following.
\begin{theorem}[\protect{informal; see~\cite[Theorem III.1]{GLMRSW22}}] \label{thm:list-rec-erasures-lb}
    If $\ell$ divides $\mathrm{char}(\F_q)$, then with high probability over the choice of the linear code, the output list size $L$ cannot be taken smaller than $\ell^{\Omega(1/\eps)}$.
\end{theorem}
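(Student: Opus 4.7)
My plan is to expose a structured bad input that forces a large list size by exploiting the $\F_p$-linearity of $\cC$, where $p = \mathrm{char}(\F_q)$. Write $q = p^s$, and (reading the hypothesis so that $\ell$ is a power of $p$) write $\ell = p^k$ with $k \le s$; fix any $k$-dimensional $\F_p$-subspace $W \subseteq \F_q$, so that $|W| = \ell$. For a set $S \subseteq [n]$ of size $m = (1-\alpha)n$, put $T_i = W$ for $i \in S$ and $T_i = \F_q$ (erased) for $i \notin S$, and write $T_S = T_1 \times \cdots \times T_n$. Since $\cC$ is $\F_p$-linear and $W$ is an $\F_p$-subspace, $\cC \cap T_S$ is itself an $\F_p$-subspace of $\F_q^n$, so $|\cC \cap T_S| = p^{D_S}$ with $D_S := \dim_{\F_p}(\cC \cap T_S)$. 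It therefore suffices to show that, with high probability over a random $\F_q$-linear code $\cC$ of rate $R = R^*_{\mathrm{erasures}} - \eps$,
\[
    \max_{|S|=m} D_S \;\ge\; \Omega(k/\eps),
\]
since this exhibits input lists certifying $L \ge p^{k\cdot\Omega(1/\eps)} = \ell^{\Omega(1/\eps)}$.

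The heart of the argument is a first-moment computation on what I will call \emph{rich $\F_p$-configurations}: tuples $(v_1,\dots,v_d)$ of $\F_p$-linearly independent vectors of $\F_q^n$ that lie in a common $T_S$. The crucial parameter is $d' := \dim_{\F_q}\Sp_{\F_q}(v_1,\dots,v_d)$, because for a random $\F_q$-linear $\cC$ one has $\Pr[v_1,\dots,v_d \in \cC] = \Pr[\Sp_{\F_q}(v_1,\dots,v_d) \subseteq \cC] \approx q^{-d'(1-R)n}$. The key point is that $d'$ can be as small as $\lceil d/s\rceil$, because $\F_p$-linear independence is much weaker than $\F_q$-linear independence; this $\F_p/\F_q$ gap has no analogue for non-linear random codes and is precisely what drives the exponential-in-$1/\eps$ price of linearity. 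Parameterizing each admissible tuple by (i) a $d'$-dimensional $\F_q$-subspace $U \subseteq \F_q^n$, (ii) an injective $\F_p$-linear embedding $\F_p^d \hookrightarrow U$, and (iii) a size-$m$ set $S$ on which all the $v_j$ land in $W$, one shows that the expected number of rich configurations is at least $1$ provided $d \le c\,k/\eps$ for some absolute $c > 0$. Upgrading this first-moment bound to a high-probability statement uses the standard threshold-rate concentration machinery for random linear ensembles developed in~\cite{GLMRSW22}.

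The main obstacle is the combinatorial bookkeeping in this computation: one must simultaneously control (i) the internal structure of each $\F_p$-independent but $\F_q$-dependent tuple (the choices of $U$ and of the $\F_p$-embedding into it), (ii) the random coordinates at which each $v_j$ lands in $W$, and (iii) the constraint that all the $v_j$'s share a common $W$-support set of size at least $m$. Converting the naive per-$S$ estimate $\E[|\cC \cap T_S|] \approx q^{-\eps n}$ (which is much less than $1$) into the union-over-$S$ bound $\max_S D_S \ge \Omega(k/\eps)$ requires a careful optimization of $d'$ in the range $[1,d/s]$ together with tight Gaussian-binomial accounting; this is exactly the kind of calculation that the local-profile framework of~\cite{GLMRSW22} is engineered to automate.
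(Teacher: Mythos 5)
Note first that the paper does not actually prove \cref{thm:list-rec-erasures-lb}: it is quoted from~\cite{GLMRSW22}, and the only in-paper content is the sketch in the technical overview, which already points to the construction used there --- rectangles whose coordinates are \emph{varying} one-dimensional $\F_\ell$-subspaces $\gamma_i\cdot\F_\ell$, chosen per coordinate and in effect adapted to the codewords --- not a single subspace fixed in advance. Your proposal deviates from this at exactly the point where the counting has to work, and there is a genuine gap.

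The gap is your key claim that an $\F_p$-independent $d$-tuple lying in a common $T_S$ can have $\F_q$-span of dimension $d'\approx \lceil d/s\rceil$. For vectors constrained to a fixed $\F_p$-subspace $W$ of size $\ell=p^k$ on the window $S$, this is false: for any $\F_q$-subspace $U\leq \F_q^m$ with $\dim_{\F_q}U=d'$ one has $|U\cap W^m|\leq p^{kd'}=\ell^{d'}$ (induct on $m$, splitting $U\cap W^m$ along the last-coordinate projection, whose image has at most $|W|=p^k$ elements and whose kernel lives in an $\F_q$-subspace of dimension $d'-1$). Any genuine $\F_q$-dimension savings would have to come from codewords vanishing on $S$, which with high probability do not exist at rates near capacity. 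Hence $d\leq k\,d'$ up to negligible terms, the probability cost is $q^{-\lceil d/k\rceil(1-R)n}$, and the per-$S$ expected number of your rich configurations is about $q^{-\eps d' n}\ll 1$ for every $d'\geq 1$: there is no range $d'\in[1,d/s]$ to optimize over. The only remaining gain is the union over the $\binom{n}{\alpha n}\leq 2^{h_2(\alpha)n}$ windows $S$, which caps the achievable span dimension at roughly $d'\lesssim h_2(\alpha)/(\eps\log_2 q)$; in particular your approach gives nothing when $\alpha=0$ (where $\E[\#\{\text{nonzero codewords in } W^n\}]=q^{-\eps n}$, so whp $\cC\cap W^n=\{0\}$), and it cannot yield the claimed absolute-constant bound $d\leq c\,k/\eps$. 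The argument of~\cite{GLMRSW22} avoids this by letting the lists vary per coordinate: one asks for $b$ $\F_q$-independent codewords such that at every coordinate $i$ the entries $(c^{(1)}_i,\dots,c^{(b)}_i)$ span an $\F_\ell$-space of dimension at most $1$; the number of such $b$-tuples is roughly $\left(\frac{q}{\ell}\,\ell^{\,b}\right)^n$, so the expected count is exponentially large already for $b=\Theta\!\left((1-\log_q\ell)/\eps\right)$, their threshold framework (a bare first moment of at least $1$ would not suffice for a with-high-probability statement) upgrades this, and the $\F_\ell$-span of such a tuple exhibits $\ell^{b}$ codewords inside a legal rectangle with $T_i=\Sp_{\F_\ell}(c^{(1)}_i,\dots,c^{(b)}_i)$. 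A secondary issue: reading the hypothesis as $\ell=p^k$ for arbitrary $k\leq s$ loses the subfield structure ($k\mid s$) that makes $W$ multiplicatively closed, which is exactly what the coset construction exploits.
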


We show that this limitation disappears when considering a \emph{prime} field size $q$.
In this case, we can make the output list size $O_{\alpha,\ell,q}(1/\eps)$.
\begin{theorem}[list recovery from erasures over prime fields; see \cref{thm:erasures}]\label{thm:erasures-intro}
    Given $1 \leq \ell \leq q$ with $q$ prime and $\alpha \in [0,1)$, there exists $C(\alpha,\ell,q)>0$ such that the following holds. Let $\cC \leq \F_q^n$ be a random linear code of rate $R^*_{\mathrm{erasures}}(\alpha,\ell,q)-\eps$ for some $\eps > 0$. Then, $\cC$ is with high probability $\left(\alpha,\ell,\frac{C(\alpha,\ell,q)}{\eps}\right)$-list-recoverable from erasures. 
\end{theorem}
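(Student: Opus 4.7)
The plan is to carry out a first-moment union bound over $(L+1)$-tuples of bad codewords, refining the classical Zyablov--Pinsker argument to exploit that $q$ is prime. The event that $\cC$ is not $(\alpha,\ell,L)$-list-recoverable from erasures is equivalent to the existence of a matrix $M \in \F_q^{(L+1)\times n}$ whose $L+1$ rows are pairwise-distinct members of $\cC$ and at least $(1-\alpha)n$ of whose columns take at most $\ell$ distinct values. For a fixed $M$ of $\F_q$-rank $r$, the probability that all rows of $M$ lie in the random code equals $q^{-mr}$, where $m = n(1-R)$, so it suffices to show
\[
\sum_{r=1}^{L+1} \mathcal{N}_r \cdot q^{-mr} = o(1),
\]
where $\mathcal{N}_r$ counts rank-$r$ matrices satisfying the row- and column-diversity constraints.

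The key step is a careful estimate of $\mathcal{N}_r$. Factoring a rank-$r$ matrix as $M = AB$, with $A \in \F_q^{(L+1)\times r}$ encoding the linear relations among the rows and $B \in \F_q^{r\times n}$ specifying their embedding into $\F_q^n$, the column-diversity condition translates into a pointwise constraint on the columns of $B$: for each non-erased $j$, the linear functional $\langle \,\cdot\,, B_{*,j}\rangle$ must take at most $\ell$ distinct values on the rows of $A$. Letting $N(A,\ell)$ denote the number of $b \in \F_q^r$ that yield at most $\ell$ distinct values on the rows of $A$, then, up to a polynomial factor for the $\mathrm{GL}_r$-ambiguity of the factorization, $\mathcal{N}_r \leq \sum_A N(A,\ell)^{(1-\alpha)n} \cdot q^{r\alpha n}$, with $A$ ranging over rank-$r$ matrices with $L+1$ distinct rows. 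The goal is then an upper bound on $N(A,\ell)$ that closes the calculation and yields $L = O_{\alpha,\ell,q}(1/\eps)$. The prime hypothesis enters through additive-combinatorics arguments \`a la Cauchy--Davenport or Vosper in $\F_q^r$: an extremal row-set $A$ maximizing $N(A,\ell)$ would have to be structured in terms of nontrivial cosets of additive subgroups of $\F_q$, which exist only trivially for prime $q$. Thus, for prime $q$, $A$ is forced to spread across $\F_q^r$, and this spread quantitatively limits $N(A,\ell)$.

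The principal technical hurdle is the additive-combinatorics lemma controlling $N(A,\ell)$. The naive bound $N(A,\ell) \leq q^r$ is too weak, and even the product bound $N(A,\ell) \leq \ell^r$, which holds for generic $A$, can fail for worst-case $A$: some $A$'s admit many normals $b$ that collapse the row-set into few hyperplanes. To overcome this, one should balance each $A$'s contribution against its frequency and show that the $A$'s with large $N(A,\ell)$ are correspondingly rare, so that $\sum_A N(A,\ell)^{(1-\alpha)n}$ is dominated by generic $A$'s with smaller $N$. Such a balance provably fails in the non-prime case---\Cref{thm:list-rec-erasures-lb} exhibits structured $A$'s built from subgroup cosets that are both abundant and possess large $N(A,\ell)$---so exploiting primality is essential in order to obtain the linear $1/\eps$ dependence.
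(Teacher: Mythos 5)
Your proposal is a plan rather than a proof: the entire difficulty is deferred to the unproven ``additive-combinatorics lemma controlling $N(A,\ell)$,'' and the mechanism you suggest for establishing it does not work. In your first-moment sum, the number of coefficient matrices $A\in\F_q^{(L+1)\times r}$ is at most $q^{(L+1)r}$, which is \emph{independent of $n$}, while each $A$ contributes $N(A,\ell)^{(1-\alpha)n}q^{r\alpha n}q^{-(1-R)rn}$, which is exponential in $n$. Hence $\sum_A N(A,\ell)^{(1-\alpha)n}$ is governed by the single worst $A$, and ``balancing contribution against frequency'' / ``bad $A$'s are rare'' buys nothing: you need a \emph{worst-case} bound on $N(A,\ell)$ over all rank-$r$ matrices with $L+1$ distinct rows, and that bound is exactly as hard as the theorem. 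Moreover, even granting the optimistic worst-case bound $N(A,\ell)\le \ell^{r}$, the union bound still fails at small ranks: the per-rank slack is only $q^{-\eps r n}$, while overheads that do not scale with $r$ (e.g.\ the $\binom{n}{\alpha n}\approx q^{h_2(\alpha)n/\log q}$ choices of erased positions) make the exponent positive whenever $r\lesssim h_2(\alpha)/(\eps\log q)$. Since a bad list of size $L+1=\Theta(1/\eps)$ may well have rank far below $1/\eps$, this low-rank regime cannot be dismissed, and no purely counting-based refinement of Zyablov--Pinsker of the type you describe closes it.

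This low-rank regime is precisely where the paper abandons direct counting and argues probabilistically. The paper first shows (via Lev's rearrangement theorem, \cref{lem:lev}, whose extremal configurations are centered intervals) that every $\ell$-subset of a prime field is $\delta$-mixing for an explicit $\delta>0$ (\cref{cor:prime-mixing}); it then imports the Guruswami--H{\aa}stad--Kopparty machinery of $2$-increasing chains (\cref{lem:SS}) to prove \cref{lem:main-erasures}: for $b$ independent uniform samples from a mixing rectangle $T$, the probability that their span contains more than $Cb$ points of $T$ is $q^{-an}$. The final union bound in \cref{thm:erasures} splits at $b^{\star}=\Theta(\ell/\eps)$: spanning tuples of size below $b^{\star}$ are handled by this mixing/chain lemma (the step your approach has no substitute for), and only tuples of size above $b^{\star}$ are handled by the trivial count $|T|^b q^{-(1-R)nb}$, which is where the $L=O_{\alpha,\ell,q}(1/\eps)$ dependence comes from. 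Your intuition that primality must enter through additive structure (no nontrivial subgroups, Cauchy--Davenport-type phenomena) is the right one and is indeed where \cref{lem:lev} is used, but as it stands your argument has a genuine gap both in the missing lemma and in the structure of the union bound itself.
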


While our focus is on the setting where $\alpha$ and $q$ are constants, we determine effective bounds on $C(\alpha,\ell,q)$ even when $\alpha$ is a function of $n$, and $q$ is slightly super-constant.\footnote{For a concrete example, when $\ell \le (1-\gamma)q$ for some constant $\gamma$, and $\alpha$ is bounded away from $1$, we can take $C(\alpha,\ell,q) \le q^{O(\ell \cdot \log q)}$ as long as $n \ge C(\alpha,\ell,q)$. We refer the reader to \Cref{thm:erasures} for the precise bound.}

\paragraph{List-recovery from errors.}

Next, we consider the case of list-recovery from errors, where we recall the capacity is 
\[
    R^*_{\mathrm{errors}}(\rho,\ell,q) = 1 - h_{q,\ell}(\rho).
\]

Here, contrary to what happens in the regime of large $q$-s, we do not observe any price to pay for linearity, at least in terms of the dependence on the gap-to-capacity $\eps$. 

\begin{theorem}[list recovery from errors; see \cref{thm:errors}]\label{thm:errors-intro}
    Given $1 \leq \ell \leq q$ with $q$ a prime power and $\rho \in (0,1-\ell/q)$, there exists $C(\rho,\ell,q)>0$ such that the following holds. Let $\cC \leq \F_q^n$ be a random linear code of rate $R^*_{\mathrm{errors}}(\rho,\ell,q)-\eps$ for some $\eps>0$. Then, $\cC$ is with high probability $\left(\rho,\ell,\frac{C(\rho,\ell,q)}{\eps}\right)$-list-recoverable from errors. 
\end{theorem}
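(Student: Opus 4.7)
The plan is to refine the classical Zyablov--Pinsker first-moment argument so that the dependence on $q$ is absorbed into the constant, leaving only an $O(1/\eps)$ factor. Let $\cC$ be a uniformly random $k$-dimensional linear code with $k = (1 - h_{q,\ell}(\rho) - \eps)n$. After a union bound over the at most $q^{\ell n}$ choices of $T = (T_1,\dots,T_n)$ with $|T_i|\le \ell$, it suffices to upper bound, for each fixed $T$, the probability that $|\cC \cap B_\rho(T)| \ge L+1$. I would do this via Markov applied to the $(L+1)$-st factorial moment of $|\cC \cap B_\rho(T)|$, i.e., by bounding the expected number of ordered $(L+1)$-tuples of distinct codewords lying in $B_\rho(T)$.

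The key observation is that for the uniformly random $\cC$, a fixed set of distinct vectors $v_0,\dots,v_L \in \F_q^n$ is contained in $\cC$ with probability $q^{-(n-k)r}$, where $r = \dim \Sp(v_0,\dots,v_L)$. Grouping tuples by rank, this expectation becomes $\sum_{r=1}^{L+1} G_r(T)\cdot q^{-(n-k)r}$, where $G_r(T)$ counts ordered $(L+1)$-tuples in $B_\rho(T)^{L+1}$ of rank exactly $r$. The full-rank term $r=L+1$ is easy: using $|B_\rho(T)| \le q^{h_{q,\ell}(\rho)n}$ and $n - k = (h_{q,\ell}(\rho) + \eps)n$, it contributes at most $q^{-\eps n(L+1)}$, which absorbs the $q^{\ell n}$ factor once $L = \Omega(\ell/\eps)$. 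The source of the $q^{O(\ell/\eps)}$ loss in Zyablov--Pinsker is precisely the low-rank terms, where the probability penalty $q^{-(n-k)r}$ is too weak to tame the potentially large combinatorial count $G_r(T)$.

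To handle low-rank tuples, I would parametrize a rank-$r$ tuple by first choosing an ordered basis $(c_1,\dots,c_r) \in B_\rho(T)^r$ and then specifying the remaining $L+1-r$ entries as linear combinations of this basis that also lie in $B_\rho(T)$. This reduces the task to a sharp bound on the extremal quantity $M_r(T) := \max_{V\,:\,\dim V = r} |V \cap B_\rho(T)|$. The trivial bound $M_r(T) \le q^r$ merely recovers Zyablov--Pinsker. What I would aim for instead is a \emph{near-average} bound of the form $M_r(T) \le C(\rho,\ell,q) \cdot q^r \cdot |B_\rho(T)|/q^n$, matching the expected intersection size over a random $r$-dim subspace up to a multiplicative constant depending only on $\rho,\ell,q$. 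Given such a bound, summing over $r$ and $T$ yields a total failure probability at most $q^{\ell n} \cdot C(\rho,\ell,q)^{L+1} \cdot q^{-\eps n(L+1)}$, which is $o(1)$ once $L = C'(\rho,\ell,q)/\eps$.

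The main obstacle is the extremal estimate on $|V \cap B_\rho(T)|$: we must preclude subspaces $V$ abnormally concentrated inside a Hamming ball around a small-alphabet list. A ``typical'' $V$ is well-behaved, and a Chernoff/entropy-type computation over the $n$ coordinates yields the desired average bound. The difficulty lies with ``structured'' $V$ whose column projections $\pi_j(V) \subseteq \F_q$ collapse in correlated ways with the lists $T_j$. I expect the proof to separate into a typical regime, handled by concentration, and an atypical regime, handled by directly enumerating the (much smaller) family of structured subspaces and exploiting the $(1-\rho)n$---rather than $n$---coordinate constraint to absorb any loss. That this works over arbitrary rather than only prime $q$---unlike the erasures case of \Cref{thm:erasures-intro}---reflects that the error slack $\rho > 0$ provides enough flexibility to sidestep the additive structure of $\F_q$.
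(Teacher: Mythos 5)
There is a genuine gap: your central extremal claim, that every $r$-dimensional subspace $V$ satisfies $|V \cap B_\rho(T)| \le C(\rho,\ell,q)\cdot q^r\cdot |B_\rho(T)|/q^n$, is false. Take input lists with $0 \in T_i$ for all $i$ and let $V = \Sp(e_1,\dots,e_r)$ for any $r \le \rho n$; then every vector of $V$ agrees with the zero word outside its first $r$ coordinates, so $V \subseteq B_\rho(T)$ and $|V \cap B_\rho(T)| = q^r$, exceeding your claimed bound by a factor of roughly $q^{(1-h_{q,\ell}(\rho))n}$. Once you fall back to the true worst case $M_r(T) \le \min\left(q^r, |B_\rho(T)|\right)$, the rank-$r$ term in your factorial-moment bound becomes (up to minor factors) $q^{r(L+1-r)}\,q^{-\eps n r}$, and after the union bound over the $q^{\ell n}$ choices of $T$ the dominant term at $r \approx \log_q(L+1)$ forces $\log_q(L+1) \gtrsim \ell/\eps$ — i.e., you recover exactly the Zyablov--Pinsker bound $L = q^{O(\ell/\eps)}$ and nothing more. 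Your proposed dichotomy ("typical" subspaces by concentration, "structured" ones by enumeration) does not repair this, because what must be controlled is not the number of bad subspaces but the number of basis tuples from the ball that span a subspace with abnormally large ball-intersection; a worst-case statement quantified over all $V$ simply cannot hold.

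This is precisely where the paper's argument departs from a first-moment count over worst-case subspaces. The paper proves an \emph{average-case} statement over tuples (\cref{lem:main-erasures}): if $X^{(1)},\dots,X^{(b)}$ are sampled uniformly from $T = B_\rho(T_1\times\cdots\times T_n)$, then except with probability $q^{-an}$ their span captures at most $C\cdot b$ points of $T$. This is what lets the union bound in \cref{thm:errors} run over the restricted family $\mathcal F_b$ (linearly independent tuples with large span-intersection), whose size is $|T|^b q^{-an}$ rather than $|T|^b$. Proving that tuple-level statement is the technical heart: it needs the $\delta$-mixing of list-recovery balls (\cref{lem:mixing-errors-helper}, \cref{cor:mixing-list-rec-ball}) — valid over every field exactly when $\rho \in (0,1-\ell/q)$, which substantiates your closing intuition — together with the $2$-increasing-chain machinery of Guruswami--H{\aa}stad--Kopparty (\cref{lem:SS}) to convert pairwise mixing into a bound on the probability that a whole span is bad. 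Without an ingredient of this kind, the low-rank configurations your proposal identifies as the bottleneck remain unhandled.
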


Again, we determine effective bounds on the numerator $C(\rho,\ell,q)$, which can be found in \Cref{thm:errors}.\footnote{Specifically, when $\rho$ is bounded away from both $0$ and $1-\ell/q$, we can take $C(\rho,\ell,q) \le q^{\ell \cdot \polylog(q)}$. See \cref{thm:errors} for the precise bound.} For context, we recall that in the ``large $q$ regime'' (say, $q \gg \ell^{1/\eps}$), such a result is provably impossible. Specifically, when $q$ is large, the list-recovery capacity is (essentially) the Singleton bound, namely, $R^*_{\mathrm{errors}} \approx 1-\rho$. It is known that at rate $1-\rho-\eps$, \emph{any} linear code list-recoverable from errors must have $L \geq \ell^{\Omega(R/\eps)}$ \cite{LS2025}. That is, the dependence of $L$ on the gap-to-capacity \emph{must be exponential}. But in our case, at least if $\rho,\ell$ and $q$ are held constant, the dependence of $L$ on $\eps$ is just $O(1/\eps)$.

We conclude this section with some remarks. 

\begin{remark}[on the field insensitivity]
    \em
    Both of our results show that there is no significant price to pay for linearity in list-recovery for certain parameter regimes.
    Our result in \cref{thm:errors-intro} on list-recovery from errors is insensitive to the field size.
    On the other hand, as discussed above, our result in \cref{thm:erasures-intro} on list-recovery from erasures is (necessarily!) field sensitive.
    We provide some informal discussion on why this happens.
    First, note that we work with rates close to capacity, and the capacity in the erasures setting is larger for comparable $\alpha$ and $\rho$.
    Second, looking ahead (see \cref{sec:tech-overview} for more details), list-recovery from erasures depends on the ``additive structure'' of $T_1\times \cdots\times T_n$, for arbitrary size-$\ell$ subsets $T_1,\dots,T_n\subseteq \F_q$.
    If the $T_i$'s are \emph{subspaces} of $\F_q$, then it is quite likely these form a bad configuration for list-recovery from erasures.
    In contrast, list-recovery from errors depends on the additive structure of the ``puffed-up'' combinatorial rectangles
    \begin{equation*}
        B_\rho(T_1 \times T_2 \times \cdots \times T_n) = \{x \in \F_q^n:x_i \notin T_i \text{ for at most }\rho n \text{ choices of } i \in [n]\}.
    \end{equation*}
    Even if the $T_i$'s are subspaces of $\F_q$, the ``puffing up'' operation kills any additive structure that could lead to a bad list-recovery configuration.
\end{remark}

\begin{remark}[on the dependence of $L$ on the various parameters]
    \em
    Recall that a code which is $\eps$-close to capacity is said to achieve the Elias bound if $L = O(\ell/\eps)$. Note that in both \Cref{thm:erasures-intro} and \Cref{thm:errors-intro}, the dependence of $L$ on $\eps$ is as we would hope. However, the dependence on the other parameters (particularly $\ell$ and $q$) is exponentially worse than the $O(\ell)$ dependence. We leave it as a natural open problem to improve on this dependency.
\end{remark}

\begin{remark} [comparison to~\cite{ZP81}]
    \em 
    As mentioned earlier, we are not aware of any prior arguments establishing non-trivial bounds on the list-size $L$ for codes $\eps$-close to capacity which do not require $q$ to be large, other than what follows from the Zyablov--Pinsker argument~\cite{ZP81} (given formally in~\cite{Gur03}). Recall that this method guarantees list size $L=q^{O(\ell/\eps)}$. In comparison, we obtain (roughly) $L = \frac{1}{\eps} \cdot q^{O(\ell \log q)}$ in the case of erasures (over prime fields), and $L = \frac{1}{\eps} \cdot q^{\ell \cdot \polylog(q)}$ in the case of errors (over all fields, assuming $\rho$ is not too close to $0$ or $1-\ell/q$). 
    Thus, compared to \cite{ZP81}, we obtain a smaller bound on $L$ once, roughly, $\eps < \frac{1}{\polylog(q)}$.
    In particular, when $\ell$ and $q$ are constants we achieve asymptotic improvement in $L$. 
\end{remark}

\subsection{Technical Overview}\label{sec:tech-overview}

At a conceptual level, our work reconsiders the approach of Guruswami, H{\aa}stad, and Kopparty \cite{GHK11}, which allowed for an understanding of the list-decodability from errors of random linear codes over constant-sized alphabets, and adapts it to the case of list-recovery (either from erasures or errors). Recall that list-decoding from errors is the special case of list-recovery from errors with $\ell=1$. 

Using terminology that we define in this work, the first step in the argument of \cite{GHK11} is to argue that Hamming balls are nontrivially \emph{mixing}. Specifically, fix any center $z \in \F_q^n$ and consider sampling twice uniformly and independently from the Hamming ball of radius $\rho$ centered at $z$; denote the two samples by $X$ and $X'$. The authors argue that for some $\delta>0$ and any $\alpha,\beta \in \F_q\setminus\{0\}$, it follows that for any other center $y \in \F_q^n$, $\Pr[\alpha X+\beta X' \in B_\rho(y)] \leq q^{-\delta n}$ for some $\delta = \delta(\rho,q)>0$.\footnote{In fact for \cite{GHK11} it sufficed to only consider the $z=0$ case, in which case one can always assume $\alpha=\beta=1$. But their argument naturally generalizes to this case, and this is the notion of mixing that we require for our list-recovery results.} Alternatively, we could say that for any $y \in \F_q^n$, we have $\Pr[\alpha X+\beta X' \in y + B_\rho(z)] \leq q^{-\delta n}$. From here, using some additional tools like \emph{2-increasing chains} -- whose existence they establish via a Ramsey-theoretic argument -- they are able to show that random linear codes with $\eps$ gap-to-capacity are with high probability $(\rho,\frac{C(\rho,q)}{\eps})$-list-decodable from errors. In particular, their techniques promise the correct dependence on $\eps$ (although the dependence on the parameters $q$ and $\rho$ is quite poor).

We recommend viewing this ``$\delta$-mixing'' property in the following light. Any argument establishing that random linear codes have good list-decodability must somehow argue that random subspaces and Hamming balls don't tend to ``correlate'' too much. In particular, it should not be the case that Hamming balls have noticeable ``linear structure'', and in particular they should be ``far'' from being closed under addition.

\medskip

In our work, we consider whether or not sets that are relevant for list-recovery, i.e., the sorts of sets that list-recoverable codes cannot intersect with too much, also have nontrivial mixing. 
Firstly, we crystallize in a definition what it means for any arbitrary set $T \subseteq \F_q^n$ to be \emph{$\delta$-mixing}, where $\delta>0$: for any $\alpha,\beta \in \F_q\setminus\{0\}$ and $y \in \F_q^n$, if $X,X' \sim T$ -- which denotes that $X$ and $X'$ are sampled independently and uniformly from $T$ -- then $\Pr[\alpha X+\beta X' \in y+T] \leq q^{-\delta n}$. 

Now, for $(\alpha,\ell,L)$-list-recovery from erasures the relevant sets are combinatorial rectangles $T_1 \times T_2 \times \cdots \times T_n$ where for at least $(1-\alpha)n$ values of $i \in [n]$ we have $|T_i| \leq \ell$. For $(\rho,\ell,L)$-list-recovery from errors the sets are ``puffed-up'' combinatorial rectangles. Namely, for $T_1,T_2,\dots,T_n \subseteq \F_q$ each with $|T_i| \leq \ell$, we consider \emph{list-recovery balls}
\[
	B_\rho(T_1 \times T_2 \times \cdots \times T_n) = \{x \in \F_q^n:x_i \notin T_i \text{ for at most }\rho n \text{ choices of } i \in [n]\}.
\]
Following the argument of \cite{GHK11}, once we establish that these sets are nontrivially mixing, we can obtain bounds on the list-size with the correct dependence on $\eps$. Our task then boils down to understanding the mixingness of the sets relevant for list-recovery. We consider first the erasures case, and subsequently discuss the errors case. 

\paragraph{List-recovery from erasures} Firstly, observe that for a combinatorial rectangle $T_1 \times \cdots \times T_n$, if each of the sets $T_i$ is nontrivially mixing as a subset of $\F_q$ (take the $n=1$ case of the above definition), then $T_1 \times \cdots \times T_n$ is also nontrivially mixing (as a subset of $\F_q^n$). Hence, it suffices for us to consider whether or not subsets of $\F_q$ mix. It is here that the dependence on the field size shows up.

Recall from the earlier discussion that \cite{GLMRSW22} established that random linear codes over $\F_{\ell^t}$ (where $\ell$ is a prime power) are with high probability \emph{not} $(\alpha,\ell,L)$-list-recoverable from erasures unless $L \geq \ell^{\Omega(1/\eps)}$. Indeed, it is easy to find a subset $T \subseteq \F_{\ell^t}$ which is \emph{not} $\delta$-mixing for any $\delta>0$: take $T= \F_\ell$ (or, more generally, any multiplicative coset $\gamma \cdot \F_\ell$ for $\gamma \in \F_{\ell^t}\setminus \{0\}$). Since $\F_\ell$ is closed under addition, in particular we have that if $X$ and $X'$ are sampled independently and uniformly from $T$ then $\Pr[X+X' \in T] = 1$, so $T$ is \emph{not} $\delta$-mixing for any $\delta>0$. 

What went wrong in this example? The fact that $\F_{\ell^t}$ contains $\F_\ell$ as a subfield means that $\F_{\ell^t}$ contains non-trivial $\F_\ell$-linear subspaces. Such subspaces naturally create ``bad'' input lists, and the argument of \cite{GLMRSW22} establishes that indeed a random linear code is likely to contain many vectors from a combinatorial rectangle $T_1 \times \cdots \times T_n$ where at least a $(1-\alpha)$ fraction of the $T_i$'s are $1$-dimensional $\F_\ell$-subspaces of $\F_{\ell^t}$. 

If we insist that $q$ be a \emph{prime}, then $\F_q$ \emph{does not} have any non-trivial subspaces. 
However, in this case $\F_q$ still contains some subsets with ``additive structure;'' for example, taking $\ell$ to be odd here for simplicity, centered intervals\footnote{Or, more generally, arithmetic progressions.} like $I = \{-\frac{\ell-1}{2},-\frac{\ell-1}{2}+1\dots,\frac{\ell-1}{2}-1,\frac{\ell-1}{2}\} \subseteq \F_q$
have the property that if one samples $X,X' \sim I$ independently, then $\Pr[X+X' \in I] \approx \frac{3}{4}+\frac{1}{4\ell^2}$ assuming $\ell \leq 2q/3$.
But note that this is still non-trivially bounded away from $1$! Remarkably, an argument of Lev~\cite{Lev01} shows that this is the worst case over prime fields.
More precisely, over all sets $T$ of size $\ell$, to maximize $\Pr[\alpha X + \beta X' \in \gamma+T]$ where $\alpha,\beta,\gamma \in \F_q$ with $\alpha,\beta \neq 0$, one should choose $T=I$ (the centered interval of length $\ell$), $\alpha=\beta=1$ and $\gamma=0$. 
In our technical section we give an effective bound on $\Pr[X+X' \in I]$ for all $\ell<q$, which allows us to argue that combinatorial rectangles $T_1 \times \cdots \times T_n$ in which at least a $1-\alpha$ fraction of the $T_i$'s have size at most $\ell$ are nontrivially mixing, as desired.

\paragraph{List-recovery from errors} We now wish to establish that list-recovery balls are nontrivially mixing. Notably, unlike in the case of erasures, the argument here is \emph{insensitive} to the base field.
Let $T = T_1 \times \cdots \times T_n$, where each $|T_i| \leq \ell$. Let $X,X' \sim B_\rho(T)$; in this overview, we will sketch how one bounds $\Pr[X+X' \in B_\rho(T)]$ (the argument easily generalizes to allow for multipliers $\alpha,\beta \in \F_q\setminus\{0\}$ and a shift $y \in \F_q^n$). 

Unlike in the case of combinatorial rectangles, it is \emph{not} the case that $X=(X_1,\dots,X_n)$ and $X' = (X_1',\dots,X_n')$ have independent coordinates. For example, conditioned on $X_1$ lying in $T_1$, then $X_2$ is less likely to lie in $T_2$. However, these correlations are relatively minor, and we can essentially ``pretend'' that both $X$ and $X'$ are sampled as follows: for each $i \in [n]$, with probability $1-\rho$ set the $i$-th coordinate to a uniformly random element of $T_i$, and otherwise set it to a uniformly random element of $\F_q\setminus T_i$, and these choices are made independently for each $i \in [n]$.\footnote{In fact for technical reasons we have to consider $X_i$ lying in $T_i$ with probability $\omega$ for $\omega \leq \rho$, and similarly $X_i'$ lies in $T_i$ with probability $\omega' \leq \rho$. But by a concentration argument one can easily establish that $\omega$ and $\omega'$ are with high probability very close to $\rho$.} We remark that a similar trick is implicit in~\cite{GHK11}, and made explicit in the context of rank-metric codes by Guruswami and Resch~\cite{GR18}. 

This new distribution is much more amenable to analysis. In particular, letting $E_i$ be the indicator for the event $X_i+X_i' \in T_i$, then $X+X' \in B_\rho(T)$ iff $\sum_i E_i \geq (1-\rho)n$. Thus, if we can argue $\Pr[E_i=1] < 1-\rho$ then a classic Chernoff-Hoeffding bound establishes $\Pr\left[\sum_i E_i \geq (1-\rho)n\right]$ is exponentially small, implying the desired $\delta$-mixing. 

Bounding $\Pr[E_i=1]$ is the most novel part of the analysis, and is done in \Cref{lem:mixing-errors-helper}. Recall that $X_i,X_i' \sim T_i$, and let also $Y_i,Y_i' \sim \F_q\setminus T_i$. We have 
\begin{align}
	\Pr&[E_i=1] = (1-\rho)^2\Pr[X_i+X_i' \in T_i] + 2\rho(1-\rho)\Pr[X_i+Y_i' \in T_i] + \rho^2\Pr[Y_i+Y_i' \in T_i] \nonumber \\
	&= \sum_{z_i \in T_i}(1-\rho)^2\Pr[X_i+X_i'=z_i] + 2\rho(1-\rho)\Pr[X_i+Y_i'=z_i] + \rho^2\Pr[Y_i+Y_i'=z_i] \ . \label{eq:intro-interesting-term}
\end{align}
As is standard, $\Pr[X_i+X_i'=z_i]$ is proportional to $1_{T_i}*1_{T_i}(z_i)$, the convolution of the indicator functions for $T_i$, and similarly $\Pr[X_i+Y_i'=z_i]$ and $\Pr[Y_i+Y_i'=z_i]$ are proportional to $1_{T_i} * 1_{\F_q\setminus T_i}(z)$ and $1_{\F_q\setminus T_i}*1_{\F_q\setminus T_i}(z)$, respectively. Using the simple identity $1_{\F_q\setminus T_i} = 1-1_{T_i}$, we can rewrite \cref{eq:intro-interesting-term} as 
\[
	c \cdot 1_{T_i}*1_{T_i}(z) + \text{other terms} \ ,
\]
where $c$ is a constant which we show to be positive assuming $\rho \in (0,1-\ell/q)$ (and indeed, if $\rho > 1-\ell/q$ then it could be negative). Upon giving a trivial upper bound $1_{T_i}*1_{T_i}(z_i)$ (which corresponds to the case that $T_i$ does not mix at all, as we must do since we are making no assumptions on the field) and simplifying, we obtain the bound
\[
	\Pr[E_i=1] \leq (1-\rho)^2 + \rho^2 \cdot \frac{\ell}{q-\ell}.
\]
To our satisfaction, we have that $(1-\rho)^2 + \rho^2 \cdot \frac{\ell}{q-\ell} < 1-\rho$ iff $\rho \in (0,1-\ell/q)$, which is precisely the range of decoding radius at which we can hope for positive rate list-recovery from errors in the first place! Thus, we have that $\Pr\left[\sum_i E_i \geq (1-\rho)n\right]$ is exponentially small, establishing that list-recovery balls nontrivially mix. 

\subsection{Open Problems}

Lastly, we leave here some directions for future research:
\begin{itemize}
    \item In our results the dependency on $\eps$ is correct, but the dependency on $\ell$ and $q$ is rather poor. Can we improve this dependency? Or, can we perhaps prove new lower bounds on $L$ in terms of $\ell$ and $q$ that apply when these parameters are not too big?

    \item \cite{GLMRSW22} showed that over $\F_q$ with $q=\ell^t$ (and hence small characteristic), a random linear code is with high probability \emph{not} $(\alpha,\ell,L)$-list-recoverable from erasures unless $L\geq \ell^{\Omega(1/\eps)}$.
    Can we show that this lower bound actually applies to \emph{every} $q$-ary linear code?

    \item Many arguments for codes being list-recoverable from errors in fact establish the stronger property of \emph{average-radius}-list-recovery, where now one instead shows that for any input lists $T_1,\dots,T_n \subseteq \F_q$ of size $\ell$, given $L+1$ codewords $c^{(1)},\dots,c^{(L+1)}$ one has
    \[
        \frac{1}{L+1}\sum_{j=1}^{L+1} 1_{\{c^{(j)}_i \notin T_i\}}  > \rho n.
    \]
    This in particular implies that there cannot be $L+1$ codewords lying in a list-recovery ball of radius $\rho$. We believe our method should be able to establish this (slightly) stronger guarantee for random linear codes.
\end{itemize}

\section{Preliminaries}

\subsection{Notation}

We will often denote random variables and sets by uppercase Roman letters.
The distinction will be clear from context.
We write $[n]=\{1,\dots,n\}$ for any positive integer $n$.
For a vector $v\in\F_q^n$ with $\F_q$ the finite field of order $q$, we write $\supp(v)=\{i:v_i\neq 0\}$. We define the \emph{weight} of $v$ to be $\wt(v)=|\supp(v)|$, and the \emph{Hamming distance} between $v$ and $u$ is $d(u,v)=|\{i \in [n]:u_i \neq v_i\}| = |\supp(u-v)|$. 
For a collection of vectors $v_1,\dots,v_d\in\F_q$, we denote by $\Sp(v_1,\dots,v_d)$ the subspace of $\F_q^d$ spanned by $v_1,\dots,v_d$. 

We denote the \emph{binary entropy function} by $h_{2}(\cdot)$, and recall that
$h_{2}(x) = x\log_{2}\frac{1}{x} + (1-x)\log_{2}\frac{1}{1-x}$. For a positive integer $q$ we shorthand $\exp_q(x) = q^x$. Additionally, by default $\log$ is the base-2 logarithm.
We write $1_{A}$ for the indicator function of a set $A$, and
write $1_{\{E\}}$ for the indicator random variable that equals $1$ if and only if the event $E$ holds. 

\subsection{The Random Code Model}

For an alphabet size $q$, a \emph{plain random code} $\cC$ of block length $n$ and rate $R \in [0,1]$ is obtained by including each $x \in [q]^n$ in $\cC$ with probability $q^{-(1-R)n}$, and these choices are made independently for each $x$. (Note then that such a code has size $q^{Rn}$ in expectation, and by a Chernoff bound it follows that it has rate $R\pm o(1)$ with high probability.)

When $q$ is a prime power, a \emph{random linear code} $\cC$ of block length $n$ and rate $R \in [0,1]$ is obtained by sampling a uniformly random matrix $H \in \F_q^{(1-R)n \times n}$, and defining
\[
    \cC = \{x \in \F_q^n : Hx = 0\}.
\]
Note that $H$ is full-rank with probability $1-O(q^{-Rn})$, and therefore $\cC$ has rate $R$ with high probability. 

Given any subset $S \subseteq [q]^n$, if $\cC$ is a plain random code of rate $R$ then $\Pr[S \subseteq \cC] = q^{-(1-R)n|S|}$. When dealing with random linear codes, the probability that a set appears in the code is determined by the span of the set.

\begin{proposition} \label{prop:prob-contained-in-rlc}
    Let $\cC \leq \F_q^n$ be a random linear code of block length $n$ and rate $R \in [0,1]$, and let $x_1,\dots,x_b \in \F_q^n$. Then,
    \[
        \Pr[\forall i \in [b], ~x_i \in \cC] = q^{-(1-R)n\dim(\Sp(v_1,\dots,v_b))}.
    \]
\end{proposition}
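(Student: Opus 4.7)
The plan is to unpack the definition of the random linear code via its parity-check matrix and reduce the event to a linear-algebraic condition. Specifically, by definition $\cC = \{x \in \F_q^n : Hx = 0\}$ for a uniformly random $H \in \F_q^{(1-R)n \times n}$. Let $M \in \F_q^{n \times b}$ be the matrix with columns $x_1, \dots, x_b$, and let $d = \dim(\Sp(x_1, \dots, x_b))$, which is the rank of $M$. Then the event $\{x_i \in \cC \text{ for all } i \in [b]\}$ is exactly the event $\{HM = 0\}$.

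Next, I would exploit the independence of the rows of $H$. Since $H$ is a uniformly random matrix, its rows $h^{(1)}, \dots, h^{((1-R)n)}$ are i.i.d.\ uniform vectors in $\F_q^n$. The constraint $HM = 0$ decouples into $(1-R)n$ independent constraints, one per row: namely, $h^{(j)} M = 0$ for each $j$. For a single uniformly random row $h \in \F_q^n$, the condition $hM = 0$ says exactly that $h$ lies in the annihilator (orthogonal complement under the standard bilinear form) of the column span of $M$. That column span has dimension $d$, so its annihilator is a subspace of $\F_q^n$ of dimension $n - d$, containing $q^{n-d}$ vectors. Hence $\Pr[hM = 0] = q^{n-d}/q^n = q^{-d}$.

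Multiplying over the $(1-R)n$ independent rows gives
\[
    \Pr[\forall i \in [b],\ x_i \in \cC] \;=\; \Pr[HM = 0] \;=\; \bigl(q^{-d}\bigr)^{(1-R)n} \;=\; q^{-(1-R)n \cdot \dim(\Sp(x_1,\dots,x_b))},
\]
which is the claimed identity. There is no genuine obstacle here; the only thing to be careful about is not conflating ``rate of $\cC$'' with the number of rows of $H$: the proposition concerns the distribution induced by the uniform $H$ (regardless of whether $H$ happens to be full rank, which would only affect the realized dimension of $\cC$, not the probability computation above).
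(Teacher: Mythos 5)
Your argument is correct: reducing the event to $HM=0$, using the i.i.d.\ uniform rows of $H$, and computing $\Pr[hM=0]=q^{-d}$ via the annihilator of the column span (which has dimension $n-d$ since the standard bilinear form on $\F_q^n$ is nondegenerate) is exactly the standard proof of this fact, which the paper states without proof. Your closing remark is also the right one to make — the identity holds for the distribution induced by a uniform $H$ whether or not $H$ is full rank — and note only that the $v_i$'s in the paper's statement are a typo for the $x_i$'s, as you implicitly assumed.
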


\subsection{List-Recovery Notions}

This section collects the basic notions of list-recovery we study. 

\paragraph{List-recovery from erasures.} We begin with the relevant definition. 

\begin{definition}[list recovery from erasures]
Let $\mathcal{C} \subseteq [q]^n$ be a $q$-ary code of block-length $n$. For an erasure radius $\alpha \in [0,1)$ and input list size $1 \le \ell \le q$, we say that $\mathcal{C}$ is 
$(\alpha,\ell,L)$\emph{-list-recoverable from erasures} if for every $T_1,\ldots,T_n \subseteq [q]$ such that $|T_i| \le \ell$ for at least $(1-\alpha)n$ of the $i$-s and $T_i = [q]$ for the remaining, it holds that
    \[
        |\cC \cap (T_1 \times \cdots \times T_n)| \leq L.
    \]
    That is, in any combinatorial rectangle of which at least $(1-\alpha)n$ of its side-lengths are at most $\ell$ (and the remainder can be as large as $q$), there are at most $L$ codewords. 
\end{definition}

We will also consider list-recovery from errors. The concept of a list-recovery ball -- which generalizes that of a Hamming ball -- will be useful.

\begin{definition}[list-recovery ball] \label{def:list-rec-ball}
    Let $q \in \mathbb N$ and let $T_1,\dots,T_n \subseteq [q]$. The \emph{list-recovery ball of radius $\rho$ centered at $T_1 \times \cdots \times T_n$} is
    \[
        B_\rho(T_1\times \cdots \times T_n) = \{x \in [q]^n:d(x,T_1 \times \cdots \times T_n) \leq \rho n\}.
    \]
    Above, we have extended the Hamming metric by setting 
    \[
        d(x,T_1 \times \cdots \times T_n)=|\{i\in[n]:x_i \notin T_i\}|.
    \]
\end{definition}

We now state the relevant capacity theorem for list-recovery from erasures. The proofs of the two implications are standard: the \emph{possibility} result follows from analyzing the performance of a plain random code, while the \emph{impossibility} result follows from a counting argument. 

\begin{theorem} [list-recovery from erasures capacity]\label{thm:ErasuresCap}
    Let $1 \leq \ell \leq q$ and let $\alpha \in [0,1)$. Fix $\eps>0$. For $n \in \mathbb N$ large enough, the following hold:
    \begin{itemize}
        \item There exists a code $\cC \subseteq [q]^n$ of rate $1-(1-\alpha)\log_q\ell-\alpha-\eps$ which is $(\alpha,\ell,\lceil\ell/\eps\rceil)$-list-recoverable from erasures.
        
        \item For any code $\cC \subseteq [q]^n$ of rate $1-(1-\alpha)\log_q\ell -\alpha+\eps$, there exist $T_1,\dots,T_n \subseteq [q]$ with $|T_i| \leq \ell$ for at least $(1-\alpha)n$ values of $i \in [n]$ such that $|\cC \cap (T_1 \times \cdots \times T_n)| \geq q^{\eps n - o(n)}$. 
    \end{itemize}
\end{theorem}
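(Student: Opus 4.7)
The statement splits into two independent assertions that I would prove by standard but complementary probabilistic arguments.

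For the achievability direction, my plan is to use a plain random code $\cC$ at rate $R = 1-(1-\alpha)\log_q\ell - \alpha - \eps$, i.e., include each $x \in [q]^n$ in $\cC$ independently with probability $p = q^{-(1-R)n}$. A Chernoff bound on $|\cC|$ establishes the rate up to a $1\pm o(1)$ factor. Now fix a candidate bad rectangle $T = T_1 \times \cdots \times T_n$ with at least $(1-\alpha)n$ coordinates satisfying $|T_i| \leq \ell$ and the remaining coordinates equal to $[q]$. Its volume satisfies $V \leq \ell^{(1-\alpha)n}q^{\alpha n}$, and a short calculation using the choice of $R$ yields $\mu := Vp \leq q^{-\eps n}$. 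A first-moment bound on $(L+1)$-subsets of $\cC$ contained in $T$ gives
\[
    \Pr\bigl[|\cC \cap T| \geq L+1\bigr] \;\leq\; \binom{V}{L+1}p^{L+1} \;\leq\; \frac{\mu^{L+1}}{(L+1)!} \;\leq\; \frac{q^{-\eps n(L+1)}}{(L+1)!}.
\]

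I would then union bound over all admissible rectangles. There are at most $\binom{n}{\alpha n}\binom{q}{\ell}^{(1-\alpha)n} = q^{O(n)}$ of them. Choosing $L = \lceil \ell/\eps \rceil$ gives $\eps(L+1) \geq \ell + \eps$, so the $-\eps n(L+1)$ term in the exponent is large enough (for $n$ sufficiently large relative to $\ell$, $\alpha$, $q$) that the union bound is $o(1)$, and thus a random code is $(\alpha,\ell,\lceil \ell/\eps \rceil)$-list-recoverable from erasures with probability tending to $1$.

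For the impossibility direction, I would use a one-line averaging argument. Fix any code $\cC$ of rate $R' = 1-(1-\alpha)\log_q\ell - \alpha + \eps$, designate any subset $P \subseteq [n]$ of size $(1-\alpha)n$ as the constrained coordinates, set $T_i = [q]$ for $i \notin P$, and for $i \in P$ sample $T_i$ independently and uniformly from the size-$\ell$ subsets of $[q]$. For any fixed codeword $c \in \cC$, $\Pr_T[c \in T_1\times\cdots\times T_n] = (\ell/q)^{(1-\alpha)n}$, so
\[
    \E_T\bigl[|\cC \cap (T_1\times\cdots\times T_n)|\bigr] \;=\; |\cC| \cdot (\ell/q)^{(1-\alpha)n} \;=\; q^{R'n}\,\ell^{(1-\alpha)n}\,q^{-(1-\alpha)n} \;=\; q^{\eps n}.
\]
Some outcome therefore achieves $|\cC \cap T| \geq q^{\eps n}$, which comfortably implies the claimed $q^{\eps n - o(n)}$ lower bound.

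The main obstacle is purely in the achievability direction: matching the exact list size $\lceil\ell/\eps\rceil$ (rather than a looser $O_{\alpha,\ell,q}(\ell/\eps)$) requires balancing the number of rectangles (of order $q^{\ell(1-\alpha)n}$) against the per-rectangle tail (of order $q^{-\eps(L+1)n}$); one may need either a sharper Chernoff-style tail or to absorb the lower-order combinatorial overhead into the $n$-large-enough regime. The impossibility direction, in contrast, is tight and requires no such care.
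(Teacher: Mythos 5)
The paper itself offers no proof of this theorem (it declares both directions standard), so I am judging your argument against the standard ones. Your converse is correct: the averaging argument over random size-$\ell$ sets on a fixed set of $(1-\alpha)n$ constrained coordinates is exactly the intended counting argument, and it even gives $q^{\eps n}$ rather than $q^{\eps n - o(n)}$. The gap is in the achievability direction, and it is genuine as you half-suspect. Your union bound multiplies the per-rectangle failure probability, at most $q^{-\eps n(L+1)}/(L+1)!\le q^{-(\ell+\eps)n}$, by the number of admissible rectangles, which is of order $\binom{n}{\alpha n}\binom{q}{\ell}^{(1-\alpha)n}$, i.e.\ $\exp_q\left(n\left(\tfrac{h_2(\alpha)}{\log q}+(1-\alpha)\log_q\binom{q}{\ell}\right)\right)$ up to lower-order terms; this exponent can exceed $\ell+\eps$. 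Concretely, for $q=2$, $\ell=1$, $\alpha=1/2$ the rectangle count is $2^{(3/2)n-o(n)}$ while $\eps(L+1)\approx 1+\eps$ for $L=\lceil\ell/\eps\rceil$, so your bound on the failure probability is of order $2^{(1/2-\eps)n}\to\infty$. Crucially, this cannot be ``absorbed into the $n$-large-enough regime'': the overhead is exponential in $n$, and the per-rectangle tail is essentially tight (Poisson-type), so no sharper Chernoff bound rescues the union bound. Your argument as written only yields $L\ge\frac{1}{\eps}\left(\frac{h_2(\alpha)}{\log q}+(1-\alpha)\log_q\binom{q}{\ell}\right)$, which is $O_{\alpha,\ell,q}(\ell/\eps)$ but not the stated $\lceil\ell/\eps\rceil$; indeed it is unclear whether the plain Bernoulli random code attains $\lceil\ell/\eps\rceil$ at all in such regimes.

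The standard fix is to charge bad events against the code size rather than against $1$. Sample $M\approx q^{Rn}$ codewords i.i.d.\ uniformly and call an $(L+1)$-subset bad if it takes at most $\ell$ distinct values in at least $(1-\alpha)n$ coordinates (equivalent to lying in some admissible rectangle, so one never unions over rectangles at all). The expected number of bad subsets is at most $\binom{M}{L+1}\binom{n}{\alpha n}\left(\binom{q}{\ell}(\ell/q)^{L+1}\right)^{(1-\alpha)n}$; requiring this to be at most $M/2$ and expurgating one codeword per bad subset leaves a code of essentially the same rate with no bad $(L+1)$-subset. The resulting condition is $L\eps\gtrsim \frac{h_2(\alpha)}{\log q}+(1-\alpha)\log_q\frac{\ell\binom{q}{\ell}}{q}$, and since $h_2(\alpha)\le\log q$ and $\ell\binom{q}{\ell}/q\le q^{\ell-1}$, the right-hand side is at most $\ell$, so $L=\lceil\ell/\eps\rceil$ suffices for large $n$. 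The point of the comparison is that the rate term enters $L$ times instead of $L+1$ times, which is exactly the extra $q^{(1-\alpha)(1-\log_q\ell)n}$ of slack your union bound is missing; alternatively, if you are content with a constant-factor-worse list size $O_{\alpha,\ell,q}(\ell/\eps)$, your proof goes through verbatim.
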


We therefore say that the \emph{capacity} for $(\alpha,\ell,L)$-list-recovery from erasures is $1-\alpha-(1-\alpha)\log_q\ell$. We will study what happens for codes of rate $1-(1-\alpha)\log_q\ell-\alpha-\eps$ for some $\eps>0$, and determine the value of their output list-size $L$. We will be focused on the case where $q$ and is held constant, and the gap-to-capacity $\eps$ tends to $0$. 

\paragraph{List-recovery from errors.} We now define what it means for a code to be list-recoverable from errors. 

\begin{definition} [list recovery from errors]
    Let $\mathcal{C} \subseteq [q]^n$ be a $q$-ary code of block-length $n$. For 
    a decoding radius $\rho \in (0,1-\ell/q)$ and
    input list size $1 \leq \ell \leq q$, we say that $\mathcal{C}$ is  $(\rho,\ell,L)$\emph{-list-recoverable from errors} if for every $T_1,\ldots,T_n \subseteq [q]$ such that $|T_i| \le \ell$ for all $i\in[n]$, it holds that
    \[
        |\cC \cap B_\rho(T_1\times\cdots\times T_n)| \leq L.
    \]
    That is, every list-recovery ball of radius $\rho$ with side-lengths $\leq \ell$ contains at most $L$ codewords. 
\end{definition}

We will need an estimate on the size of list-recovery balls. It makes use of the \emph{$(q,\ell)$-entropy function}, defined as follows:
\begin{align}
    h_{q,\ell}(x) =x\log_q \frac{q-\ell}{x} + (1-x)\log_q\frac{q-\ell}{1-x}. \label{eq:ql-entropy-def}
\end{align}
An operational interpretation of this quantity is as the base-$q$ entropy of a random variable which, with probability $1-x$, samples a uniformly random element from a set of size $\ell$, and with probability $x$ samples a uniformly random element from the complement. Additionally, so long as $0<x < 1-\ell/q$ it holds that $0 < h_{q,\ell}(x) < 1$. Note that if $\ell=1$ one recovers the $q$-entropy function, which we denote as $h_q$ (i.e., if $\ell$ is omitted from the subscript, then it is by default $1$). 

We now state the relevant estimate. 

\begin{proposition}[{\cite[Proposition~2.4.11]{Res20}}] \label{prop:estimate}
    Let $1 \leq \ell \leq q$ be integers and $\rho \in (0,1-\ell/q)$. Let $T_1,\dots,T_n \subseteq [q]$ with $|T_i| = \ell$ for all $i \in [n]$. Then, 
    \[
        \frac{q^{n h_{q,\ell}(\rho)}}{\sqrt{2n}} \leq |B_\rho(T_1 \times \cdots \times T_n)| \leq q^{nh_{q,\ell}(\rho)}.
    \]
\end{proposition}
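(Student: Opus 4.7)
The plan is to reduce this to a direct combinatorial count, since the event ``$x \in B_\rho(T_1 \times \cdots \times T_n)$'' decouples coordinate-wise. For each index $i$, define $e_i(x) = 1_{\{x_i \notin T_i\}}$; since $|T_i| = \ell$ there are exactly $\ell$ choices of $x_i$ making $e_i(x) = 0$ and exactly $q-\ell$ choices making $e_i(x) = 1$. Writing $S(x) = \sum_i e_i(x)$, the definition of $B_\rho$ says $x$ lies in the ball iff $S(x) \leq \rho n$, and so
\[
    |B_\rho(T_1 \times \cdots \times T_n)| = \sum_{k=0}^{\lfloor \rho n\rfloor} \binom{n}{k}(q-\ell)^k \ell^{n-k}.
\]

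For the upper bound I would introduce an artificial binomial distribution with success probability $\rho$ and write
\[
    \binom{n}{k}(q-\ell)^k \ell^{n-k} = \binom{n}{k}\rho^k(1-\rho)^{n-k} \cdot \left(\frac{q-\ell}{\rho}\right)^k \left(\frac{\ell}{1-\rho}\right)^{n-k}.
\]
The hypothesis $\rho < 1-\ell/q$ is exactly what is needed to guarantee $\frac{q-\ell}{\rho} > \frac{\ell}{1-\rho}$, so the second factor is increasing in $k$ and is therefore maximized over $k \leq \rho n$ at $k = \rho n$, where it equals $q^{n h_{q,\ell}(\rho)}$ by the definition of $h_{q,\ell}$. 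Factoring this out and bounding the remaining sum $\sum_{k \leq \rho n}\binom{n}{k}\rho^k(1-\rho)^{n-k} \leq 1$ yields $|B_\rho| \leq q^{n h_{q,\ell}(\rho)}$.

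For the lower bound I would keep only the single term $k = \lfloor \rho n\rfloor$ and invoke the standard Stirling-type bound $\binom{n}{k} \geq \frac{1}{\sqrt{2n}} (k/n)^{-k}((n-k)/n)^{-(n-k)}$, which gives $\binom{n}{\lfloor \rho n\rfloor} \geq \frac{1}{\sqrt{2n}} \rho^{-\rho n}(1-\rho)^{-(1-\rho)n}$ up to harmless rounding effects. Combining with the factor $(q-\ell)^{\lfloor \rho n\rfloor}\ell^{n - \lfloor \rho n\rfloor}$ reassembles the quantity $((q-\ell)/\rho)^{\rho n}(\ell/(1-\rho))^{(1-\rho)n} = q^{n h_{q,\ell}(\rho)}$, producing the claimed $q^{n h_{q,\ell}(\rho)}/\sqrt{2n}$. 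The only mildly delicate point is carrying the integer floor $\lfloor \rho n\rfloor$ through the exponents; since shifting the exponent by one unit costs only an $O(1)$ multiplicative factor that is absorbed into the $\sqrt{2n}$, there is no real obstacle, just bookkeeping.
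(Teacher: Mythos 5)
Your argument is essentially the intended one: the paper itself does not reprove this estimate but cites~\cite{Res20} and remarks that the $1/\sqrt{2n}$ factor comes from the Wozencraft--Reiffen binomial-coefficient bound, which is exactly the Stirling-type inequality you invoke; likewise your upper bound via the tilted binomial sum is the standard argument and is correct, with $\rho<1-\ell/q$ being precisely the condition that makes $\frac{(q-\ell)(1-\rho)}{\ell\rho}>1$ so the tilting factor is maximized at $k=\rho n$. (Note you are using the intro's form of $h_{q,\ell}$, i.e.\ $\rho\log_q\frac{q-\ell}{\rho}+(1-\rho)\log_q\frac{\ell}{1-\rho}$, which is the correct one; the display in \cref{eq:ql-entropy-def} has a typo in the second numerator.)

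The one step that is not mere bookkeeping is the floor in the lower bound. Setting $\rho'=\lfloor\rho n\rfloor/n$, your single-term estimate yields $|B_\rho|\geq q^{nh_{q,\ell}(\rho')}/\sqrt{2n}$, and since $h_{q,\ell}$ is increasing on $(0,1-\ell/q)$ this falls short of the target by the multiplicative factor $q^{-n\left(h_{q,\ell}(\rho)-h_{q,\ell}(\rho')\right)}$. This factor is $O(1)$, but it is a constant \emph{less} than $1$ (roughly as small as $\frac{\ell\rho}{(q-\ell)(1-\rho)}$, and it does not tend to $1$ as $n\to\infty$), so it cannot be ``absorbed into the $\sqrt{2n}$'' if one insists on the literal constant in the statement; indeed, for small $n$ and non-integer $\rho n$ the stated inequality can even fail (e.g.\ $n=1$, $q=2$, $\ell=1$, $\rho=0.49$ gives $|B_\rho|=1$ but $q^{h_{q,\ell}(\rho)}/\sqrt{2}>1$). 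The clean resolutions are either to prove the lower bound under the convention that $\rho n$ is an integer --- which is how the paper treats the analogous rounding issue in \cref{prop:prob-bin-hits-expectation}, where $pn$ is assumed integral --- or to state it with $\rho$ replaced by $\lfloor\rho n\rfloor/n$, which is all that is needed where the proposition is used (in \cref{lem:mixing-list-rec-ball}). With that caveat made explicit, your proof is complete and matches the intended route.
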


\begin{remark}
    \em
    In \cite{Res20} the lower bound is just given as $q^{nh_{q,\ell}(\rho)-o(n)}$. Using an estimate of Wozencraft and Reiffen~\cite{WR61} (which we use below in the proof of \Cref{prop:prob-bin-hits-expectation}), it follows that one can take the $q^{-o(n)}$ term to be $\frac{1}{\sqrt{2n}}$. 
\end{remark}

This estimate drives the following capacity theorem.

\begin{theorem} [list-recovery from errors capacity]
    Let $1 \leq \ell \leq q$ and let $\rho \in (0,1-\ell/q)$. Fix $\eps>0$. For $n \in \mathbb N$ large enough, the following hold:
    \begin{itemize}
        \item There exists a code $\cC \subseteq [q]^n$ of rate $1-h_{q,\ell}(\rho)-\eps$ which is $(\rho,\ell,\lceil\ell/\eps\rceil)$-list-recoverable from errors.
        \item For any code $\cC \subseteq [q]^n$ of rate $1-h_{q,\ell}(\rho)+\eps$, there exists $T_1,\dots,T_n \subseteq [q]$ with $|T_i| \leq \ell$ for all $i \in [n]$ such that $|\cC \cap B_\rho(T_1 \times \cdots \times T_n)| \geq q^{\eps n - o(n)}$. 
    \end{itemize}
\end{theorem}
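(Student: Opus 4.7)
Both directions are standard capacity arguments leveraging \Cref{prop:estimate}. For the \emph{possibility} direction, I would use the probabilistic method on a plain random code $\cC$ of rate $R = 1 - h_{q,\ell}(\rho) - \eps$. Fix input lists $T_1,\dots,T_n \subseteq [q]$ with $|T_i| \le \ell$; padding each $T_i$ up to size exactly $\ell$ only enlarges the ball, so I would reduce to the case $|T_i| = \ell$ and union-bound over the at most $\binom{q}{\ell}^n \le q^{\ell n}$ tuples. For each fixed choice, \Cref{prop:estimate} gives $|B_\rho(T_1 \times \cdots \times T_n)| \le q^{n h_{q,\ell}(\rho)}$, and any fixed $(L{+}1)$-tuple of elements of this ball is contained in $\cC$ with probability $q^{-(1-R)n(L+1)}$. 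Hence
\[
    \Pr\bigl[|\cC \cap B_\rho(T_1\times\cdots\times T_n)| \ge L+1\bigr] \le \binom{|B_\rho|}{L+1} q^{-(1-R)n(L+1)} \le q^{-\eps n(L+1)}.
\]
Union-bounding over input tuples gives total failure probability at most $q^{n(\ell - \eps(L+1))}$, which is strictly less than $1$ whenever $L + 1 > \ell/\eps$; taking $L = \lceil \ell/\eps \rceil$ suffices.

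For the \emph{impossibility} direction I would use an averaging argument. Given a code $\cC$ with $|\cC| \ge q^{Rn}$ for $R = 1 - h_{q,\ell}(\rho) + \eps$, sample $T_1,\dots,T_n$ independently and uniformly from the size-$\ell$ subsets of $[q]$. For any fixed codeword $c$, the events $\{c_i \in T_i\}_{i \in [n]}$ are mutually independent and each has probability $\ell/q$, so
\[
    \Pr[c \in B_\rho(T_1\times\cdots\times T_n)] = \Pr[\mathrm{Bin}(n,\ell/q) \ge (1-\rho)n].
\]
Since $\rho < 1 - \ell/q$, the threshold $(1-\rho)n$ sits strictly above the binomial mean, and a Stirling estimate applied to the boundary term $k = \lceil(1-\rho)n\rceil$ alone, combined with the algebraic identity $h_q(\rho) + (1-\rho)\log_q \ell + \rho \log_q(q-\ell) = h_{q,\ell}(\rho)$, yields a lower bound of the form $q^{-n(1-h_{q,\ell}(\rho)) - o(n)}$. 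Summing over $c \in \cC$ and applying linearity of expectation,
\[
    \E\bigl[|\cC \cap B_\rho(T_1\times\cdots\times T_n)|\bigr] \ge q^{Rn} \cdot q^{-n(1-h_{q,\ell}(\rho)) - o(n)} = q^{\eps n - o(n)},
\]
and averaging produces a deterministic tuple attaining this lower bound.

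Neither direction is conceptually deep; the only mildly delicate point is confirming the Stirling-based lower bound on the binomial tail in the impossibility direction, with the $o(n)$ loss matching that of \Cref{prop:estimate}. This amounts to the Wozencraft--Reiffen style estimate already referenced in the paper together with the algebraic identity above, so I expect it to be essentially mechanical. The combinatorial skeleton of both directions carries over directly from the plain list-decoding ($\ell=1$) capacity proof, with the function $h_q$ replaced throughout by $h_{q,\ell}$.
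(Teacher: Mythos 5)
Your proposal is correct and follows essentially the same standard route the paper has in mind (it states this capacity theorem without proof): a plain random code with a union bound over input lists for the possibility direction, and an averaging/counting argument via the volume of the list-recovery ball — note your binomial tail $\Pr[\mathrm{Bin}(n,\ell/q)\ge(1-\rho)n]$ is exactly $|B_\rho(T_1\times\cdots\times T_n)|/q^n$, so this is the same computation as invoking \cref{prop:estimate} directly. Two cosmetic fixes: the identity you invoke should read $\frac{h_2(\rho)}{\log q}+(1-\rho)\log_q\ell+\rho\log_q(q-\ell)=h_{q,\ell}(\rho)$ (under the paper's convention $h_q:=h_{q,1}$ your version is off by $\rho\log_q(q-1)$), and you should note the routine adjustment for the fact that the plain random code only has rate $R\pm o(1)$ with high probability.
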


Thus, we will concern ourselves with codes of rate $1-h_{q,\ell}(\rho)-\eps$, and determine the output list size $L$ for $(\rho,\ell,L)$-list-recovery from errors. And, as in the list-recovery from erasures case, we will hold $\rho,\ell,q$ constants and consider the asymptotic behaviour of $L$ as the gap-to-capacity $\eps \to 0$.

We will also need the following lower bound on the difference $h_{q,\ell}(\rho)-h_{q,\ell}(\rho-\eta)$.
\begin{claim}\label{lem:bound-diff-entropy}
    For any integers $q>0$ and $1\leq \ell\leq q$, any $\rho\in(0,1-\ell/q]$, and any $\eta\in[0,\rho]$, we have
    \begin{equation*}
        h_{q,\ell}(\rho)-h_{q,\ell}(\rho-\eta) \geq \eta \log_q\left(\frac{(q-\ell)(1-\rho)}{\ell \cdot \rho}\right)\geq 0.
    \end{equation*}
\end{claim}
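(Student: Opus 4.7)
The plan is to recognize that the right-hand side $\eta \log_q \frac{(q-\ell)(1-\rho)}{\ell \rho}$ is exactly $\eta \cdot h_{q,\ell}'(\rho)$, i.e., $\eta$ times the derivative at the \emph{right} endpoint of the interval $[\rho-\eta,\rho]$. So the claim is asking for a one-sided mean value bound, and the right tool is the fundamental theorem of calculus combined with the monotonicity of $h_{q,\ell}'$.

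First I would compute the derivative of $h_{q,\ell}$ directly. Writing $h_{q,\ell}(x) = x\log_q(q-\ell)-x\log_q x+(1-x)\log_q \ell-(1-x)\log_q(1-x)$ and differentiating term-by-term (the $\frac{1}{\ln q}$ terms coming from $\frac{d}{dx}[x\log_q x]$ and $\frac{d}{dx}[(1-x)\log_q(1-x)]$ cancel), I would obtain
\[
    h_{q,\ell}'(x) = \log_q \frac{(q-\ell)(1-x)}{\ell\, x}.
\]
Since $x \mapsto (1-x)/x$ is strictly decreasing on $(0,1)$, so is $h_{q,\ell}'$. Hence for every $t\in[\rho-\eta,\rho]$ we have $h_{q,\ell}'(t)\geq h_{q,\ell}'(\rho)$, and integrating yields
\[
    h_{q,\ell}(\rho)-h_{q,\ell}(\rho-\eta) = \int_{\rho-\eta}^{\rho} h_{q,\ell}'(t)\,dt \;\geq\; \eta\cdot h_{q,\ell}'(\rho) \;=\; \eta \log_q \frac{(q-\ell)(1-\rho)}{\ell\,\rho},
\]
which is the first inequality.

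For the second inequality, I would simply verify that $\log_q \frac{(q-\ell)(1-\rho)}{\ell\,\rho}\geq 0$ under the stated hypothesis. This amounts to $(q-\ell)(1-\rho)\geq \ell\rho$, which after rearranging becomes $q(1-\rho)\geq \ell$, i.e., $\rho\leq 1-\ell/q$; this is precisely the assumption, with equality exactly at the boundary $\rho=1-\ell/q$. There is no real obstacle here: the entire argument is a short one-variable calculus exercise, and the only thing to be careful about is confirming that the constant $\frac{1}{\ln q}$ terms drop out in the derivative, so that $h_{q,\ell}'$ ends up in the clean logarithmic form that matches the right-hand side of the claim.
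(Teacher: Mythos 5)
Your proof is correct and takes essentially the same route as the paper: the paper splits $h_{q,\ell}$ into $\tfrac{h_2}{\log q}$ plus a linear term and applies concavity of $h_2$ (i.e.\ $h_2(\rho)-h_2(\rho-\eta)\geq \eta\, h_2'(\rho)$), which is the same first-order bound you obtain by differentiating $h_{q,\ell}$ directly and using monotonicity of $h_{q,\ell}'(x)=\log_q\frac{(q-\ell)(1-x)}{\ell x}$. The nonnegativity step ($\rho\leq 1-\ell/q \iff (q-\ell)(1-\rho)\geq \ell\rho$) is also exactly the paper's argument.
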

\begin{proof}
    We can rewrite
    \begin{equation*}
        h_{q,\ell}(\rho)-h_{q,\ell}(\rho-\eta) = \frac{h_2(\rho)-h_2(\rho-\eta)}{\log q} + \eta \log_q\left(\frac{q-\ell}{\ell}\right).
    \end{equation*}
    Now, the desired lower bound follows if we show that
    \begin{equation*}
        h_2(\rho)-h_2(\rho-\eta) \geq \eta \log\left(\frac{1-\rho}{\rho}\right)
    \end{equation*}
    for all $\rho\in[0,1]$ and $\eta\in[0,\rho]$.
    To see this, first note that $h_2'(\rho)= \log\left(\frac{1-\rho}{\rho}\right)$.
    Since $h_2$ is concave in $[0,1]$, we have $h_2(\rho)-h_2(\rho-\eta) \geq \eta\cdot  h'_2(\rho)$, as desired.
    
    Finally, the rightmost lower bound ``$\geq 0$'' holds under the lemma's hypotheses since $\frac{(q-\ell)(1-\rho)}{\ell \cdot \rho}\geq 1$ if and only if $\rho\leq 1-\ell/q$.
\end{proof}

\subsection{Some Distributions and Concentration Bounds}

For a given finite set $S$ in some universe $U$, we write $X \sim S$ to denote that $X$ is distributed uniformly over the set $S$. 
For $p \in [0,1]$, we denote by $\mathrm{Ber}(p)$ the distribution of a Bernoulli random variable, which equals $1$ with probability $p$ and equals $0$ with probability $1-p$. By $\mathrm{Bin}(n,p)$, we denote a binomial random variable which is the sum of $n$ i.i.d.\ $\mathrm{Ber}(p)$ random variables. 

We now provide a tail bound on $\mathrm{Bin}(n,p)$ random variables. It makes use of the concept of a Kullback--Leibler divergence between two Bernoulli random variables. Specifically, for $x, y \in [0,1]$, define
\[
    D(x\|y) =x\ln\frac xy + (1-x)\ln \frac{1-x}{1-y}.
\]
\begin{proposition}\label{prop:2-ary-bin-tail}
    Let $X \sim \mathrm{Bin}(n,p)$ with $0 < p < 1$, and let $p < p' < 1$. Then
    \[
        \Pr[X \geq p' n] \leq e^{-nD(p'\|p)}.
    \]
\end{proposition}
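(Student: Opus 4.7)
The plan is to use the classical Chernoff bounding method. For any parameter $t > 0$, apply Markov's inequality to the non-negative random variable $e^{tX}$:
\[
    \Pr[X \geq p'n] = \Pr\!\bigl[e^{tX} \geq e^{tp'n}\bigr] \leq e^{-tp'n}\,\mathbb{E}\!\bigl[e^{tX}\bigr].
\]
Since $X$ is a sum of $n$ i.i.d.\ $\mathrm{Ber}(p)$ variables, its moment generating function factorizes as $\mathbb{E}[e^{tX}] = (1-p+pe^t)^n$, yielding the uniform bound
\[
    \Pr[X \geq p'n] \leq \Bigl(e^{-tp'}(1-p+pe^t)\Bigr)^n.
\]

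Next, I would optimize the right-hand side over $t > 0$. Differentiating $\ln g(t)$ with $g(t) := e^{-tp'}(1-p+pe^t)$ yields the optimizer $e^{t^\ast} = \frac{p'(1-p)}{p(1-p')}$, which is strictly greater than $1$ precisely because $p' > p$, so $t^\ast > 0$ is admissible. Substituting $t^\ast$ back, a short calculation simplifies $1-p+pe^{t^\ast}$ to $\frac{1-p}{1-p'}$ and hence $\ln g(t^\ast)$ exactly to
\[
    -\Bigl(p'\ln\tfrac{p'}{p} + (1-p')\ln\tfrac{1-p'}{1-p}\Bigr) = -D(p'\|p).
\]
Raising to the $n$-th power produces the claimed inequality.

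There is no real obstacle here — this is the textbook Chernoff--Hoeffding tail bound, and the only content is the identification of the minimizer $t^\ast$, which corresponds to the exponentially tilted measure under which $X/n$ has mean $p'$. One could alternatively invoke Hoeffding's original result or Sanov's theorem specialized to two-point distributions, but the one-line optimization is the cleanest route and keeps the proof self-contained.
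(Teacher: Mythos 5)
Your proof is correct: the exponential-moment (Chernoff) bound, the optimizer $e^{t^\ast}=\frac{p'(1-p)}{p(1-p')}$ (valid since $p'>p$), and the simplification of $\ln g(t^\ast)$ to $-D(p'\|p)$ all check out. The paper states this proposition without proof as the classical Chernoff--Hoeffding tail bound, and your argument is exactly the standard derivation it implicitly relies on, so there is nothing further to compare.
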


As a very simple corollary, we note that if we define for $q \in \N$
\[
    D_q(x\|y)=x\log_q\frac xy + (1-x)\log_q \frac{1-x}{1-y},
\]
then the following holds:
\begin{corollary}\label{cor:q-ary-bin-tail}
    Let $X \sim \mathrm{Bin}(n,p)$ with $0 < p < 1$, and let $p < p' < 1$. Then,
    \[
        \Pr[X \geq p'n] \leq q^{-nD_q(p'\|p)}.
    \]
\end{corollary}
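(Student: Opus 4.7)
The plan is to derive the base-$q$ tail bound directly from the natural-log version in \Cref{prop:2-ary-bin-tail} via a change of logarithm base. First I would observe the identity $D_q(x\|y) = D(x\|y)/\ln q$, which is immediate from the definitions: each $\ln$ appearing in $D(x\|y)$ becomes a $\log_q = \ln/\ln q$ in $D_q(x\|y)$, and the linear combination coefficients $x$ and $1-x$ are unchanged.

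Given this identity, I would simply write
\begin{equation*}
    \Pr[X \geq p'n] \;\leq\; e^{-nD(p'\|p)} \;=\; e^{-n\,(\ln q)\, D_q(p'\|p)} \;=\; q^{-n D_q(p'\|p)},
\end{equation*}
where the first inequality is \Cref{prop:2-ary-bin-tail} and the two equalities use the identity above together with $e^{\ln q} = q$. This concludes the proof.

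There is no real obstacle here: the corollary is purely a restatement of the standard Chernoff-type tail bound for binomials in a different logarithmic base, convenient because the rest of the paper works with $q$-ary entropies such as $h_{q,\ell}$ and we will want to compare tail probabilities to quantities like $q^{-\eps n}$. I would not bother expanding further; the only thing to verify carefully is the positivity of $D_q(p'\|p)$ when $p < p' < 1$, which follows from the positivity of $D(p'\|p)$ (a standard fact about KL divergence) combined with $\ln q > 0$ for $q \geq 2$.
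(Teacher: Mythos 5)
Your proof is correct and matches the paper's (implicit) argument: the corollary is stated as an immediate consequence of \Cref{prop:2-ary-bin-tail} via the base-change identity $D_q(x\|y) = D(x\|y)/\ln q$, which is exactly what you use.
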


We will make use of the following properties of KL divergence. 

\begin{claim} \label{claim:div-properties}
    Suppose $1>p_1 > p_2 > p_3>0$. Then, 
    \[
        D_q(p_1\|p_3) > D_q(p_1\|p_2) > 0.
    \]
\end{claim}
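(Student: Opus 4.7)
The plan is to fix $p_1$ and study the single-variable function $f(y) = D_q(p_1 \| y) = p_1 \log_q(p_1/y) + (1-p_1)\log_q((1-p_1)/(1-y))$ on the open interval $y \in (0,1)$, and show it is strictly decreasing on $(0, p_1)$ with $f(p_1) = 0$. Both inequalities in the claim will then fall out: since $0 < p_3 < p_2 < p_1 < 1$, monotonicity immediately gives $f(p_3) > f(p_2) > f(p_1) = 0$, i.e.\ $D_q(p_1\|p_3) > D_q(p_1\|p_2) > 0$.

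To execute this, I would differentiate $f$ with respect to $y$. A direct calculation gives
\begin{equation*}
    f'(y) = \frac{1}{\ln q}\left(-\frac{p_1}{y} + \frac{1-p_1}{1-y}\right) = \frac{y - p_1}{y(1-y)\ln q},
\end{equation*}
which is strictly negative for $y \in (0, p_1)$ since $y(1-y)\ln q > 0$ and $y - p_1 < 0$. Hence $f$ is strictly decreasing on $(0, p_1)$. Combined with the fact $f(p_1) = p_1 \log_q 1 + (1-p_1)\log_q 1 = 0$, we get $f(y) > 0$ for every $y \in (0, p_1)$, and $f(p_3) > f(p_2)$ whenever $0 < p_3 < p_2 < p_1$. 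This yields both stated inequalities.

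There is essentially no obstacle here: the argument is a textbook computation, and I expect at most one sentence of justification for each of the steps (derivative computation, sign analysis, value at $y = p_1$). The only minor care point is to observe that the hypotheses $p_1 < 1$ and $p_3 > 0$ ensure we never evaluate $f$ at the endpoints $0$ or $1$, so all logarithms are finite.
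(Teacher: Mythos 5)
Your proof is correct. The paper states this claim without proof (treating it as a standard fact about KL divergence), and your argument---fixing $p_1$, computing $\frac{d}{dy}D_q(p_1\|y) = \frac{y-p_1}{y(1-y)\ln q} < 0$ for $y \in (0,p_1)$ (with $\ln q > 0$ since $q \ge 2$), and concluding from strict monotonicity together with $D_q(p_1\|p_1)=0$---is exactly the natural justification being left implicit.
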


\begin{claim} \label{lem:pinsker}
    Let $p \in (0,1)$ and $0\leq \Delta < p$. Then, 
    \[
        D(p\|p-\Delta) \geq 2\Delta^2.
    \]
\end{claim}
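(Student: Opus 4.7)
The plan is a one-parameter interpolation between $q = p$ (where the divergence vanishes) and $q = p-\Delta$ (the value of interest). Concretely, define
\[
    \phi(s) = D(p \,\|\, p - \Delta s) - 2(\Delta s)^2, \qquad s \in [0,1],
\]
so that the claim $D(p\|p-\Delta) \geq 2\Delta^2$ is equivalent to $\phi(1) \geq 0$. Since $\phi(0) = 0$, it suffices to show that $\phi'(s) \geq 0$ on $[0,1]$, after which the fundamental theorem of calculus closes the argument.

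The heart of the proof is a short computation of $\phi'(s)$. Differentiating the two logarithmic terms in $D(p\|p-\Delta s)$ and placing them over a common denominator, the linear-in-$p$ pieces cancel neatly and one obtains
\[
    \phi'(s) = \Delta^2 s \left[\frac{1}{(p - \Delta s)\bigl(1 - (p - \Delta s)\bigr)} - 4 \right].
\]
Writing $y = p - \Delta s$, the hypothesis $0 \leq \Delta < p < 1$ guarantees $y \in (0,1)$ for every $s \in [0,1]$. The elementary bound $y(1-y) \leq 1/4$ (AM--GM, or inspection of the parabola) gives $\frac{1}{y(1-y)} \geq 4$, so the bracketed expression is non-negative. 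Combined with the prefactor $\Delta^2 s \geq 0$, we conclude $\phi'(s) \geq 0$ on $[0,1]$, hence $\phi(1) \geq 0$, which is the desired inequality.

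The main (and only) obstacle is the algebraic simplification producing the clean form of $\phi'(s)$; once the bracket appears as $\frac{1}{y(1-y)} - 4$, the rest of the argument is immediate. An alternative one might consider is a direct Taylor expansion of $D(p\|q)$ around $q = p$, but the second derivative $\frac{p}{q^2} + \frac{1-p}{(1-q)^2}$ is not uniformly bounded below by $4$ across all $(p,q)$, so naive Taylor with remainder does not close the gap without additional care. The interpolation above sidesteps this issue because the denominator $y(1-y)$ appears \emph{jointly} for both terms, rather than as two separate reciprocal-squared terms.
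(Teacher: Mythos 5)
Your proof is correct. The derivative computation checks out: with $y=p-\Delta s$, one has $\frac{d}{ds}D(p\|p-\Delta s)=\Delta\left(\frac{p}{y}-\frac{1-p}{1-y}\right)=\frac{\Delta^2 s}{y(1-y)}$, so $\phi'(s)=\Delta^2 s\left(\frac{1}{y(1-y)}-4\right)\geq 0$ by $y(1-y)\leq 1/4$, and the hypothesis $0\leq\Delta<p<1$ indeed keeps $y\in(0,1)$ throughout, so the integration from $\phi(0)=0$ is legitimate. For comparison: the paper states this claim without proof, treating it as the standard Bernoulli case of Pinsker's inequality, so there is no in-paper argument to match; your interpolation argument is exactly the usual textbook derivation of that fact (integrate the derivative of the divergence along the segment between the two parameters and bound $1/(y(1-y))$ below by $4$), and it is complete and self-contained. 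Your closing remark about the naive Taylor route is also apt — the integral form avoids needing a uniform lower bound on the second derivative at a single point.
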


Additionally, we need to lower bound the probability that a binomial random variable hits exactly its expectation. 

\begin{proposition} \label{prop:prob-bin-hits-expectation}
    Let $X \sim \mathrm{Bin}(n,p)$ with $0 < p < 1$ such that $pn$ is an integer.
    Then,
    \begin{equation*}
        \Pr[X=pn]\geq \frac{1}{\sqrt{2n}}.
    \end{equation*}
\end{proposition}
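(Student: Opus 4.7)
The plan is to apply Stirling's formula with explicit error bounds (Robbins' inequality) and exploit the fact that, since $pn$ is an integer, it is the mode of $\mathrm{Bin}(n,p)$---exactly where Stirling is sharpest. Setting $k = pn$, we have $\Pr[X=pn] = \binom{n}{k} p^k (1-p)^{n-k}$. Using $\sqrt{2\pi m}(m/e)^m \leq m! \leq \sqrt{2\pi m}(m/e)^m e^{1/(12m)}$ (as a lower bound on $n!$ and an upper bound on each of $k!$ and $(n-k)!$), I would derive
\[
    \binom{n}{k} \geq \frac{1}{\sqrt{2\pi n p(1-p)}} \cdot \frac{n^n}{k^k (n-k)^{n-k}} \cdot e^{-1/(12k) - 1/(12(n-k))}.
\]

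The key algebraic cancellation is that when $k = pn$ and $n - k = (1-p)n$, one has $\frac{n^n p^k (1-p)^{n-k}}{k^k (n-k)^{n-k}} = 1$. Multiplying the display above by $p^k(1-p)^{n-k}$ and using $p(1-p) \leq 1/4$ therefore gives
\[
    \Pr[X = pn] \geq \frac{e^{-1/(12k)-1/(12(n-k))}}{\sqrt{2\pi n p(1-p)}} \geq \sqrt{\frac{2}{\pi n}} \cdot e^{-1/(12k)-1/(12(n-k))}.
\]

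To finish, it suffices to show the right-hand side is at least $\frac{1}{\sqrt{2n}}$, equivalently $e^{-1/(12k) - 1/(12(n-k))} \geq \sqrt{\pi}/2 \approx 0.886$. When $k \geq 2$ and $n - k \geq 2$, the exponent in absolute value is at most $1/12$, and $e^{-1/12} \approx 0.920 > \sqrt{\pi}/2$, so the bound holds. The edge cases $k = 1$ and $k = n - 1$ correspond to $p = 1/n$ and $p = 1 - 1/n$ respectively; in both, $\Pr[X=k] = (1-1/n)^{n-1}$, which is well known to exceed $1/e$, and hence $1/\sqrt{2n}$, for $n \geq 4$, while the tiny remaining cases $n \in \{2,3\}$ are verified by direct calculation. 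The only real ``obstacle'' is this careful bookkeeping of the Stirling correction factors together with the small-$n$ case analysis at the boundary; there is no deeper difficulty, since the cancellation at $k = pn$ makes the Stirling bound essentially tight at the mode.
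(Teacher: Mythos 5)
Your proof is correct and follows essentially the same route as the paper: the paper writes $\Pr[X=pn]=\binom{n}{pn}2^{-nh_2(p)}$ and cites the Wozencraft--Reiffen bound $\binom{n}{pn}2^{-nh_2(p)}\geq \frac{1}{\sqrt{8p(1-p)n}}$, which your Stirling/Robbins computation at the mode simply reproves in a self-contained way. The only difference is bookkeeping: the cited bound is uniform in $k$, whereas your explicit error terms force the (correctly handled) edge cases $k\in\{1,n-1\}$ and $n\in\{2,3\}$.
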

\begin{proof}
    First, note that
    \begin{equation*}
        \Pr[X=pn] = \binom{n}{pn} p^{pn} (1-p)^{(1-p)n} = \binom{n}{pn} \cdot 2^{-n h_2(p)},
    \end{equation*}
    The desired statement follows by combining this observation with a bound of Wozencraft and Reiffen~\cite{WR61} (see~\cite[Lemma 17.5.1]{CT12}), which in particular gives that
    \begin{equation*}
        \binom{n}{pn} \cdot 2^{-n h_2(p)} \geq \frac{1}{\sqrt{8p(1-p)n}}\geq \frac{1}{\sqrt{2n}}. \qedhere
    \end{equation*}
\end{proof}

\subsection{Increasing Chains}

The following definition of an increasing chain was first introduced by Guruswami, H{\aa}stad, and Kopparty~\cite{GHK11}.
\begin{definition}[$c$-increasing chain]
    A sequence of vectors $v_1,\dots,v_{d}\subseteq \F_q$ is said to be a \emph{$c$-increasing chain of length $d$} if for all $j\in [d]$ we have
    \begin{equation*}
        \left|\supp(v_j)\setminus \left(\bigcup_{i=1}^{j-1}\supp(v_i)\right)\right|\geq c.
    \end{equation*}
\end{definition}

We require the following lemma on the existence of appropriately long increasing chains in any subset $S\subseteq \F_q$.
\begin{lemma}[\protect{\cite[Lemma 6.3]{GHK11}}]\label{lem:SS}
    For every prime power $q$, and all positive integers $c,\ell$ and $L \le q^\ell$, 
    the following holds. For every $S \subseteq \F_q^\ell$ with $|S|=L$, there is 
    $w \in \F_q^{\ell}$ such that $S+w$ has a $c$-increasing chain of length at least $\frac{1}{c}\log_{q}\frac{L}{2} - (1-\frac{1}{c})\log_{q}((q-1)\ell)$.
\end{lemma}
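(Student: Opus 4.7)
The plan is to attack this by induction on the ambient dimension $\ell$, peeling off $c$ coordinates per inductive step via a projection argument. For the base case, when $L \leq 2 \cdot ((q-1)\ell)^{c-1}$, the claimed bound on the chain length is non-positive and the empty chain trivially suffices, so we may assume $L > 2((q-1)\ell)^{c-1}$.

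For the inductive step, fix any set $U \subseteq [\ell]$ of size exactly $c$, and consider the coordinate projection $\pi_U : \F_q^\ell \to \F_q^{[\ell] \setminus U}$ dropping the $U$-coordinates. Since each fiber of $\pi_U$ has $q^c$ elements, $|\pi_U(S)| \geq L/q^c$. By the induction hypothesis applied to $\pi_U(S) \subseteq \F_q^{\ell-c}$, there is a shift $w' \in \F_q^{\ell-c}$ such that $\pi_U(S) + w'$ contains a $c$-chain $u_1, \ldots, u_{d'}$ with
\[
    d' \geq \frac{1}{c}\log_q\!\left(\frac{|\pi_U(S)|}{2((q-1)(\ell-c))^{c-1}}\right).
\]
Now choose any $s_0 \in S$ with $\pi_U(s_0) = -w'$, and define $w \in \F_q^\ell$ by $\pi_U(w) = w'$ with $w|_U$ chosen so that $v_1 := s_0 + w$ has $v_1|_U$ everywhere nonzero (there are $(q-1)^c \geq 1$ valid choices). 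Then $\supp(v_1) = U$. Lifting each $u_j$ to any preimage $v_{j+1} \in S+w$ with $\pi_U(v_{j+1}) = u_j$, I verify that $v_1, v_2, \ldots, v_{d'+1}$ is a $c$-chain in $S+w$: for each $j \geq 1$,
\[
    \bigl|\supp(v_{j+1}) \setminus (U \cup \supp(v_2) \cup \cdots \cup \supp(v_j))\bigr| = \bigl|\supp(u_j) \setminus \bigcup_{i<j}\supp(u_i)\bigr| \geq c,
\]
since the $U$-coordinates of $v_{j+1}$ are absorbed on the left by $U$ and the remaining-coordinate behaviour is precisely the projected chain. Combining $d = d'+1$ with $|\pi_U(S)| \geq L/q^c$ and $\ell - c \leq \ell$ yields $d \geq \frac{1}{c}\log_q(L/2) - (1 - \frac{1}{c})\log_q((q-1)\ell)$ after a short arithmetic simplification.

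The main obstacle I foresee is ensuring a valid choice of $s_0$, which requires $-w' \in \pi_U(S)$. The inductive hypothesis produces \emph{some} $w' \in \F_q^{\ell-c}$ making $\pi_U(S) + w'$ contain a long chain, but does not a priori guarantee $-w' \in \pi_U(S)$. I would address this by strengthening the inductive hypothesis to additionally require $0 \in \pi_U(S) + w'$ (equivalently $-w' \in \pi_U(S)$) and verify that the strengthened statement carries through the induction, by exploiting that at the inductive step the first chain element $v_1$ is freely constructible once any $s_0 \in S$ is fixed. Making this shift synchronization across the two levels of recursion fully rigorous is the key technical step of the argument.
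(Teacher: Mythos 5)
You should first note that the paper does not prove this lemma at all: it is imported as \cite[Lemma 6.3]{GHK11}, and the paper describes that original argument as a Ramsey-theoretic/counting one, so your dimension-peeling induction is in any case a different route. Much of what you wrote is fine as far as it goes: the base-case threshold $L\le 2((q-1)\ell)^{c-1}$ is computed correctly, the lifting verification (fresh coordinates of $v_{j+1}$ outside $U$ are exactly the fresh coordinates of $u_j$) is sound, and your accounting correctly shows that the anchor's $+1$ exactly offsets the factor $q^c$ lost in $|\pi_U(S)|\ge L/q^c$. But the argument has a genuine gap precisely where you flag it, and the repair you propose does not close it.

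The crux is that \emph{every} element of your chain is an anchor: the base case contributes only the empty chain, and since the $+1$ per level is exactly what compensates the projection loss, the fiber condition $-w'\in\pi_U(S)$ is needed at every level of the recursion, not just once. Your proposed strengthening (``additionally require $0\in\pi_U(S)+w'$'') would supply it, but the strengthened statement does not survive your own inductive step. The shift $w$ you construct has $\pi_U(w)=w'$ and $w|_U$ chosen so that $(s_0+w)|_U$ is nowhere zero; nothing places $-w$ in $S$. To force $0\in S+w$ you must either take $w=-s_0$, which makes $v_1=s_0+w=0$ and forfeits the $+1$, or find a second element $s'\in S$ in the same $\pi_U$-fiber as $s_0$ with $s'|_U$ differing from $s_0|_U$ in every coordinate of $U$ (then set $w|_U=-s'|_U$); such an $s'$ need not exist --- e.g.\ if $S$ has minimum Hamming distance greater than $c$, every $\pi_U$-fiber meets $S$ in at most one point. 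So at any given level you can either gain the anchor's $+1$ or preserve the $0$-membership that the level above needs, never both; consequently only the topmost level can ever contribute an anchor, and the recursion cannot accumulate length even if you weaken the strengthened bound by an additive constant (each lower level then still loses its $+1$). The ``shift synchronization'' you defer is therefore not a technical detail but essentially the entire content of the lemma, and it needs a genuinely different mechanism (as in the counting argument of \cite{GHK11}) rather than a strengthened induction of this form. (Separately, the induction also needs explicit handling of the degenerate levels $\ell-c\le 0$, where $\log_q((q-1)(\ell-c))$ is undefined, but that is minor.)
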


\subsection{Mixing Sets}

In our analysis we need to understand the probability that the sum of two independent uniformly random samples $X$ and $X'$ from a set $T\subseteq \F_q$ lands in a shifted set $T+\gamma$, for an arbitrary shift $\gamma\in\F_q$ (and in fact a more general question of that form). We begin with the necessary definitions. 

\begin{definition}[mixing over $\F_q$]
For a prime power $q$, and $\delta \ge 0$,
we say that $T \subseteq \F_q$ is $\delta$-mixing, if for any $\alpha,\beta,\gamma \in \F_q$, where $\alpha$ and $\beta$ are nonzero, it holds that
\[
\Pr_{X,X'}[\alpha X + \beta X' \in T+\gamma] \le q^{-\delta},
\]
where $X,X' \sim T$ are independent and uniformly distributed over $T$, and $T+\gamma = \set{t+\gamma : t \in T}$.
\end{definition}

\begin{definition}[mixing over $\F_q^n$]
For $n \in \mathbb{N}$,  a prime power $q$, and $\delta \ge 0$, we say that $T \subseteq \F^n_q$ is $\delta$-mixing, if for any nonzero $\alpha,\beta \in \F_q$, and any $z \in \F_q^n$, it holds that
\[
\Pr_{X,X'}[\alpha X + \beta X' \in T+z] \le q^{-\delta n},
\]
where $X,X' \sim T$ are independent and uniformly distributed over $T$.
\end{definition}

\begin{remark}
\em
Note that a set mixes nontrivially when $\delta > 0$, and moreover, we will want $\delta > 0$ to not depend on $n$. However, in the case of list recovery from erasures, where for some $i$-s, $S_i = \F_q$, the case of $\delta = 0$ will be useful towards bounding the \emph{expected} mixing of $S_1,\ldots,S_n$.
\end{remark}

The following connection then follows easily.
\begin{remark}\label{remark:mixing}
\em
Suppose that $T_1,\ldots,T_n \subseteq \F_q$, and each $T_i$ is $\delta_i$-mixing. Then, $T = T_1 \times \cdots \times T_n$ is $\delta$-mixing, for $\delta = \E_{i\sim [n]}[\delta_i]$. 
\end{remark}

\section{List-Recovery from Erasures over Prime Fields}\label{sec:erasures}

In this section we establish the list-recoverability of random linear codes over prime fields.
To achieve this, we must first understand the mixing properties of worst-case subsets of prime fields.

\subsection{Worst-Case Mixing of Subsets of Prime Fields}\label{sec:sumset}

Towards understanding mixing of subsets of prime fields, we leverage a general result of Lev~\cite{Lev01} which characterizes worst-case $T$-s, when $q$ is prime.
Before we introduce it, we set up some relevant notation.

For a set $T\subseteq \F_q$ we let $\widetilde{T}$ denote a ``centered interval'' of length $|T|$.
More precisely, $\widetilde{T}$ is the \emph{$\delta$-centered interval} associated with $T$ if $\widetilde{T}=[-\alpha,\alpha+\delta]\subseteq \F_q$ with $\alpha\in[0,\frac{q-1}{2}]$ and $\delta\in\{-1,0,1\}$ satisfying
$|\widetilde{T}|=2\alpha+1+\delta=|T|$.
Note that when $|T|$ is odd there is a unique centered interval $\widetilde T$ (because $\delta=0$ necessarily), but when $|T|$ is even there are two centered intervals, corresponding to $\delta=\pm 1$.

\begin{lemma}[\protect{\cite[Theorem 1]{Lev01}, adapted}]\label{lem:lev}
    Let $q\geq 3$ be prime and $A_1,\dots,A_k\subseteq \F_q$ be arbitrary sets with $\widetilde A_1,\dots,\widetilde A_k$ the associated $\delta_i$-centered intervals.
    Then, if $|\delta_1+\dots+\delta_k|\leq 1$, for any set $B\subseteq \F_q$ and some associated $\delta$-centered interval $\widetilde B$, we have
    \begin{equation*}
        \Pr_{X_i\sim A_i}[X_1+\cdots+X_k\in B] \leq \Pr_{\widetilde X_i\sim \widetilde A_i}[\widetilde X_1+\cdots+\widetilde X_k\in \widetilde B].
    \end{equation*}
\end{lemma}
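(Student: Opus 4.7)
The plan is to reduce this claim to Lev's original Theorem 1 from \cite{Lev01} by converting the probability into a count of additive representations. Writing $r(b; A_1,\dots,A_k) := |\{(a_1,\dots,a_k) \in A_1\times\cdots\times A_k : a_1+\cdots+a_k = b\}|$, I would first express
\[
    \Pr_{X_i \sim A_i}[X_1+\cdots+X_k \in B] = \frac{N_B(A_1,\dots,A_k)}{|A_1|\cdots|A_k|}, \qquad N_B(A_1,\dots,A_k) := \sum_{b\in B} r(b;A_1,\dots,A_k),
\]
and analogously for the tilded sets. Since a $\delta_i$-centered interval by definition has $|\widetilde A_i|=|A_i|$ (and similarly $|\widetilde B|=|B|$), the denominators match on the two sides, so the inequality reduces to the sumset statement $N_B(A_1,\dots,A_k) \leq N_{\widetilde B}(\widetilde A_1,\dots,\widetilde A_k)$ for some centered interval $\widetilde B$ of size $|B|$.

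Next I would invoke Lev's Theorem 1, whose content is exactly that replacing the $A_i$ by centered intervals can only increase the representation counts of the ``most-popular'' sums in a prime field, provided the asymmetries line up, i.e.\ $|\delta_1+\cdots+\delta_k|\leq 1$. Since the convolution $1_{\widetilde A_1}\ast\cdots\ast 1_{\widetilde A_k}$ is supported on a centered interval and is (essentially) unimodal and symmetric about $0$, the sum of its values over any size-$|B|$ subset is maximized by summing over a centered interval of that size. Summing Lev's pointwise bound over this maximizing set yields the desired inequality $N_B(A_1,\dots,A_k) \leq N_{\widetilde B}(\widetilde A_1,\dots,\widetilde A_k)$. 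Finally, dividing by $\prod_i|A_i|=\prod_i|\widetilde A_i|$ recovers the probability inequality.

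The main obstacle, and the place where the hypothesis on the $\delta_i$'s enters, is the careful selection of $\widetilde B$'s centering when $|B|$ is even. In that case there are two candidates corresponding to $\delta\in\{-1,+1\}$, and one must pick the $\delta$ whose sign agrees with that of $\delta_1+\cdots+\delta_k$ (with ties broken arbitrarily when the sum vanishes). The assumption $|\delta_1+\cdots+\delta_k|\leq 1$ ensures precisely that the combined sumset $\widetilde A_1+\cdots+\widetilde A_k$ is itself symmetric about a half-integer of absolute value at most $1/2$, so that a single centered interval $\widetilde B$ can be aligned with the high-density region of the convolution. Once this parity bookkeeping is done, Lev's pointwise bound integrates cleanly into the set-level statement; for odd $|B|$, $\widetilde B$ is unique and no choice is needed. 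Absent the hypothesis on the $\delta_i$'s, this alignment can fail and the inequality may be false, which is why the condition appears explicitly in the statement.
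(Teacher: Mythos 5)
First, note that the paper itself gives no proof of this lemma: it is imported (``adapted'') from Lev's Theorem~1, so the entire content of a proof here is the precise adaptation, and that is exactly where your argument has a gap. Your first step is fine: rewriting both sides as representation counts $N_B(A_1,\dots,A_k)=\sum_{b\in B}r(b;A_1,\dots,A_k)$ and observing that the denominators match because $|\widetilde A_i|=|A_i|$. The problem is the bridge to $N_B(A_1,\dots,A_k)\le N_{\widetilde B}(\widetilde A_1,\dots,\widetilde A_k)$. You describe Lev's theorem as a ``pointwise bound'' on the representation counts of the most popular sums and then propose to ``sum the pointwise bound over the maximizing set.'' That does not close the argument: a bound of the form $r(b;A_1,\dots,A_k)\le \max_c r(c;\widetilde A_1,\dots,\widetilde A_k)$ for each $b\in B$ only yields $N_B(A_1,\dots,A_k)\le |B|\cdot\max_c r(c;\widetilde A_1,\dots,\widetilde A_k)$, which is in general \emph{larger} than $N_{\widetilde B}(\widetilde A_1,\dots,\widetilde A_k)$, since the convolution $1_{\widetilde A_1}*\cdots*1_{\widetilde A_k}$ is not constant at its maximum on $\widetilde B$; so the chain of inequalities does not yield the lemma. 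What you actually need is a comparison of partial sums of decreasing rearrangements: the sum of the $|B|$ largest values of $1_{A_1}*\cdots*1_{A_k}$ is at most the corresponding sum for $1_{\widetilde A_1}*\cdots*1_{\widetilde A_k}$. That Pollard/Riesz-type rearrangement statement is precisely the nontrivial content being imported from Lev; your sketch neither states it nor derives it. The unimodality observation only handles the easy inner maximization $N_{B'}(\widetilde A_1,\dots,\widetilde A_k)\le N_{\widetilde B}(\widetilde A_1,\dots,\widetilde A_k)$ over sets $B'$ with $|B'|=|B|$, not the comparison between the original sets and the intervals.

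A cleaner way to carry out the adaptation is to fold the target set into the equation: $N_B(A_1,\dots,A_k)$ equals the number of solutions of $a_1+\cdots+a_k+a_{k+1}=0$ with $a_i\in A_i$ for $i\le k$ and $a_{k+1}\in -B$, so one can apply Lev's count comparison to the $k+1$ sets $A_1,\dots,A_k,-B$. Since negating a $\delta$-centered interval produces a $(-\delta)$-centered interval of the same size, the hypothesis $|\delta_1+\cdots+\delta_k|\le 1$ together with the freedom to choose the centering of $\widetilde B$ (this is exactly the ``some associated $\delta$-centered interval'' in the statement) lets you meet the parity condition for all $k+1$ sets; dividing by $|A_1|\cdots|A_k|$ then gives the probability form. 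Your closing discussion of why the $\delta_i$ condition is needed is the right intuition (and your even-$|B|$ bookkeeping is consistent with the statement's freedom in choosing $\widetilde B$), but as written the middle step of your argument does not follow from what you invoke.
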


We can use \cref{lem:lev} to prove the following.
\begin{lemma}\label{lem:mixing}
    Fix a prime $q\geq 3$. Let $T_1,T_2,T_3\subseteq \F_q$ be arbitrary sets of size $\ell>0$.
    Then,
    \begin{equation*}
    \Pr_{X_1\sim T_1,X_2\sim T_2}[X_1+X_2\in T_3] \leq \begin{cases}
        \frac{3}{4}+ \frac{1}{4\ell^2} +\frac{\max(0,3\ell-2q-1)\cdot(3\ell-2q+1)}{4\ell^2},& \textrm{ if $\ell$ is odd,}\\
        \frac{3}{4} + \frac{\max(0,3\ell-2q)^2}{4\ell^2},& \textrm{ if $\ell$ is even,}
    \end{cases}
    \end{equation*}
    and this is tight for all $\ell$.
    In particular:
    \begin{enumerate}
        \item \label{item:23}
        When $\ell\leq 2q/3$ we have
    \begin{equation*}
       \Pr_{X_1\sim T_1,X_2\sim T_2}[X_1+X_2\in T_3]\leq \begin{cases}
           \frac{3}{4}+\frac{1}{4\ell^2}, &\textrm{ if $\ell$ is odd,}\\
           \frac{3}{4},& \textrm{ if $\ell$ is even.}
       \end{cases} 
    \end{equation*}

    \item \label{item:above23}
    When $2q/3<\ell\leq q-1$ we have
    \begin{equation*}
        \Pr_{X_1\sim T_1,X_2\sim T_2}[X_1+X_2\in T_3]\leq \frac{q^2-3\ell(q-\ell)}{\ell^2}.
    \end{equation*}
    
    \end{enumerate}
\end{lemma}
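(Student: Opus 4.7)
The plan is to apply Lev's lemma (\cref{lem:lev}) to reduce to the case where $T_1,T_2,T_3$ are themselves centered intervals, then compute the resulting convolution directly, carefully accounting for wraparound modulo $q$. Setting $m=\lfloor \ell/2\rfloor$, for $\ell$ odd the unique centered interval is $\widetilde{T}_i=[-m,m]$ with $\delta_i=0$; for $\ell$ even I pick $\widetilde{T}_1=[-m,m-1]$ and $\widetilde{T}_2=[-(m-1),m]$ so that $\delta_1+\delta_2=0$. In either case the hypothesis $|\delta_1+\delta_2|\leq 1$ of Lev's lemma is satisfied, giving $\Pr[X_1+X_2\in T_3]\leq \Pr[\widetilde{X}_1+\widetilde{X}_2\in \widetilde{T}_3]$ for some centered interval $\widetilde{T}_3$ of size $\ell$, which I may choose to maximize the RHS.

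A direct count shows that in both parities, $\#\{(x_1,x_2)\in \widetilde{T}_1\times \widetilde{T}_2:x_1+x_2=s\}=\ell-|s|$ for integers $s\in[-(\ell-1),\ell-1]$; this distribution is symmetric and unimodal at $0$, so the optimal centered $\widetilde{T}_3$ contains $0$. When $\ell\leq 2q/3$ no wraparound modulo $q$ occurs, and summing $\sum_{s\in\widetilde{T}_3}(\ell-|s|)/\ell^2$ directly yields $\frac{3}{4}+\frac{1}{4\ell^2}$ for $\ell$ odd and $\frac{3}{4}$ for $\ell$ even, proving item 1.

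For $\ell>2q/3$, I would decompose the probability into the non-wrapped contribution plus the $k=\pm 1$ wraparound contributions from integer sums $s=s'\pm q$ with $s'\in \widetilde{T}_3$ (since $\ell-1<q$, at most one such wrap occurs per residue). Summing these arithmetic progressions yields extra terms of $(3\ell-2q-1)(3\ell-2q+1)/(4\ell^2)$ for $\ell$ odd and $(3\ell-2q)^2/(4\ell^2)$ for $\ell$ even, recovering the main formula. Item 2 then follows by algebraic simplification: in the odd case, the identity $(3\ell-2q-1)(3\ell-2q+1)=(3\ell-2q)^2-1$ absorbs the $\frac{1}{4\ell^2}$ term, so both parities collapse to $\frac{3}{4}+\frac{(3\ell-2q)^2}{4\ell^2}=\frac{q^2-3\ell(q-\ell)}{\ell^2}$. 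Tightness follows by taking $T_i=\widetilde{T}_i$, since Lev's bound is an equality on centered intervals.

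The main obstacle is the wraparound case analysis, particularly in the even case: because $\widetilde{T}_3=[-m,m-1]$ is not symmetric around $0$, the $k=+1$ and $k=-1$ contributions have different ranges and kick in at slightly different thresholds (for instance, with $\ell=4,q=5$ only $k=-1$ contributes). Verifying that they combine cleanly to the closed-form extra term requires careful arithmetic, though each individual summation is routine.
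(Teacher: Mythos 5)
Your proposal is correct and follows essentially the same route as the paper: reduce to centered intervals via Lev's lemma (with the same $\delta$-balanced choices in the even case), then count the triangular convolution of the two intervals together with the $\pm q$ wraparound contributions, which is just a reorganization of the paper's three-case analysis of the double sum, and your closed-form extra terms and the simplification for $\ell>2q/3$ match the paper's. The tightness claim also works as you say, since the computation is exact for the centered intervals themselves.
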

\begin{proof}
    By \cref{lem:lev}, we have
    \begin{equation*}
        \Pr_{X_1\sim T_1,X_2\sim T_2}[X_1+X_2\in T_3] \leq \Pr_{\widetilde X_1 \sim \widetilde T_1,\widetilde X_2\sim \widetilde T_2}[\widetilde X_1+\widetilde X_2\in \widetilde T_3],
    \end{equation*}
    where $\widetilde T_1, \widetilde T_2, \widetilde T_3$ are (possibly different) centered intervals associated with $T_1,T_2,T_3$, respectively.
    We consider two cases depending on the parity of $\ell$.

    If $\ell$ is odd, then $\widetilde T_1=\widetilde T_2=\widetilde T_3=[-\alpha,\alpha]$ with $\alpha=\frac{\ell-1}{2}$.
    Therefore,
    \begin{align*}
        \Pr_{X_1\sim T_1,X_2\sim T_2}[X_1+X_2\in T_3] &\leq \Pr_{\widetilde X_1,\widetilde X_2\sim [-\alpha,\alpha]}[\widetilde X_1+\widetilde X_2\in [-\alpha,\alpha]]\\
        &= \sum_{i=-\alpha}^\alpha \sum_{j=-\alpha}^\alpha \frac{1}{\ell^2} \cdot 1_{\{i+j\Mod q \in [-\alpha,\alpha]\}},
    \end{align*}
    where we interpret the mod $q$ operation as mapping to $[-\frac{q-1}{2},\frac{q-1}{2}]$.
    The term inside the double sum is $1$ exactly when one of the following three disjoint cases occurs:
    \begin{itemize}
        \item $i+j\in[-\alpha,\alpha]$:
        By symmetry we can focus on $i\geq 0$, in which case this is equivalent to $j\in[-\alpha,\alpha-i]$.
        Therefore, the probability that this event occurs is (recall that $\ell=2\alpha+1$)
        \begin{equation*}
            \frac{1}{\ell}+\frac{2}{\ell^2}\sum_{i=1}^\alpha \left((\alpha-i)-(-\alpha)+1\right)=\frac{1}{\ell}+\frac{2}{\ell^2}\sum_{i=1}^\alpha (\ell-i) = \frac{3}{4}+\frac{1}{4\ell^2}.
        \end{equation*}
        The factor of $2$ comes from grouping together $\pm i$ for $i>0$ by symmetry. The leftmost $\frac{1}{\ell}$ term corresponds to $i=0$. 

        \item $i+j\in[q-\alpha,q]$: This is equivalent to $j\in[q-\alpha-i,\alpha]$, because $i+j\in[q-\alpha,q]$ is equivalent to $j\in [q-\alpha-i,q-i]$. We also know that $j\in [-\alpha,\alpha]$. Since $q-i\geq \alpha$ for all $i\leq \alpha$ (because $\alpha\leq \frac{q-1}{2}$), the intersection of the two intervals is $[q-\alpha-i,\alpha]$.
        Note that the interval $[q-\alpha-i,\alpha]$ is non-empty if and only if $i\geq q-2\alpha = q-\ell$.
        Therefore, the probability of this case occurring is
        \begin{equation*}
            \frac{1}{\ell^2} \sum_{i=q-\ell}^\alpha \left(\alpha-(q-\alpha-i)+1\right) = \frac{1}{\ell^2} \sum_{i=q-\ell}^\alpha (\ell- q+i).
        \end{equation*}
        This sum is nonzero only when $\alpha> q-\ell$, which is equivalent to $3\ell-2q-1>0$.
        Therefore, this expression simplifies to
        \begin{equation*}
            \frac{\max(0,3\ell-2q-1)\cdot (3\ell-2q+1)}{8\ell^2}.
        \end{equation*}

        \item $i+j\in[-q,-q+\alpha]$: By symmetry, the probability that this case occurs is equal to the previous case.
    \end{itemize}
    Summing the three probabilities yields the desired result for odd $\ell$-s.

    If $\ell$ is even, we can choose the centered intervals $\widetilde T_1=[-\alpha,\alpha-1]$ and $\widetilde T_2=[-\alpha+1,\alpha]$ for $\alpha=\ell/2$.
    The $\delta_3$-centered interval $\widetilde T_3$ associated with $T_3$ also must have $\delta_3=\pm 1$. Without loss of generality we assume that $\widetilde T_3=\widetilde T_2 = [-\alpha+1,\alpha]$ (otherwise, flip $\alpha$ to $-\alpha$).
    Now,
    \begin{align*}
        \Pr_{X_1\sim T_1,X_2\sim T_2}[X_1+X_2\in T_3]  &\leq \Pr_{\widetilde X_1 \sim [-\alpha,\alpha-1],\widetilde X_2 \sim [-\alpha+1,\alpha]}[\widetilde X_1+\widetilde X_2\in [-\alpha+1,\alpha]]
        \\
        &= \frac{1}{\ell^2}\sum_{i=-\alpha}^{\alpha-1}\sum_{j=-\alpha+1}^\alpha 1_{\{i+j\Mod q \in [-\alpha+1,\alpha]\}},
    \end{align*}
    where, as before, we intepret the mod $q$ operation as mapping to $[-\frac{q-1}{2},\frac{q-1}{2}]$.
    The term inside the double sum is $1$ exactly when one of the following three disjoint cases occurs:
    \begin{itemize}
        \item $i+j\in[-\alpha+1,\alpha]$: This is equivalent to $j\in[-\alpha+1,\alpha-i]$ if $i\geq 0$ and to $j\in [-\alpha+1-i,\alpha]$ if $i<0$.
        Therefore, the probability of this case occurring is (recall that $\ell=2\alpha$)
        \begin{align*}
            &\frac{1}{\ell^2} \sum_{i=0}^{\alpha-1} (\alpha-i -(-\alpha+1)+1) + \frac{1}{\ell^2} \sum_{i=-\alpha}^{-1}(\alpha-(-\alpha+1-i)+1) \\
            &= \frac{1}{\ell^2} \sum_{i=0}^{\alpha-1} (\ell -i) + \frac{1}{\ell^2} \sum_{i=-\alpha}^{-1} (\ell+i)= \frac{3}{4}.
        \end{align*}

        \item $i+j\in[q-\alpha+1,q]$: This is equivalent to $j\in [q-\alpha+1-i,\alpha]$.
        Therefore, the probability of this case occurring is
        \begin{equation*}
            \frac{1}{\ell^2} \sum_{i=-\alpha}^{\alpha-1} \max(0,(\alpha-(q-\alpha+1-i)+1)) = \frac{1}{\ell^2} \sum_{i=-\alpha}^{\alpha-1} \max(0,\ell-q+i) = \frac{1}{\ell^2} \sum_{i=q-\ell}^{\alpha-1} (\ell-q+i).
        \end{equation*}
        The rightmost sum is positive only if $\alpha-1\geq q-\ell$, which is equivalent to $3\alpha-q>0$.
        Therefore, we can simplify that right-hand side as 
        \begin{equation*}
            \frac{\max(0,3\alpha-q)\cdot (3\alpha-q-1)}{2\ell^2} = \frac{\max(0,3\ell-2q)\cdot (3\ell-2q-2)}{8\ell^2}.
        \end{equation*}

        \item $i+j\in[-q,-q+\alpha]$: This is equivalent to $j\in [-\alpha+1,-q+\alpha-i]$.
        Therefore, the probability of this case occurring is
        \begin{equation*}
            \frac{1}{\ell^2} \sum_{i=-\alpha}^{\alpha-1} \max(0,-q+\alpha-i-(-\alpha+1)+1) = \frac{1}{\ell^2} \sum_{i=-\alpha}^{\alpha-1} \max(0,\ell -q-i) = \frac{1}{\ell^2} \sum_{i=-\alpha}^{\ell-q} (\ell -q-i).
        \end{equation*}
        The rightmost sum is positive only if $3\alpha-q>0$, and so we can simplify the right-hand side as
        \begin{equation*}
            \frac{\max(0,3\alpha-q)\cdot (3\alpha-q+1)}{2\ell^2} =\frac{\max(0,3\ell-2q)\cdot (3\ell-2q+2)}{8\ell^2}.
        \end{equation*}
    \end{itemize}
    Summing the probabilities for the three cases yields the desired result for even $\ell$-s.

    Regarding the ``In particular'' part, \cref{item:23} follows directly by noticing that all maximums are $0$ when $\ell\leq 2q/3$.
    \cref{item:above23} uses the fact that when $\ell>2q/3$ both upper bounds (for even and odd $\ell$) simplify to $\frac{q^2-3\ell(q-\ell)}{\ell^2}$.
\end{proof}
We can then record the following corollary. 

\begin{corollary}\label{cor:prime-mixing}
For a prime $q \ge 3$, any set $T \subseteq \F_q$ of size $\ell \le q-1$,
\begin{itemize}
    \item If $2 \le \ell \le 2q/3$, $T$ is $\delta$-mixing for $\delta \ge \log_{q}(16/13)$.
    \item Otherwise, $T$ is $\delta$-mixing for $\delta \ge \log_{q}\left( \frac{\ell^2}{q^2 - 3\ell(q-\ell)} \right)$.
\end{itemize}
\end{corollary}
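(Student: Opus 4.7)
The plan is to deduce this corollary directly from \cref{lem:mixing} via a simple reduction that absorbs the multipliers $\alpha,\beta$ and the shift $\gamma$ into the sets themselves. Fix nonzero $\alpha,\beta \in \F_q$ and $\gamma \in \F_q$, and set $T_1 = \alpha T$, $T_2 = \beta T$, $T_3 = T + \gamma$. Since nonzero multiplication and translation are bijections on $\F_q$, each $T_i$ is a subset of $\F_q$ of size exactly $\ell$. Moreover, when $X \sim T$ uniformly, $\alpha X$ is uniform on $T_1$, and similarly $\beta X'$ is uniform on $T_2$. Consequently,
\begin{equation*}
    \Pr_{X,X' \sim T}[\alpha X + \beta X' \in T + \gamma] = \Pr_{Y_1 \sim T_1,\, Y_2 \sim T_2}[Y_1 + Y_2 \in T_3],
\end{equation*}
and this latter quantity is precisely what \cref{lem:mixing} bounds for arbitrary sets $T_1,T_2,T_3 \subseteq \F_q$ of common size $\ell$.

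Next, I would invoke the two cases of \cref{lem:mixing} and convert each probability bound to a $\delta$-mixing bound via the identity $\delta = -\log_q(\text{prob})$. For the first regime $2 \leq \ell \leq 2q/3$, \cref{item:23} gives an upper bound of $\tfrac{3}{4}+\tfrac{1}{4\ell^2}$ when $\ell$ is odd and $\tfrac{3}{4}$ when $\ell$ is even. Since $\ell \geq 2$, the quantity $\tfrac{3}{4}+\tfrac{1}{4\ell^2}$ is maximized (over odd $\ell$, among any $\ell \geq 2$) by the trivial monotone estimate $\tfrac{3}{4}+\tfrac{1}{16}=\tfrac{13}{16}$, and the even-$\ell$ bound $\tfrac{3}{4}$ is comfortably smaller. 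So in both parities and uniformly over $\alpha,\beta,\gamma$, the probability is at most $\tfrac{13}{16}$, yielding $\delta \geq \log_q(16/13)$. For the second regime $2q/3 < \ell \leq q-1$, \cref{item:above23} directly gives probability at most $\tfrac{q^2 - 3\ell(q-\ell)}{\ell^2}$, so $\delta \geq \log_q\!\left(\tfrac{\ell^2}{q^2 - 3\ell(q-\ell)}\right)$, as required.

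The key point is that all the real work has already been done in \cref{lem:mixing} via Lev's theorem (\cref{lem:lev}); the corollary is just a packaging step. There is essentially no genuine obstacle, aside from sanity-checking that multiplication by $\alpha,\beta$ and translation by $\gamma$ preserve cardinality (so the hypothesis $|T_1|=|T_2|=|T_3|=\ell$ of \cref{lem:mixing} holds), and verifying that the denominator $q^2 - 3\ell(q-\ell)$ appearing in the second case is strictly positive under $\ell > 2q/3$ and $\ell \leq q-1$ so that the logarithm is well-defined and yields $\delta > 0$. This is a short computation: $q^2 - 3\ell(q-\ell)$ is a quadratic in $\ell$ whose roots are at $\ell = q/3$ and $\ell = 2q/3$, so it is positive precisely when $\ell > 2q/3$ (on the relevant side), matching the case hypothesis.
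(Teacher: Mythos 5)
Your reduction---absorbing $\alpha,\beta,\gamma$ into $T_1=\alpha T$, $T_2=\beta T$, $T_3=T+\gamma$ (all of size $\ell$), applying \cref{lem:mixing}, and converting the probability bounds ($\tfrac{3}{4}+\tfrac{1}{4\ell^2}\le \tfrac{13}{16}$ for $\ell\ge 2$, and $\tfrac{q^2-3\ell(q-\ell)}{\ell^2}$ for $\ell>2q/3$) via $\delta=-\log_q(\text{prob})$---is exactly the packaging step the paper intends for \cref{cor:prime-mixing}, so the proposal is correct and takes the same route. One minor slip in your optional sanity check: $q^2-3\ell(q-\ell)=3\ell^2-3q\ell+q^2$ has negative discriminant and is therefore positive for \emph{all} $\ell$ (its roots are not $q/3$ and $2q/3$); strict positivity of $\delta$ in the second regime instead follows from $\ell^2>q^2-3\ell(q-\ell)\iff q/2<\ell<q$, which holds since $\ell>2q/3$ and $\ell\le q-1$, so the conclusion is unaffected.
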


\subsection{List-Recovery from Erasures over Prime Fields via Mixing}

In this section we adapt the technique in \cite{GHK11}, together with the worst-case mixing result from \cref{cor:prime-mixing}, to establish list recovery from erasures over prime fields.

\begin{lemma}\label{lem:main-erasures}
Let $T \subseteq \F_q^n$ be $\delta$-mixing. For $b \in \mathbb{N}$ and any $a > 0$ satisfying $n \ge q^{\frac{8a}{\delta}}$, the following holds. Let $X^{(1)},\ldots,X^{(b)}$ be sampled independently and uniformly at random from $T$. Then, we have that
\[
\Pr\left[ \left| \Sp\left(X^{(1)},\ldots,X^{(b)}\right) \cap T\right| > C \cdot b \right] \le q^{-a n},
\]
where $C = C(q,\delta,a) = q^{\frac{8a}{\delta}}$.
In particular, when $T = T_1 \times \cdots \times T_n$, and each $T_i$ is $\delta_i$-mixing, we get the same result as above for $\delta = \E_{i}[\delta_i]$.
\end{lemma}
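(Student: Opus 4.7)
The plan is to adapt the template of Guruswami, H{\aa}stad, and Kopparty~\cite{GHK11}, using the increasing chain lemma (\Cref{lem:SS}) together with the mixing hypothesis. The idea is that if the intersection $V \cap T$ is large, where $V := \Sp(X^{(1)},\dots,X^{(b)})$, then after an appropriate shift it contains a long $c$-increasing chain, and each chain element then corresponds to a linear combination of $X^{(1)},\dots,X^{(b)}$ landing in $T$ — an event controlled by mixing.

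Concretely, I would argue by contradiction. Assume $|V \cap T| > Cb$ and apply \Cref{lem:SS} to $V \cap T \subseteq \F_q^n$ with a chain increment $c$ to be chosen, obtaining a shift $w \in \F_q^n$ and a $c$-increasing chain $v_1,\dots,v_d$ in $(V \cap T) + w$ of length at least
\[
    d \ge \tfrac{1}{c}\log_q(Cb/2) - \log_q((q-1)n).
\]
The hypotheses $C = q^{8a/\delta}$ and $n \ge q^{8a/\delta}$ make it possible to choose $c$ so that $d$ is a suitable multiple of $a/\delta$. I would then union bound over the shift $w$, the chain $(v_1,\dots,v_d)$, and the coefficient vectors $\alpha_i \in \F_q^b$ witnessing $v_i - w = \sum_j \alpha_{i,j} X^{(j)}$. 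For each fixed tuple, the condition $v_i - w \in T$ is deterministic, while the condition $v_i - w = \sum_j \alpha_{i,j} X^{(j)}$ is random. Since the $v_i$'s (hence the $\alpha_i$'s) are linearly independent thanks to the chain property, all but a bounded number of the $\alpha_i$'s must have at least two nonzero entries, so the mixing hypothesis gives $\Pr[\sum_j \alpha_{i,j} X^{(j)} \in T] \le q^{-\delta n}$ for each such $i$. The ``fresh coordinates'' structure of the chain lets these bounds be combined into a joint estimate that decays exponentially in both $n$ and $d$.

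The main obstacle, and the bulk of the work, will be the parameter balancing: the union bound count is of order $q^n \cdot |T|^d \cdot q^{bd}$, and the compounded mixing gain must override this by a factor of $q^{an}$. The exponent $8a/\delta$ in $C$ reflects precisely the chain length needed for the balance to close, and the assumption $n \ge q^{8a/\delta}$ is what guarantees \Cref{lem:SS} delivers a chain of the required length. Finally, the ``in particular'' conclusion for product sets follows immediately from the main claim combined with \Cref{remark:mixing}.
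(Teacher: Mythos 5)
Your high-level template (increasing chains plus mixing, \`a la \cite{GHK11}) is the right one, but the argument as set up has a genuine gap: you apply \Cref{lem:SS} to $V\cap T$ inside the \emph{ambient} space $\F_q^n$, whereas the proof must run it in the \emph{coefficient} space $\F_q^b$. This breaks the argument in two ways. First, the chain-length guarantee of \Cref{lem:SS} carries a penalty of order $\log_q((q-1)\cdot\dim)$; in the ambient space this is $\log_q((q-1)n)$, and since the hypothesis only bounds $n$ from below, this penalty can dwarf $\tfrac1c\log_q(Cb/2)$ and make the guarantee vacuous --- indeed, increasing $n$ makes your chain bound \emph{worse}, so your closing claim that $n\ge q^{8a/\delta}$ is ``what guarantees \Cref{lem:SS} delivers a chain of the required length'' is backwards. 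In the paper, the chain is extracted from a set $S\subseteq\F_q^b$ of $Cb+1$ coefficient vectors (penalty only $\log_q((q-1)b)$), so its length $d\approx(2a+1)/\delta$ comes from $C=q^{8a/\delta}$, while $n\ge q^{8a/\delta}$ is used to absorb the union bound $\binom{q^b}{Cb+1}\le q^{b(Cb+1)}$ over such sets. Second, multiplying the per-step mixing bounds requires that each successive chain element involve at least two \emph{fresh} $X^{(j)}$'s, i.e., a $2$-increasing chain of the coefficient vectors: conditioning on the previously exposed $X^{(j)}$'s leaves two independent uniform samples from $T$, to which the mixing hypothesis applies. Freshness of ambient coordinates of the (deterministic, union-bounded) $v_i$'s gives no such conditional independence: all of your $\alpha_i$'s could be supported on the same two indices $j$, in which case the events $\{\sum_j\alpha_{i,j}X^{(j)}\in T+z\}$ are correlated and mixing yields only a single factor $q^{-\delta n}$, which cannot even beat the $q^n$ union bound over the shift (one should expect $\delta<1$; e.g.\ $\delta\le(1-\alpha)\log_q(16/13)$ in the intended application). (Separately, linear independence of the $v_i$'s does not pass to the shifted vectors $v_i-w$, so even your ``all but a bounded number of the $\alpha_i$'s have two nonzero entries'' step needs repair.)

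The accounting also does not close as described: your union-bound count $q^n\cdot|T|^d\cdot q^{bd}$ contains the factor $|T|^d$ from enumerating the chain elements as points of $T+w$, which is redundant once $w$ and the $\alpha_i$'s are fixed, and is fatal --- $|T|$ is typically $q^{(1-o(1))n}$ (in the application $|T|=q^{\alpha n}\ell^{(1-\alpha)n}$), while the best compounded mixing gain is $q^{-\delta dn}$, so $|T|^dq^{-\delta dn}$ still grows exponentially in $dn$. The paper's proof instead union bounds only over subsets $S\subseteq\F_q^b$ of coefficient vectors of size $Cb+1$ and over a single shift $y\in\F_q^n$ (which stands in for the random vector $X_w$), extracts from $S+w$ a $2$-increasing chain of length $d$, bounds the probability of the corresponding $d$ events by $q^{-\delta dn}$ via conditional mixing, and closes with $q^{\,n+b(Cb+1)-\delta dn}\le q^{-an}$, which is precisely where $n\ge q^{8a/\delta}$ enters. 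To salvage your plan, reroute it through the coefficient space (equivalently, run the chain argument inside $\Sp(X^{(1)},\dots,X^{(b)})$ with respect to the basis $X^{(1)},\dots,X^{(b)}$); as written, the argument does not go through.
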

\begin{proof}
Let $E$ denote the bad event that we want to bound, namely $|\Sp(X^{(1)},\ldots,X^{(b)}) \cap T| > A$ for $A = b \cdot q^{\frac{8a}{\delta}}$. Note that $E$ implies that there exists some set $S \subseteq \F_q^b$, $|S|=A+1$, such that $X_v \triangleq \sum_{i \in [b]}v_i X^{(i)} \in T$ for all $v\in S$.\footnote{Notice that if the $X_i$-s are not linearly independent, this can only decrease the probability that the intersection is large, so we can concentrate on the case that distinct $v$-s give rise to distinct $X_v$-s.} Hence, it suffices to bound the probability that such a set $S$ exists.

Fix some $S \subseteq \F_q^b$ of size $A+1$.
Applying \cref{lem:SS} with $c=2$ (and note that we can assume that $A +1 \le q^b$), we know there exists $w \in \F_q^b$ such that $S+w$ has a $2$-increasing chain of length
$d = \frac{1}{2}\log_{q}\frac{A+1}{2} - \frac{1}{2}\log_{q}((q-1)b)$. That is, we have $v^{(1)},\ldots,v^{(d)} \in S$ such that for all $j \in [d]$,
\[
\left| \supp(v^{(j)}) \setminus \left(\bigcup_{i = 1}^{j-1}\supp(v^{(i)})\right) \right| \ge 2.
\]
Now, we can bound
\begin{align}
\Pr[\forall v \in S, X_v \in T] &\le \Pr[\forall j \in [d], X_{v^{(j)}} \in T] \nonumber \\
&= \Pr[\forall j \in [d], X_{v^{(j)}}+X_w \in T + X_w] \nonumber \\
&= \Pr[\forall j \in [d], X_{v^{(j)}+w} \in T + X_w]. \label{eq:1}
\end{align}
Next, we bound \cref{eq:1} by 
\begin{equation}\label{eq:3}
\sum_{y\in \F_q^n}\Pr[\forall j \in [d], X_{v^{(j)}+w} \in T + y].
\end{equation}

Towards bounding each term in the sum, observe that the increasing chain property tells us that for each $j \in [d]$, we can write $X_{v^{(j)}+w}=Y_{\mathsf{past}}^{(j)}+Y_{\mathsf{new}}^{(j)}$, where $Y^{(j)}_{\mathsf{past}}$ contains $X_k$-s that participated in $\set{X_{v^{(i)}+w}}_{i < j}$,
whereas $Y_{\mathsf{new}}^{(j)}$ contains two new $X_{k}$-s. Now,
\begin{equation}\label{eq:2}
\Pr[X_{v^{(j)}+w} \in T+y ~|~ \forall i \in [j-1], X_{v^{(i)}+w} \in T+y] = 
\E_{z \sim Y_{\mathsf{past}}^{(j)}}\left[ \mathbf{1}(z) \cdot \Pr[Y_{\mathsf{new}}^{(j)} \in T+y_z] \right],
\end{equation}
where $\mathbf{1}(z)$ is an indicator for whether past $X_k$-s landed in $T+y$, and $y_z$ is a fixed string that depends on the fixing of $Y^{(j)}_{\mathsf{past}}$. Assume for simplicity that $Y^{(j)}_{\mathsf{new}} = \alpha X^{(1)} + \beta X^{(2)}$, where $\alpha,\beta \in \F_q$ are nonzero. Then, 
using the fact that $T$ is $\delta$-mixing,
each summand of \cref{eq:3} can now be bounded by
\[
\prod_{j \in [d]}\Pr[X_{v^{(j)}+w} \in T+y ~|~ \forall i \in [j-1], X_{v^{(i)}+w} \in T+y] \le q^{-n\delta d},
\]
and summing over all $y$-s gives us
\[
\Pr[v \in S, X_v \in T] \le q^{n} \cdot q^{-n\delta d} = q^{-(\delta d - 1)n}.
\]
Union-bounding over all $S$-s, we get
\begin{equation}\label{eq:4}
\Pr[E] \le \binom{q^b}{A+1}q^{-(\delta d - 1)n} \le q^{b(A+1) - (\delta d -1)n}.
\end{equation}
First, note that we set parameters so that $d \ge \frac{2a+1}{\delta}$. Indeed, we can set $d = \left\lfloor \frac{1}{2}\log_{q}\left( \frac{A+1}{2b(q-1)} \right)\right\rfloor$, and then need $A$ to be at most, say, $4q \cdot q^{\frac{4a}{\delta}} \le q^{\frac{8a}{\delta}}$. Under this choice of $A$, it also holds that $b(A+1) \le \frac{\delta d-1}{2}n$, since $n$ is large enough. Overall, \cref{eq:4} gives $\Pr[E] \le q^{-\frac{\delta d - 1}{2} n} \le q^{-an}$, as desired. The ``In particular'' part simply follows from \cref{remark:mixing}.
\end{proof}

We are now ready to prove our list recovery result.
\begin{theorem}[list recovery with erasures]\label{thm:erasures}
For any $n \in \mathbb{N}$, a prime $q$, an integer $\ell \le q-1$, and $\alpha,\eps \in (0,1)$, the following holds. With probability at least $1-q^{-n}$, a random linear code $\mathcal{C}\subseteq \F_q^n$ of rate \[R = 1-\alpha - (1-\alpha)\log_{q}\ell-\eps\] is $(\alpha,\ell,L)$-list-recoverable from erasures, with
\[
L = C_{q,\ell,\alpha} \cdot \frac{1}{\eps},
\]
provided that $n \ge L$. In particular, there exists a universal constant $C$ such that:
\begin{itemize}
    \item When $\ell \le \frac{2}{3}q$, we can take $C_{q,\ell,\alpha} \le q^{C \log q \cdot ((1-\alpha)\ell+1)} \triangleq C_{q,\ell,\alpha}^{(0)}$, and,
    \item When $\ell = (1-\gamma) q$ for some $\gamma \in (0,1/3)$, we can take $C_{q,\ell,\alpha} \le \left( C_{q,\ell,\alpha}^{(0)} \right)^{\frac{(1-\gamma)^2}{1-3\gamma(1-\gamma)}}$.
\end{itemize}
\end{theorem}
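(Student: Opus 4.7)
The plan adapts the Guruswami--H{\aa}stad--Kopparty~\cite{GHK11} framework (originally for list-decoding over constant alphabets) to list-recovery from erasures. By monotonicity of list-recoverability and a union bound over rectangles $T = T_1 \times \cdots \times T_n$ (where WLOG $|T_i| = \ell$ at exactly $(1-\alpha)n$ coordinates and $T_i = \F_q$ elsewhere), it suffices to bound the per-rectangle failure probability; the number of such rectangles is $N_{\mathrm{rect}} \leq \binom{n}{(1-\alpha)n}\binom{q}{\ell}^{(1-\alpha)n}$. By \cref{cor:prime-mixing} and \cref{remark:mixing}, each such $T$ is $\delta$-mixing with $\delta = (1-\alpha)\delta_0$, where $\delta_0 \geq \log_q(16/13)$ for $\ell \leq 2q/3$ (and the refined bound otherwise).

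For fixed $T$, the key observation is that $|\mathcal{C} \cap T| > L$ implies the existence of some subspace $V \leq \mathcal{C}$ with $|V \cap T| > L$ (take $V = \Sp(\mathcal{C} \cap T)$). Via \cref{prop:prob-contained-in-rlc}, a union bound over subspaces gives
\begin{equation*}
    \Pr[|\mathcal{C} \cap T| > L] \leq \sum_{d} q^{-(1-R)nd} \cdot N_d, \quad N_d := \#\{V : \dim V = d,\ |V \cap T| > L\}.
\end{equation*}
To bound $N_d$, I would double-count ordered linearly-independent spanning tuples: each such $V$ yields at least $(L/2)^d$ ordered tuples in $(V \cap T)^d$ (provided $L > 2q^{d-1}$), each a ``bad'' tuple $\vec v \in T^d$ with $|\Sp(\vec v) \cap T| > L$. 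Applying \cref{lem:main-erasures} with $b = d$ and choosing $a > 0$ so that $L \geq Cd$ (with $C = q^{8a/\delta}$) bounds the total bad-tuple count by $|T|^d q^{-an}$, yielding $N_d \leq (2|T|/L)^d q^{-an}$.

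Using the capacity identity $|T|\, q^{-(1-R)n} = q^{-\eps n}$ (from $R = 1-\alpha - (1-\alpha)\log_q \ell - \eps$), this yields
\begin{equation*}
    \Pr[|\mathcal{C} \cap T| > L] \leq q^{-an} \sum_{d \in [\lceil\log_q(L+1)\rceil,\ \log_q(L/2)+1]} \left(\frac{2\, q^{-\eps n}}{L}\right)^d,
\end{equation*}
a geometric series dominated by the smallest admissible $d$. Applying the union bound over $N_{\mathrm{rect}}$ and requiring the total $\leq q^{-n}$ forces $a = \Theta((1-\alpha)\ell + 1)$ to absorb $N_{\mathrm{rect}}$, and then $L \geq C = q^{\Theta(\log q \cdot ((1-\alpha)\ell+1))}$, matching the claimed $C_{q,\ell,\alpha}^{(0)}$. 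The $1/\eps$ factor in the stated $L$ arises from packaging the $q^{-\eps n}$ term in the geometric sum. The refined case $\ell = (1-\gamma)q$ uses the stronger $\delta_0$ from \cref{cor:prime-mixing}, whose ratio $(1-\gamma)^2/(1-3\gamma(1-\gamma))$ appears in the exponent through $1/\delta$.

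The main obstacle is the subspace-counting step, where \cref{lem:main-erasures} must be applied with matching sample size $b = d$ across the full admissible range $d \in [\lceil\log_q(L+1)\rceil,\, \log_q(L/2)+1]$. The subsequent parameter balancing in the union bound must avoid introducing $n$-dependent factors in $L$ while correctly extracting the claimed $C_{q,\ell,\alpha}$, and the side condition $n \geq q^{8a/\delta}$ needed for \cref{lem:main-erasures} is what produces the assumption $n \geq L$ in the theorem statement.
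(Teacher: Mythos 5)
There is a genuine gap in your subspace-counting step. You reduce the bad event $|\mathcal{C}\cap T|>L$ to the existence of a subspace $V\leq\mathcal{C}$ with $|V\cap T|>L$ by taking $V=\Sp(\mathcal{C}\cap T)$, and then union-bound only over dimensions $d\in[\lceil\log_q(L+1)\rceil,\ \log_q(L/2)+1]$, because your double-counting of spanning tuples needs $L>2q^{d-1}$ so that each bad $V$ contributes $\ge (L/2)^d$ linearly independent $d$-tuples from $V\cap T$. But the bad subspace $\Sp(\mathcal{C}\cap T)$ can have dimension as large as $L+1$ — for instance when $\mathcal{C}\cap T$ consists of $L+1$ linearly independent codewords — and in that case \emph{no} subspace of $\mathcal{C}$ of dimension $\le \log_q(L/2)+1$ satisfies $|V\cap T|>L$ (since $V\cap T\subseteq\mathcal{C}\cap T$, a $d$-dimensional $V$ then meets $T\cap \mathcal{C}$ in at most $d$ points). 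So the terms you drop from the union bound contain genuine failure events, and they cannot be recovered by your double-counting, nor by \cref{lem:main-erasures}, which only controls intersections exceeding $C\cdot b$ and hence requires $b\le L/C$. A telltale symptom: your truncated argument would only need $L\gtrsim C\log_q L$, i.e.\ an $\eps$-independent list size, and your explanation that the $1/\eps$ factor ``arises from packaging the $q^{-\eps n}$ term in the geometric sum'' does not actually produce it.

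This missing high-dimension regime is precisely what the paper's proof of \cref{thm:erasures} handles with the threshold $b^{\star}=\lceil c\ell/\eps\rceil$: for $b\le b^{\star}$ (spanning-set size, which plays the role of your $d$) it applies \cref{lem:main-erasures} — legitimate because $L\ge q^{8a/\delta}\cdot c\ell/\eps \ge C b^{\star}$ — while for $b^{\star}<b\le L+1$ it uses the trivial count $|\mathcal{F}_b|\le |T|^b$ and wins from the rate slack: each extra independent codeword costs $q^{-(1-R)n}$, so the contribution is $q^{-\eps n b}\le q^{-c\ell n}$, which beats the $q^{(1-\alpha)\ell n+o(\ell n)}$ union bound over rectangles once $c\ge 1-\alpha+4/\ell$. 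This large-$b$ regime is exactly where the $1/\eps$ dependence of $L$ comes from (via $L\ge C b^{\star}$), whereas your small-$d$ analysis (which, up to the subspace-versus-tuple bookkeeping, matches the paper's $p_1$ bound, including the identity $|T|q^{-(1-R)n}=q^{-\eps n}$ and the choice $a=\Theta((1-\alpha)\ell+1)$) only reproduces the paper's first case. To repair your proof, drop the restriction $d\le\log_q(L/2)+1$, pass from the bad subspace to a maximal linearly independent subset of $\mathcal{C}\cap T$ of size $b\le L+1$ as in the paper, and add the trivial-count argument for $b>b^{\star}$.
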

\begin{proof}
Fix some $T_1,\ldots,T_n \subseteq \F_q$, and some set $\mathbf{B} \subseteq [n]$ of density $\alpha$, such that for all $i \notin \mathbf{B}$, $|T_i| \le \ell$, and otherwise $T_i = \F_q$. Denote $T = T_1 \times \ldots \times T_n$, and record that there are
\[
\mathcal{T} \le \binom{n}{\alpha n}\binom{q}{\ell}^{(1-\alpha)n} \le 2^{h_2(\alpha)n} \cdot \left( \frac{eq}{\ell}\right)^{(1-\alpha)n \ell}
= q^{\frac{h_2(\alpha)}{\log q}n + (1-\alpha)n\ell }
\]
options to choose such a $T$.

For each $b \in [\log_{q}(L+1),L+1]$, let $\mathcal{F}_b$ be the set of all $(v_1,\ldots,v_b) \in T^{b}$ such that the $v_i$-s are linearly independent, and $|\Sp(\set{v_1,\ldots,v_b}) \cap T| > L$. Further, let $\mathcal{F} = \bigcup_{b}\mathcal{F}_b$. Given a tuple $\bm{v} = (v_1,\ldots,v_b)$, we denote by $\set{\bm{v}}$ the (unordered) set $\set{v_1,\ldots,v_b}$.

Note that if $|T \cap \mathcal{C}| > L$, there must exist some $\bm{v} \in \mathcal{F}$ for which $\set{\bm{v}} \subseteq T \cap \mathcal{C}$, by taking a maximal linearly independent subset of $T \cap \mathcal{C}$. Now,
for some fixed $\bm{v} \in \mathcal{F}_b$,
\[
\Pr_{\mathcal{C}}[\bm{v} \subseteq T \cap \mathcal{C}] \le \Pr_{\mathcal{C}}[\bm{v} \subseteq \mathcal{C}] \le q^{-(1-R)nb},
\]
where we used \cref{prop:prob-contained-in-rlc}. Overall, to bound the probability $p$ that the code is not 
list-recoverable, we need to union-bound over 
the choice of $T$, and over the $\set{\bm{v}}$-s. 
Concretely,
\begin{equation}\label{eq:5}
p \le \mathcal{T} \cdot \sum_{b \in [\log_{q}(L+1),L+1]}\sum_{\bm{v} \in \mathcal{F}_b}q^{-(1-R)nb}.
\end{equation}
We break the above sum with respect to a threshold $b^{\star} = \left\lceil c \cdot \frac{\ell}{\eps}\right\rceil$, for $c$ to be determined soon. When $b < b^{\star}$, we apply \cref{lem:main-erasures} as follows. The parameter $a$ will be determined later on. To determined $\delta$, note that whenever $T_i = \F_q$, $T_i$ is trivially $0$-mixing, whereas for $|T_i|=\ell$, $T_i$ is $\delta_i$-mixing for either
$
\delta_0 \ge \log_{q}(16/13),
$
or
$
\delta_0 \ge \log_{q}\left( \frac{\ell^2}{q^2 - 3\ell(q-\ell)} \right),
$
following \cref{cor:prime-mixing}. Thus,
$\delta \ge (1-\alpha)\delta_0$, and we will set $a$ and $c$ so that $L \ge q^{\frac{8a}{\delta}} \cdot \frac{c\ell}{\eps}$. 
Using our terminology, \cref{lem:main-erasures} tells us that 
\[
\Pr_{\bm{v} \sim T^b}\left[ \left| \Sp\left(\set{\bm{v}}\right) \cap T\right| > L \right] \le q^{-a n},
\]
so in particular, $|\mathcal{F}_b| \le |T|^b \cdot q^{-an}$,
and we can bound
\begin{align*}
p_1 &\triangleq \mathcal{T} \sum_{b= \log_{q}(L+1)}^{b^{\star}}\sum_{\bm{v}\in\mathcal{F}_b}q^{-(1-R)nb} \le \mathcal{T} \cdot b^{\star} \cdot q^{-an} \cdot \sum_{b=\log_{q}(L+1)}^{b^{\star}} |T|^b \cdot q^{-(1-R)nb} \\
&\le \mathcal{T} \cdot b^{\star}q^{-an}\sum_{b=\log_{q}(L+1)}^{b^{\star}}q^{\alpha nb}\ell^{(1-\alpha)nb}q^{-(1-R)nb} = \mathcal{T} \cdot  
b^{\star}q^{-an}\sum_{b=\log_{q}(L+1)}^{b^{\star}}q^{-\left( 1-R-\alpha-(1-\alpha)\log_{q}\ell \right)nb} \\
&\le \mathcal{T} \cdot  (b^{\star})^2 q^{-an}q^{-\eps \cdot n \cdot \log_{q}(L+1)} \le 
q^{\frac{h_2(\alpha)}{\log q}n + (1-\alpha)n\ell  + 2\log b^{\star}-an- \eps n \cdot \log_{q}(L+1)} \\ &\le q^{(1+(1-\alpha)\ell - a - \eps\log_{q}(L+1))n},
\end{align*}
where we used the fact that $\frac{h_2(\alpha)}{\log q}+\frac{2\log b^{\star}}{n} \le 1$. Now, $L+1 \ge q$, so it suffices to choose $a \ge (1-\alpha)\ell + 3$ to get that $p_1 \le q^{-2n}$.

Moving on to $p_2$, we have
\[
p_2 \triangleq \mathcal{T} \sum_{b= b^{\star}+1}^{L+1}\sum_{\bm{v}\in\mathcal{F}_b}q^{-(1-R)nb} \le \mathcal{T} \cdot \sum_{b=b^{\star}+1}^{L+1}|T|^b q^{-(1-R)nb} \le 
\mathcal{T} \cdot \sum_{b=b^{\star}+1}^{L+1}q^{-\left( 1-R-\alpha-(1-\alpha)\log_{q}\ell \right)nb},
\]
where we used the trivial bound on $|\mathcal{F}_b|$. Now, 
\[
p_2 \le \mathcal{T} \cdot (L+1) \cdot q^{-\eps n b^{\star}} \le q^{\frac{h_2(\alpha)}{\log q}n+(1-\alpha)n\ell+\log_{q}(L+1)-c n\ell} \le q^{(2+(1-\alpha-c)\ell)n}.
\]
Thus, choosing $c \ge 1-\alpha+\frac{4}{\ell}$, we get that $p_2 \le q^{-2n}$, and recalling \cref{eq:5}, we get that overall, $p \le q^{-n}$. Finally, recall that the bound on $p$ was provided that $a$ and $c$ are such that 
$L \ge q^{\frac{8a}{\delta}} \cdot \frac{c\ell}{\eps}$. Indeed, if we set $a = (1-\alpha)\ell+3$
and $c = 5$, this holds by choosing the constant $C$ in the theorem's statement appropriately. 
\end{proof}

\section{List-Recovery from Errors over Arbitrary Fields}

We now turn to the case of list-recovery from errors. Unlike in the case of list-recovery from erasures, we will make no assumptions on the underlying field. We firstly show that list-recovery balls are non-trivially mixing. We subsequently sketch how, again using \Cref{lem:mixing}, we can conclude the desired list-recovery result.

\subsection{Mixture of List-Recovery Balls}

In this section we make use of the \emph{convolution} of two functions $f,g\colon \F_q \to \R$, defined as 
\[
    (f*g)(x)=\mathop{\E}_{y \in \F_q}[f(y)g(x-y)] \ .
\]
We recall that it is \emph{distributive}, i.e., $f*(g+h) = f*g+f*h$ for any functions $f,g,h \colon \F_q \to \R$. We also note the following equality.
\begin{proposition} \label{prop:prob-to-convolution}
    Let $q$ be a prime power. Suppose $A,B \subseteq \F_q$ and $X,Y$ are sampled independently and uniformly from $A,B$, respectively. Then, for any $z \in \F_q$,
    \[
        \Pr[X+Y=z] = \frac{q}{|A|\cdot|B|}1_A*1_B(z).
    \]
\end{proposition}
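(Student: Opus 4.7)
The plan is to unfold both sides as counting expressions over $\F_q$ and observe that they agree up to the stated scaling factor; no clever idea is required. First I would expand the left-hand side using independence and uniformity: since $X \sim A$ and $Y \sim B$ independently,
\[
    \Pr[X+Y=z] \;=\; \sum_{(a,b)\in A\times B}\Pr[X=a]\Pr[Y=b]\cdot 1_{\{a+b=z\}} \;=\; \frac{|\{(a,b)\in A\times B : a+b=z\}|}{|A|\cdot|B|}.
\]
Then I would re-index the numerator by setting $y=a$, so that the condition $a+b=z$ becomes $b=z-y$, giving
\[
    |\{(a,b)\in A\times B : a+b=z\}| \;=\; \sum_{y\in\F_q} 1_A(y)\,1_B(z-y).
\]

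Next I would compare this to the convolution. By the definition given just above the statement,
\[
    (1_A*1_B)(z) \;=\; \mathop{\E}_{y\in\F_q}\left[1_A(y)\,1_B(z-y)\right] \;=\; \frac{1}{q}\sum_{y\in\F_q}1_A(y)\,1_B(z-y),
\]
so $\sum_{y\in\F_q}1_A(y)\,1_B(z-y) = q\cdot (1_A*1_B)(z)$. Substituting back into the formula for $\Pr[X+Y=z]$ yields the claimed identity
\[
    \Pr[X+Y=z] \;=\; \frac{q}{|A|\cdot|B|}\,(1_A*1_B)(z).
\]
There is no genuine obstacle here; the only thing to be careful about is the normalization convention for convolution, namely the factor of $1/q$ coming from the expectation rather than a plain sum, which is exactly what accounts for the $q$ in the numerator on the right-hand side.
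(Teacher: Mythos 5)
Your proposal is correct and follows essentially the same route as the paper: both simply unfold $\Pr[X+Y=z]$ via independence and uniformity into the sum $\sum_{y\in\F_q}1_A(y)1_B(z-y)$ scaled by $\frac{1}{|A|\cdot|B|}$, and then convert that sum into $q$ times the expectation-normalized convolution. The only cosmetic difference is that you phrase the first step as counting pairs $(a,b)$ while the paper conditions on the value of $X$, which amounts to the same computation.
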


\begin{proof}
    We have
    \begin{align*}
        \Pr[X+Y=z] &= \sum_{x \in A}\Pr[X=x]\Pr[X+Y=z|X=x] = \sum_{x \in A}\frac{1}{|A|}\cdot \Pr[Y=z-x]\\
        &= \frac{1}{|A|}\sum_{x \in \F_q}1_A(x) \cdot 1_B(z-x)\frac{1}{|B |} = \frac{q}{|A|\cdot |B|}\mathop{\E}_{x \in \F_q}[1_A(x)\cdot 1_B(z-x)] \\
        &= \frac{q}{|A|\cdot |B|} 1_A*1_B(z). \qedhere
    \end{align*}
\end{proof}

\begin{lemma}\label{lem:mixing-errors-helper}
    
Suppose $q$ is a prime power, $1 \leq \ell \leq q$ is an integer and $\omega_1,\omega_2 \in (0,1-\ell/q)$. Let $A,B,T \subseteq \F_q$ be of size $\ell$, let $X_1 \sim \Unif(A)$, $X_2 \sim \Unif(B)$, $Y_1 \sim \Unif(\F_q\setminus A)$ and $Y_2 \sim \Unif(\F_q\setminus B)$ all independent. Then, 
    \begin{align}
        (1-\omega_1)&(1-\omega_2)\Pr[X_1+X_2 \in T] + \omega_1(1-\omega_2)\Pr[X_1 +Y_2 \in T] \\
        &\qquad+\omega_2(1-\omega_1)\Pr[Y_1+X_2 \in T] \nonumber+ \omega_1\omega_2\Pr[Y_1+Y_2 \in T] \label{eq:lem-LHS-mixingness}\\
        &\leq (1-\omega_1)(1-\omega_2) + \omega_1\omega_2 \cdot \frac{\ell}{q-\ell} \ .
    \end{align}
\end{lemma}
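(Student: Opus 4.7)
The plan is to view the left-hand side as $\Pr[Z_1 + Z_2 \in T]$, where $Z_i$ is the mixture distribution that equals $X_i$ with probability $1-\omega_i$ and equals $Y_i$ with probability $\omega_i$, independently for $i = 1,2$. Conditioning on which branch each mixture takes recovers exactly the four-term sum in the statement, so the lemma reduces to upper-bounding the single quantity $\Pr[Z_1 + Z_2 \in T]$.

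To bound this, I would express the density $p_i$ of $Z_i$ using the identity $1_{\F_q \setminus A} = 1 - 1_A$ (and similarly for $B$), writing
\[
    p_1 \;=\; \gamma_1 \cdot 1_A + \tfrac{\omega_1}{q-\ell}, \qquad p_2 \;=\; \gamma_2 \cdot 1_B + \tfrac{\omega_2}{q-\ell},
\]
where $\gamma_i := \frac{1-\omega_i}{\ell} - \frac{\omega_i}{q-\ell} = \frac{q(1-\omega_i) - \ell}{\ell(q-\ell)}$. The crucial observation is that the hypothesis $\omega_i < 1 - \ell/q$ is precisely what forces $\gamma_i \geq 0$. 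Distributing the convolution $p_1 * p_2$ and using that convolving any function with the constant function $1$ produces a constant, the only $z$-dependent term is $\gamma_1 \gamma_2 \cdot (1_A * 1_B)(z)$, with nonnegative coefficient, while the remaining contributions collapse to a constant $C$ depending only on $\omega_1, \omega_2, \ell, q$. Hence, via \cref{prop:prob-to-convolution} and the trivial worst-case bound $\sum_{z \in \F_q}(1_A * 1_B)(z) = \ell^2/q$, one obtains an upper bound on $\Pr[Z_1 + Z_2 \in T]$ of the shape $\gamma_1 \gamma_2 \ell^2 + C q \ell$.

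The remaining step is to substitute the formula for $\gamma_i$ into this bound and verify that it collapses to $(1-\omega_1)(1-\omega_2) + \omega_1 \omega_2 \cdot \frac{\ell}{q-\ell}$. This is the main --- though routine --- bookkeeping; after clearing the common denominator $(q-\ell)^2$, the terms linear in $\omega_1 + \omega_2$ cancel cleanly, and the expression reduces to the target form. The sign condition $\omega_i < 1 - \ell/q$ is what makes the crude worst-case bound on $1_A * 1_B$ push in the correct direction, reflecting the fact that $\rho \in (0, 1-\ell/q)$ is precisely the regime in which positive-rate list-recovery from errors is possible in the first place.
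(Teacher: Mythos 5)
Your proposal is correct and follows essentially the same route as the paper's proof: both rewrite the complement indicators via $1_{\F_q\setminus A}=1-1_A$, isolate the coefficient of the convolution $1_A*1_B$, show it is nonnegative precisely because $\omega_i<1-\ell/q$, and then apply the trivial bound on the convolution followed by the same closing algebra, which indeed collapses to $(1-\omega_1)(1-\omega_2)+\omega_1\omega_2\cdot\frac{\ell}{q-\ell}$. Your mixture-density packaging, with the coefficient appearing already factored as $\gamma_1\gamma_2$, is a slightly cleaner way to see the nonnegativity that the paper establishes through its $s,t,u,v$ substitution (one checks $c=q^2\gamma_1\gamma_2$), but the substance of the argument is the same.
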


\begin{proof}
     Using \Cref{prop:prob-to-convolution} and the observation that $1_{\F_q\setminus A}  = 1 - 1_A$ (and similarly for $B$), for any $z \in T$ we find
     \begin{align*}
        \Pr[X_1+X_2 =z] &= \frac{q}{\ell^2}\cdot 1_A * 1_B(z) \ , \\
        \Pr[X_1+Y_2 =z] &= \frac{q}{\ell(q-\ell)}\cdot 1_A * (1-1_B)(z) = \frac{q}{\ell(q-\ell)}\cdot\left( \frac{\ell}{q}- 1_A*1_B(z)\right) \ ,\\
        \Pr[X_2+Y_1=z] &= \frac{q}{\ell(q-\ell)}\cdot (1-1_A) * 1_B(z) = \frac{q}{\ell(q-\ell)}\cdot\left( \frac{\ell}{q}- 1_A*1_B(z)\right) \\ 
        \Pr[Y_1+Y_2 =z] &= \frac{q}{(q-\ell)^2} \cdot (1-1_A)*(1-1_B)(z) \\
        &= \frac{q}{(q-\ell)^2} \cdot \left(1-2\cdot \frac\ell q + 1_A*1_B(z)\right) \ ,
    \end{align*}
    where we also observed that $1*1_A(z) = \E_{x \in \F_q}[1\cdot 1_A(z-x)] = \frac{\ell}{q}$, and similarly $1*1_B(z) = \frac{\ell}{q}$. Thus, \cref{eq:lem-LHS-mixingness} becomes
    \begin{align}
        &\sum_{z \in T}\left[(1-\omega_1)(1-\omega_2)\cdot \frac{q}{\ell^2}\cdot 1_A*1_B(z) + (\omega_1(1-\omega_2)+\omega_2(1-\omega_1))\cdot \frac{q}{\ell(q-\ell)}\cdot\left( \frac{\ell}{q}- 1_A*1_B(z)\right) \right.\nonumber \\
        &\qquad\left.+ \omega_1\omega_2 \cdot \frac{q}{(q-\ell)^2} \cdot \left(1-2\cdot\frac\ell q + 1_A*1_B(z)\right)\right]\nonumber \\
        &= \sum_{z \in T}\left((1-\omega_1)(1-\omega_2)\cdot \frac{q}{\ell^2} - (\omega_1(1-\omega_2)+\omega_2(1-\omega_1))\frac{q}{\ell(q-\ell)} + \rho^2\cdot \frac{q}{(q-\ell)^2}\right)\cdot 1_A*1_B(z) \nonumber \\ 
        &\qquad + \ell\cdot\left[(\omega_1(1-\omega_2)+\omega_2(1-\omega_1))\frac{q}{\ell(q-\ell)}\cdot\frac{\ell}q + \omega_1\omega_2\frac{q}{(q-\ell)^2}\left(1 - 2\cdot \frac{\ell}{q}\right)\right] \nonumber\\ 
        &= \left((1-\omega_1)(1-\omega_2)\cdot \frac{q}{\ell^2} - (\omega_1(1-\omega_2)+\omega_2(1-\omega_1))\frac{q}{\ell(q-\ell)} \right.\nonumber \\
        &\qquad\qquad\qquad + \left.\omega_1\omega_2\cdot \frac{q}{(q-\ell)^2}\right) \sum_{z \in T}1_A*1_B(z) \label{eq:main-sum} \\
        &\qquad+ \frac{(\omega_1(1-\omega_2)+\omega_2(1-\omega_1))\ell}{q-\ell} + \omega_1\omega_2\cdot \frac{q\ell-2\ell^2}{(q-\ell)^2} \label{eq:rest} \ .
    \end{align}
     We first consider the term \cref{eq:main-sum}. Let 
    \[
        c = (1-\omega_1)(1-\omega_2)\cdot \frac{q^2}{\ell^2} - (\omega_1(1-\omega_2)+\omega_2(1-\omega_1))\frac{q^2}{\ell(q-\ell)} + \omega_1\omega_2\cdot \frac{q^2}{(q-\ell)^2} \ .
    \]
    We claim that $c\geq 0$. Indeed, put $\alpha:=\frac{\ell}{q}$, so $1-\alpha = 1-\frac{\ell}{q} = \frac{q-\ell}{q}$ and note that 
    \[
        c = \frac{(1-\omega_1)(1-\omega_2)}{\alpha^2} -  \frac{\omega_1(1-\omega_2)}{\alpha\cdot (1-\alpha)} - \frac{\omega_2(1-\omega_1)}{\alpha\cdot (1-\alpha)} + \frac{\omega_1\omega_2}{1-\alpha^2} \ .
    \]
    Put $s = \frac{1-\omega_1}\alpha$, $t=\frac{1-\omega_2}\alpha$, $u = \frac{\omega_1}{1-\alpha}$ and $v = \frac{\omega_2}{1-\alpha}$, so then 
    \begin{align}
        c = st - sv - tu + uv = s(t-v) - u(t-v). \label{eq:c>=0}
    \end{align}
    Since $\omega_1<1-\alpha$, it follows that $u \leq 1 \leq s$, and similarly as $\omega_2 < 1-\alpha$ we have $v \leq 1 \leq t$. Thus, looking at \cref{eq:c>=0}, since $t-v\geq 0$ and $s \geq u$, it follows that $c \geq 0$, as claimed. 

    Thus, using the trivial upper bound $1_A*1_B(z) \leq \frac{\ell}q$ valid for all $z \in \F_q$, we find
    \[
        \eqref{eq:main-sum} = \frac{c}q\sum_{z \in T}1_A*1_B(z) \leq \frac{c}q \cdot \ell \cdot \frac{\ell}{q} = \frac{\ell^2}{q^2}\cdot c = \alpha^2 c\ .
    \]
    We can expand
    \begin{align*}
        \alpha^2\cdot c &= \alpha^2 \cdot\left(\frac{(1-\omega_1)(1-\omega_2)}{\alpha^2} -  \frac{\omega_1(1-\omega_2)}{\alpha\cdot (1-\alpha)} - \frac{\omega_2(1-\omega_1)}{\alpha\cdot (1-\alpha)} + \frac{\omega_1\omega_2}{1-\alpha^2}\right) \\
        &= (1-\omega_1)(1-\omega_2) - (\omega_1(1-\omega_2)+\omega_2(1-\omega_1))\cdot \frac{\alpha}{1-\alpha} + \omega_1\omega_2\left(\frac{\alpha^2}{(1-\alpha)^2}\right) \\ 
        &= (1-\omega_1)(1-\omega_2) - (\omega_1(1-\omega_2)+\omega_2(1-\omega_1))\frac{\ell}{q-\ell} + \omega_1\omega_2 \frac{\ell^2}{(q-\ell)^2} \ .
    \end{align*}
    Combining this again with \cref{eq:rest}, we obtain the following upper bound:
    \begin{align*}
    	(1-\omega_1)&(1-\omega_2) - (\omega_1(1-\omega_2)+\omega_2(1-\omega_1))\frac{\ell}{q-\ell} + \omega_1\omega_2 \frac{\ell^2}{(q-\ell)^2} \\
        &\qquad\qquad + (\omega_1(1-\omega_2)+\omega_2(1-\omega_1))\frac{\ell}{q-\ell} + \omega_1\omega_2\frac{q\ell-2\ell^2}{(q-\ell)^2} \\
    	&= (1-\omega_1)(1-\omega_2) + \omega_1\omega_2\left(\frac{q\ell}{(q-\ell)^2} - \frac{\ell^2}{q-\ell}\right)\\
        &= (1-\omega_1)(1-\omega_2) + \omega_1\omega_2\cdot \frac{\ell}{q-\ell}\left(\frac{q}{q-\ell}-\frac{\ell}{q-\ell}\right) \\
    	&= (1-\omega_1)(1-\omega_2) + \omega_1\omega_2\cdot \frac{\ell}{q-\ell}\left(\frac{q-\ell}{q-\ell}\right) = (1-\omega_1)(1-\omega_2) + \omega_1\omega_2\cdot  \frac{\ell}{q-\ell} \ ,
    \end{align*}
    as was to be shown. 
\end{proof}

In the sequel, we will need upper and lower bounds on the RHS of \Cref{lem:mixing-errors-helper} to $1-\rho$. The following lemma establishes the required bounds.

\begin{lemma} \label{lem:omega-ineq}
    Suppose $q$ is a prime power, $1 \leq \ell \leq q$ is an integer and $\rho \in (0,1-\ell/q)$. Suppose $0 \leq \eta \leq \rho$ and $\rho - \eta \leq \omega_1,\omega_2 \leq \rho$. Then, the following inequalities hold:
    \begin{align}
        1-\rho &> (1-\rho)^2 + \rho^2\cdot \frac{\ell}{q-\ell}, \label{eq:first-ineq-helper} \\
        (1-\rho+\eta)^2 &+ (\rho-\eta)^2 \cdot \frac{\ell}{q-\ell} > (1-\omega_1)(1-\omega_2) + \omega_1\omega_2 \cdot \frac{\ell}{q-\ell}. \label{eq:sec-ineq-helper}
    \end{align}
\end{lemma}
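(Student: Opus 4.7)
The plan is to prove each inequality by direct algebraic manipulation, with the second also using a monotonicity observation.

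For \cref{eq:first-ineq-helper}, I would divide both sides by $1-\rho$ (positive) and then by $\rho$ (positive). This reduces the claim to
\begin{equation*}
    \frac{1-\rho}{\rho} > \frac{\ell}{q-\ell}.
\end{equation*}
Clearing denominators and expanding gives $(1-\rho)(q-\ell) > \rho\ell$, which simplifies to $q-\ell > \rho q$, i.e., $\rho < 1-\ell/q$. This is precisely the hypothesis on $\rho$, so the inequality holds.

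For \cref{eq:sec-ineq-helper}, the plan is a monotonicity argument. Define
\begin{equation*}
    g(x,y) = (1-x)(1-y) + xy\cdot\frac{\ell}{q-\ell},
\end{equation*}
so the inequality reads $g(\rho-\eta, \rho-\eta) \geq g(\omega_1, \omega_2)$. I would compute
\begin{equation*}
    \partial_x g(x,y) = -(1-y) + y\cdot\frac{\ell}{q-\ell},
\end{equation*}
and observe that this is strictly negative exactly when $y(q-\ell) < (1-y)(q-\ell)\cdot\tfrac{q-\ell}{\ell}$, which by the same manipulation as above is equivalent to $y < 1-\ell/q$. Symmetrically, $\partial_y g < 0$ whenever $x < 1-\ell/q$. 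Since the hypotheses give $\omega_1, \omega_2 \leq \rho < 1-\ell/q$, the function $g$ is strictly decreasing in each variable throughout the box $[\rho-\eta,\rho]^2$. Hence $g$ attains its maximum on this box at the corner $(\rho-\eta,\rho-\eta)$, which is exactly the LHS of \cref{eq:sec-ineq-helper}.

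There is no real obstacle here; the lemma is essentially a bookkeeping step. The one subtlety worth flagging is the strict inequality in \cref{eq:sec-ineq-helper}: strict monotonicity of $g$ yields a strict bound whenever at least one of $\omega_1, \omega_2$ exceeds $\rho-\eta$, and equality at the single corner $(\omega_1,\omega_2)=(\rho-\eta,\rho-\eta)$; so the statement is correct provided this corner is excluded (otherwise it should read ``$\geq$''). Either way, the bound suffices for the downstream application, where it will be combined with \cref{lem:mixing-errors-helper} and a Chernoff argument to show $\delta$-mixing of list-recovery balls.
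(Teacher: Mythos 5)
Your proof is correct and takes essentially the same route as the paper: the first inequality reduces algebraically to $\rho<1-\ell/q$ (the paper phrases this as the quadratic having roots $0$ and $1-\ell/q$), and the second follows from $g$ being decreasing in each coordinate on the box $[\rho-\eta,\rho]^2$, which the paper establishes via the discrete difference $g(\rho-\eta,\omega_2)-g(\omega_1,\omega_2)=\eps_1\left(1-\omega_2\left(1+\frac{\ell}{q-\ell}\right)\right)$ rather than partial derivatives. Your remark about strictness failing at the corner $(\omega_1,\omega_2)=(\rho-\eta,\rho-\eta)$ is accurate — the paper's statement shares this minor imprecision — and, as you note, only the non-strict bound is used downstream in \cref{lem:mixing-list-rec-ball}.
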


\begin{proof}
    For \cref{eq:first-ineq-helper}, observe that the quadratic
    \[
        (1-\rho) = (1-\rho)^2 + \rho^2\cdot \frac{\ell}{q-\ell},
    \]
    has solutions $\rho = 0$ and $\rho = 1-\ell/q$, from which it follows that 
    \begin{align}
        1-\rho > (1-\rho)^2 + \rho^2\cdot \frac{\ell}{q-\ell} ~~~ \forall~ 0 < \rho < 1-\frac{\ell}q,
    \end{align}
    as claimed. 

    We now consider \cref{eq:sec-ineq-helper}. Consider the function $g(\omega_1,\omega_2) = (1-\omega_1)(1-\omega_2) + \omega_1\omega_2 \cdot \frac{\ell}{q-\ell}$, defined for $(\omega_1,\omega_2) \in [\rho-\eta,\rho]^2$. To show~\cref{eq:sec-ineq-helper}, it suffices to show that $g$ is strictly maximized at the point $(\rho-\eta,\rho-\eta)$. If $\omega_1 > \rho-\eta$, we first argue $g(\rho-\eta,\omega_2) > g(\omega_1,\omega_2)$. Put $\eps_1 = \omega_1-(\rho-\eta)>0$ and observe that 
    \begin{align*}
        g&(\rho-\eta,\omega_2) - g(\omega_1,\omega_2) \\
        &= \left((1-(\omega_1-\eps_1))(1-\omega_2) + (\omega_1-\eps_1)\omega_2 \cdot \frac{\ell}{q-\ell}\right) - \left((1-\omega_1)(1-\omega_2) + \omega_1\omega_2 \cdot \frac{\ell}{q-\ell}\right)\\
        &= \eps_1(1-\omega_2) -\eps_1\omega_2\frac{\ell}{q-\ell} = \eps_1\left(1-\omega_2\left(1+\frac{\ell}{q-\ell}\right)\right).
    \end{align*}
    Since $\omega_2 <1-\frac{\ell}q$, we observe that
    \begin{align*}
        \omega_2\cdot \left(1+\frac{\ell}{q-\ell}\right) &< \left(1-\frac{\ell}q\right)\left(1+\frac{\ell}{q-\ell}\right) = 1 - \frac{\ell}q + \frac{\ell}{q-\ell} - \frac{\ell^2}{q(q-\ell)} \\
        &= 1 - \frac{\ell(q-\ell)-\ell q + \ell^2}{q(q-\ell)} = 1 - \frac{\ell q - \ell^2 -\ell q + \ell^2}{q(q-\ell)} = 1.
    \end{align*}
    Thus, 
    \[
        1-\omega_2\left(1+\frac{\ell}{q-\ell}\right) > 0,
    \]
    so $g(\rho-\eta,\omega_2) > g(\omega_1,\omega_2)$. By reversing the roles of $\omega_2$ and $\omega_1$ we find $g(\rho-\eta,\rho-\eta) \geq g(\rho-\eta,\omega_2)$ as well, establishing \cref{eq:sec-ineq-helper}. 
\end{proof}

We can now state and prove a lemma bounding the probability that a nontrivial linear combination of two uniform samples from a list-recovery ball lands in a shift of the list-recovery ball. Then, in \cref{cor:mixing-list-rec-ball}, we show how to set parameters to turn this into a statement about the $\delta$-mixing of a list-recovery ball.

\begin{lemma}\label{lem:mixing-list-rec-ball}
	Let $n \in \N$, $q$ a prime power, $1 \leq \ell \leq q$ an integer, and let $\rho \in (0,1-\ell/q)$. Let $T_1,\dots,T_n \subseteq \F_q$ be subsets, each of size $\ell$. Fix $\eta>0$ small enough so that 
    \[
        1-\rho > (1-\rho+\eta)^2 + (\rho-\eta)^2\cdot\frac{\ell}{q-\ell}.
    \]
    Let $\alpha,\beta \in \F_q\setminus\{0\}$ and $y \in \F_q^n$, and let $X,X' \sim B_\rho(T_1 \times \cdots \times T_n)$. Then,
    \begin{align}
        \Pr[\alpha X + \beta X' &\in y+B_\rho(T_1 \times \cdots \times T_n)] \nonumber \\ 
        &\leq 2\sqrt{2n} \cdot \exp_q\left(n(h_{q,\ell}(\rho-\eta)-h_{q,\ell}(\rho))\right) \nonumber \\
        &\qquad + 2n \cdot \exp_q\left(-nD_q\left(1-\rho\|(1-\rho+\eta)^2+(\rho-\eta)^2\frac{\ell}{q-\ell}\right)\right) \label{eq:list-rec-bound-RHS}.
    \end{align}
\end{lemma}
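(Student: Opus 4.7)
The plan is to decompose the event by the weight profiles of $X$ and $X'$ and to switch from the uniform distribution on $B_\rho(T_1\times\cdots\times T_n)$ to a tractable product model, where a Chernoff--Hoeffding tail bound combined with \cref{lem:mixing-errors-helper} will yield the result. Write $P = \{i : X_i \notin T_i\}$ and $P' = \{i : X_i' \notin T_i\}$, and for each $i \in [n]$ let $F_i = \mathbf{1}\{\alpha X_i + \beta X_i' \in y_i + T_i\}$. Then $\alpha X + \beta X' \in y + B_\rho(T_1\times\cdots\times T_n)$ is exactly the event $\sum_i F_i \geq (1-\rho)n$. I split according to whether $|P|, |P'| > (\rho-\eta)n$ (the ``heavy'' case) or not (the ``light'' case).

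The light case is easy: since $X \sim \mathrm{Unif}(B_\rho(T_1 \times \cdots \times T_n))$, \cref{prop:estimate} gives $\Pr[|P| \leq (\rho-\eta)n] \leq |B_{\rho-\eta}(T_1 \times \cdots \times T_n)| / |B_\rho(T_1 \times \cdots \times T_n)| \leq \sqrt{2n}\cdot \exp_q(n(h_{q,\ell}(\rho-\eta)-h_{q,\ell}(\rho)))$, and symmetrically for $|P'|$. Union-bounding produces the first term on the RHS of \cref{eq:list-rec-bound-RHS}.

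For the heavy case, I condition on $|P| = w$ and $|P'| = w'$ with $w, w' \in ((\rho-\eta)n, \rho n]$, and switch to the product law $(\tilde X, \tilde X')$ in which, independently over coordinates, $\tilde X_i \sim \mathrm{Unif}(T_i)$ with probability $1-\omega$ and $\tilde X_i \sim \mathrm{Unif}(\F_q\setminus T_i)$ with probability $\omega := w/n$, and analogously for $\tilde X'$ with $\omega' := w'/n$. The conditional distribution of $X$ given $|P| = w$ matches that of $\tilde X$ given $|\tilde P| = w$, and \cref{prop:prob-bin-hits-expectation} (applicable because $\omega n = w$ is a positive integer strictly less than $n$) yields $\Pr[|\tilde P| = w]\,\Pr[|\tilde P'| = w'] \geq 1/(2n)$, whence
\begin{equation*}
    \Pr\left[\sum_i F_i \geq (1-\rho)n ~\Big|~ |P|=w,\, |P'|=w'\right] \leq 2n\cdot \Pr_{(\tilde X, \tilde X')}\left[\sum_i F_i \geq (1-\rho)n\right].
\end{equation*}
Under the product law the $F_i$ are independent Bernoullis, and applying \cref{lem:mixing-errors-helper} to the sets $A = \alpha T_i$, $B = \beta T_i$, and $y_i + T_i$ (all of size $\ell$, and with $\omega,\omega' \in (0, 1-\ell/q)$) gives
\begin{equation*}
    \E[F_i] \leq (1-\omega)(1-\omega') + \omega\omega' \cdot \frac{\ell}{q-\ell} \leq (1-\rho+\eta)^2 + (\rho-\eta)^2\cdot \frac{\ell}{q-\ell} =: p^\star,
\end{equation*}
where the second inequality (valid for $\omega, \omega' \in [\rho-\eta, \rho]$) follows from the monotonicity computation appearing in the proof of \cref{lem:omega-ineq}. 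Since $p^\star < 1-\rho$ by hypothesis, coupling $F_i$ coordinatewise with an independent $\mathrm{Ber}(p^\star)$ variable shows that $\sum_i F_i$ is stochastically dominated by $\mathrm{Bin}(n, p^\star)$, and \cref{cor:q-ary-bin-tail} then bounds the inner probability by $\exp_q(-nD_q(1-\rho \| p^\star))$. Summing over $(w, w')$ (with $\sum_{w,w'}\Pr[|P|=w,|P'|=w'] \leq 1$) produces the second term on the RHS of \cref{eq:list-rec-bound-RHS}.

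The main obstacle is that $X$ is uniform on $B_\rho(T_1\times\cdots\times T_n)$ rather than a product measure, so the coordinate indicators $F_i$ carry negative correlations and a direct Chernoff bound is unavailable. The switch to the product model above resolves this while paying only a $\poly(n)$ overhead from concentrating on the mode of a binomial via \cref{prop:prob-bin-hits-expectation}; this overhead is easily absorbed by the exponential decay coming from the Chernoff step.
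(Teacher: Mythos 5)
Your proposal is correct and follows essentially the same route as the paper's proof: split by whether the error-weights of $X,X'$ fall below $(\rho-\eta)n$ (bounded via the ball-size estimate of \cref{prop:estimate}), and in the heavy case pass to the independent-coordinates product model at cost $2n$ via \cref{prop:prob-bin-hits-expectation}, bound each coordinate's success probability with \cref{lem:mixing-errors-helper} and \cref{lem:omega-ineq}, and finish with the Chernoff bound. The only cosmetic differences are that you keep the multipliers $\alpha,\beta$ inside the indicators rather than absorbing them into scaled balls, and you invoke stochastic domination by $\mathrm{Bin}(n,p^\star)$ before the tail bound instead of applying it at the $\omega$-dependent parameter and then using divergence monotonicity; both are equivalent.
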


\begin{remark}
    \em
   Since \Cref{lem:omega-ineq} implies
    \[
        1-\rho > (1-\rho)^2 + \rho^2 \cdot \frac{\ell}{q-\ell} \iff 0 < \rho < 1-\frac{\ell}{q},
    \]
    it follows that one can indeed choose $\eta$ small enough to ensure 
    \[
        1-\rho > (1-\rho-\eta)^2 + (\rho+\eta)^2\cdot\frac{\ell}{q-\ell}.
    \]
    Later in \Cref{cor:mixing-list-rec-ball} we will show how to choose $\eta>0$ to bound the two terms in~\cref{eq:list-rec-bound-RHS} by $q^{-\delta n}$ for a concrete $\delta>0$ (which depends on $\rho,\ell,q$).
\end{remark}

\begin{proof}
    Denote $B = B_\rho(T_1\times \cdots \times T_n)$, and fix any nonzero $\alpha_1,\alpha_2 \in \F_q$, and $y \in \F_q^n$. Fix some $\eta \in (0,\rho)$ soon to be determined. Let $Z_1 \sim \alpha_1 \cdot B$, and $Z_2 \sim \alpha_2 \cdot B$. The strategy is to ``massage'' the distribution of $(Z_1,Z_2)$ so that it is easier to analyze the probability that $Z_1 + Z_2 \in B+y$, and then apply \Cref{lem:mixing-errors-helper}. 
	
	For each $i \in [n]$ and $j=1,2$, let $W_{j,i} = 1_{\{Z_{j,i} \notin \alpha_i \cdot T_i\}}$, and let $W_j = (W_{j,1},\dots,W_{j,n})$ for $j=1,2$. By the law of total probability, 
    \begin{align}
        \Pr[Z_1+Z_2 \in y+B] &= \sum_{1\leq w_1,w_2 \leq \lfloor \rho n\rfloor }\Pr[Z_1+Z_2 \in y+B|\wt(W_1)=w_1,\wt(W_2)=w_2]\nonumber \\
        &\qquad\qquad\qquad\qquad\cdot \Pr[\wt(W_1)=w_1,\wt(W_2)=w_2] \nonumber \\ 
        &\leq \sum_{\lceil (\rho-\eta) n\rceil \leq w_1,w_2 \leq \lfloor \rho n\rfloor }\Pr[Z_1+Z_2 \in y+B|\wt(W_1)=w_1,\wt(W_2)=w_2]\nonumber \\
        &\qquad\qquad\qquad\qquad\cdot \Pr[\wt(W_1)=w_1,\wt(W_2)=w_2] \nonumber \\ 
        &\qquad\qquad + \Pr[\wt(W_1) \leq (\rho-\eta)n ~\lor~\wt(W_2)\leq(\rho-\eta)n] \nonumber \\
        &\leq \max_{\lceil (\rho-\eta) n\rceil \leq w_1,w_2 \leq \lfloor \rho n\rfloor} \Pr[Z_1+Z_2 \in y+B|\wt(W_1)=w_1,\wt(W_2)=w_2]\label{eq:start-total-prob} \\ 
        &\qquad\qquad + \Pr[\wt(W_1) \leq (\rho-\eta)n ~\lor~\wt(W_2)\leq(\rho-\eta)n] \ .\nonumber 
    \end{align}
    We first bound
    \begin{align}
        \Pr[\wt(W_1) \leq (\rho-\eta)n ~\lor~&\wt(W_2)\leq(\rho-\eta)n] \leq 2\cdot \frac{|B_{\rho-\eta}(T_1\times \cdots \times T_n)|}{|B_{\rho}(T_1\times \cdots \times T_n)|} \nonumber \\
        &\leq 2\cdot \frac{q^{h_{q,\ell}(\rho-\eta)n}}{\frac{1}{\sqrt{2n}}q^{nh_{q,\ell}(\rho)}} =2\sqrt{2n} \cdot q^{(h_{q,\ell}(\rho-\eta)-h_{q,\ell}(\rho))n} \ , \label{eq:ball}
    \end{align}
    where we used \cref{prop:estimate}. 
    
    We now focus on proving that \cref{eq:start-total-prob} is small. Fix any $w_1,w_2 \in \{\lceil (\rho-\eta) n\rceil,\lceil (\rho-\eta) n\rceil+1,\dots, \lfloor \rho n\rfloor\}$ which maximize $\Pr[Z_1+Z_2 \in y+B|\wt(W_1)=w_1,\wt(W_2)=w_2]$, and set $\omega_j = w_j/n$ for $j=1,2$. We now introduce a ``model'' distribution for $Z_1+Z_2|(\wt(W_1)=w_1,\wt(W_2)=w_2)$ which is easier to analyze. For $i=1,\dots,n$ and $j=1,2$, let $A_{j,i} \in \{0,1\}$ denote i.i.d.\ draws from a $\mathrm{Ber}(\omega_j)$ random variable, and set $A_j = (A_{j,1},\dots,A_{j,n})$. 
    
    Next, define for $i=1,\dots,n$ the following random variables $D_i$: 
    \[ D_i = 
    \begin{cases}
        \alpha_1 \cdot X + \alpha_2 \cdot X' & A_{1,i}=A_{2,i}=0, \\
        \alpha_1 \cdot X + Y' & A_{1,i}=0, A_{2,i}=1, \\
        Y + \alpha_2 \cdot X' & A_{1,i}=1, A_{2,i}=0, \\
        Y + Y' & A_{1,i}=A_{2,i}=1.
    \end{cases}
    \]
    Above, $X,X' \sim T_i$, $Y \sim \F_q \setminus \alpha_1 T_i$, and $Y' \sim \F_q \setminus \alpha_2 T_i$, and they are all independent. Let $D = (D_1,\dots,D_n)$, and note that 
    \begin{align*}
        \Pr[Z_1+Z_2 \in & ~y+B|\wt(Y_1)=w_1,\wt(Y_2)=w_2] = \Pr[D \in y+B|\wt(A_1)=w_1,\wt(A_2)=w_2]\\
        &=\frac{\Pr[D \in y+B,\wt(A_1)=w_1,\wt(A_2)=w_2]}{\Pr[\wt(A_1)=w_1,\wt(A_2)=w_2]}\\
        &\leq \frac{\Pr[D \in y+B]}{\Pr[\wt(A_1)=w_1,\wt(A_2)=w_2]}\\
        &\leq 2n \cdot \Pr[D \in y+B],
    \end{align*}
    where the ultimate inequality follows from \cref{prop:prob-bin-hits-expectation}.

      We now focus on upper-bounding $\Pr[D \in y+B]$. Define the indicator random variables $E_i = 1_{\{D_i \in y_i+T_i\}}$ and note that $D \in y+B \iff \sum_{i=1}^n E_i \geq (1-\rho)n$. Using the above notation, we find
    \begin{align*}
    \Pr[E_i = 1] &= \Pr[D_i \in y+T_i] \\
        &= \Pr[D_i \in y+T_i | A_{1,i}=A_{2,i}=0] \Pr[A_{1,i}=A_{2,i}=0] \\ 
        & \qquad + \Pr[D_i \in y+T_i | A_{1,i}=0, A_{2,i}=1] \Pr[A_{1,i}=0, A_{2,i}=1] \\
        & \qquad + \Pr[D_i \in y+T_i | A_{1,i}=1, A_{2,i}=0] \Pr[A_{1,i}=1, A_{2,i}=0] \\
        & \qquad + \Pr[D_i \in y+T_i | A_{1,i}=A_{2,i}=1] \Pr[A_{1,i}=A_{2,i}=1] \\
        &= \Pr[\alpha_1 \cdot X + \alpha_2 \cdot X' \in y+T_i] \cdot (1-\omega_1)(1-\omega_2) \\ 
        & \qquad + \Pr[\alpha_1 \cdot X+Y' \in y+T_i | A_{1,i}=0, A_{2,i}=1] \cdot (1-\omega_1)\omega_2 \\
        & \qquad + \Pr[Y+\alpha_2 \cdot X' \in y+T_i | A_{1,i}=1, A_{2,i}=0] \cdot \omega_1 (1-\omega_2) \\
        & \qquad + \Pr[Y+Y' \in y+T_i | A_{1,i}=A_{2,i}=1] \cdot \omega_1\omega_2.
    \end{align*}
    
    Applying \Cref{lem:mixing-errors-helper} with $A \leftarrow \alpha_1 T_i$ and $B \leftarrow \alpha_2 T_i$, noticing that $\omega_1, \omega_2 < 1-\ell/q$, we readily get
    \[
        \Pr[E_i = 1] \le (1-\omega_1)(1-\omega_2) + \omega_1\omega_2\cdot\frac{\ell}{q-\ell} \ .
    \]
    Thus, if $F \sim \mathrm{Binomial}(n,(1-\omega_1)(1-\omega_2) + \omega_1\omega_2\cdot\frac{\ell}{q-\ell})$, then the probability that $\sum_{i=1}^nE_i \geq (1-\rho)n$ is at most the probability that $F \geq (1-\rho)n$. By the classical Chernoff--Hoeffding bound (\cref{prop:2-ary-bin-tail}), 
    \[
        \Pr[F \geq (1-\rho)n] \leq \exp_q\left(-nD_q\left(1-\rho\Big\|(1-\omega_1)(1-\omega_2) + \omega_1\omega_2\cdot\frac{\ell}{q-\ell}\right)\right).
    \]
    Recall that $\rho-\eta \leq \omega_1,\omega_2\leq \rho$, so by \Cref{lem:omega-ineq},
    \begin{align*}
        (1-\omega_1)(1-\omega_2) + \omega_1\omega_2\cdot\frac{\ell}{q-\ell} \leq (1-\rho+\eta)^2 + (\rho-\eta)^2 \cdot \frac{\ell}{q-\ell} \ .
    \end{align*}
    Thus, using \Cref{claim:div-properties}, it follows that 
    \[
        D_q\left(1-\rho\Big\|(1-\omega_1)(1-\omega_2) + \omega_1\omega_2\cdot\frac{\ell}{q-\ell}\right) \geq D_q\left(1-\rho\Big\|(1-\rho+\eta)^2+(\rho-\eta)^2\cdot \frac{\ell}{q-\ell}\right).
    \]
    It thus follows that 
    \begin{align*}
        \max_{\lceil (\rho-\eta) n\rceil \leq w_1,w_2 \leq \lfloor \rho n\rfloor}& \Pr[Z_1+Z_2 \in y+B|\wt(W_1)=w_1,\wt(W_2)=w_2] \\
        &\leq 2n \cdot \Pr[D \in y+B]\\
        &\leq 2n\cdot \exp_q\left(D_q\left(1-\rho\Big\|(1-\rho+\eta)^2+(\rho-\eta)^2\cdot \frac{\ell}{q-\ell}\right)\right),
    \end{align*}
    which, when combined with the bound of~\cref{eq:ball}, yields the bound claimed in the lemma.
\end{proof}

\begin{corollary}\label{cor:mixing-list-rec-ball}
    Let $n \in \N$, $q$ a prime power, $1 \leq \ell < q$ an integer, and let $\rho \in (0,1-\ell/q)$.  Let $T_1,\ldots,T_n \subseteq \F_q$, each of size $\ell$. The list-recovery ball $B_{\rho}(T_1 \times \ldots \times T_n)$ is $\delta$-mixing, for
    \begin{equation*}
        \delta=\log_q\left(\frac{(q-\ell)(1-\rho)}{\rho \ell}\right)\cdot \frac{\rho^4 (1-\ell/q-\rho)^2}{16\log q}
    \end{equation*}
    assuming $n \ge \left( \frac{\log q}{\rho(1-\ell/q-\rho)} \right)^{c}$ for some universal constant $c$. 
\end{corollary}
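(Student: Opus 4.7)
The plan is to instantiate Lemma~\ref{lem:mixing-list-rec-ball} with a carefully chosen parameter $\eta>0$ and check that both terms on its right-hand side are bounded by $\tfrac{1}{2}q^{-\delta n}$. Write $A := \log_q\!\left(\frac{(q-\ell)(1-\rho)}{\rho\ell}\right)$, which is strictly positive because $\rho<1-\ell/q$, so the target reads $\delta = A\cdot \rho^4(1-\ell/q-\rho)^2/(16\log q)$. The rough idea is that $\delta$ is tailored to be driven by the entropy-difference term, while the KL term will comfortably beat it via Pinsker.

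For the first summand in Lemma~\ref{lem:mixing-list-rec-ball}, Claim~\ref{lem:bound-diff-entropy} gives $h_{q,\ell}(\rho-\eta)-h_{q,\ell}(\rho) \le -\eta A$, so this summand is at most $2\sqrt{2n}\cdot q^{-\eta A n}$. For the second summand, I would expand $f(\eta):=(1-\rho+\eta)^2+(\rho-\eta)^2\ell/(q-\ell)$ explicitly and verify that $(1-\rho)-f(0)=\rho(1-\rho-\ell/q)\cdot q/(q-\ell)$, which is positive by Lemma~\ref{lem:omega-ineq}, and that $(1-\rho)-f(\eta) \ge \tfrac{1}{2}\rho(1-\rho-\ell/q)$ whenever $\eta$ is a sufficiently small constant multiple of $\rho(1-\rho-\ell/q)$ (a short derivative computation suffices). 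Pinsker's inequality (Claim~\ref{lem:pinsker}), together with the change-of-base identity $D_q = D/\ln q$, then yields $D_q(1-\rho \,\|\, f(\eta)) \ge \rho^2(1-\rho-\ell/q)^2/(2\ln q)$, so the second summand is at most $2n\cdot q^{-n\rho^2(1-\rho-\ell/q)^2/(2\ln q)}$.

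I would then set $\eta := \rho^4(1-\rho-\ell/q)^2/(8\log q)$. Since $q\ge 2$ this $\eta$ is at most $\rho(1-\rho-\ell/q)/8$, small enough for the $\Delta$-estimate above, and the first summand becomes $2\sqrt{2n}\cdot q^{-2\delta n}$. For the second summand I need $\rho^2(1-\rho-\ell/q)^2/(2\ln q) \ge \delta$, which reduces to showing that $A\rho^2$ is at most a constant multiple of $\log q/\ln q$; this holds by a short elementary calculation using $A \le 1 + \log_q(1/\rho)$ and the fact that $\rho^2\log_q(1/\rho)$ is uniformly bounded. Finally, the hypothesis $n\ge (\log q/(\rho(1-\ell/q-\rho)))^c$ for a large enough universal constant $c$ makes $1/\delta$ polynomial in that quantity, so $q^{\delta n/2}$ absorbs both the $2\sqrt{2n}$ and $2n$ prefactors, and summing the two contributions gives the desired bound $q^{-\delta n}$. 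The main technical obstacle is the joint calibration of $\eta$: it must be small enough that $\Delta(\eta)$ stays a constant fraction of $\Delta(0)$ (so Pinsker produces a usable lower bound) yet large enough that $\eta A$ reaches the target $\delta$; the choice $\eta\sim \rho^4(1-\rho-\ell/q)^2/\log q$ is engineered precisely for this trade-off.
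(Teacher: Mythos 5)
Your proposal is correct, and it follows the same skeleton as the paper's proof: instantiate \cref{lem:mixing-list-rec-ball} with a suitable $\eta$, control the entropy-difference term via \cref{lem:bound-diff-entropy}, control the divergence term via Pinsker (\cref{lem:pinsker}), and absorb the polynomial prefactors using the hypothesis on $n$. The difference is in the calibration of $\eta$ and the lower bound on the gap $\Delta=(1-\rho)-f(\eta)$. The paper takes the large choice $\eta=\rho(1-\ell/q-\rho)/3$ and bounds the gap only by $\Delta\geq\eta\rho$, so its divergence exponent is $\frac{2\rho^4(1-\ell/q-\rho)^2}{9\ln q}$, which carries no factor of $A=\log_q\bigl(\frac{(q-\ell)(1-\rho)}{\rho\ell}\bigr)$; as a result, the paper's final combination of \cref{eq:case-comb} with the entropy term is only justified when $A$ is at most a small absolute constant, and for small $\rho$ (where $A\approx\log_q(1/\rho)$ is large) its written estimates do not literally reach the claimed $\delta$. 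Your calibration avoids this: taking the much smaller $\eta=\rho^4(1-\ell/q-\rho)^2/(8\log q)$ keeps $\Delta$ within a factor $2$ of its value at $\eta=0$ (which you correctly compute as $\rho(1-\ell/q-\rho)\cdot\frac{q}{q-\ell}$, positive by \cref{lem:omega-ineq}), so Pinsker yields the exponent $\rho^2(1-\ell/q-\rho)^2/(2\ln q)$, which dominates $\delta$ uniformly over $\rho\in(0,1-\ell/q)$ because $A\rho^2=O(1)$, while the entropy term lands at exponent exactly $2\delta$. So your argument is not only correct but actually tightens the one loose step in the paper's own proof; the only extra work is the easy derivative bound on $f$ and the elementary estimate $A\le 1+\log_q(1/\rho)$ together with the boundedness of $\rho^2\log_q(1/\rho)$.
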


\begin{proof}
    By \Cref{lem:mixing-list-rec-ball}, it suffices to show that we can choose $\eta>0$ so that 
    \[
        1-\rho > (1-\rho+\eta)^2 + (\rho-\eta)^2 \cdot \frac{\ell}{q-\ell} \ 
    \]
    and, that assuming $n \geq \left(\frac{\log q}{\rho(1-\ell/q-\rho)}\right)^c$  for universal constant $c$, we have 
    \begin{align}
        (2\sqrt{2n}) &\cdot \exp_q\left(n(h_{q,\ell}(\rho-\eta) - h_{q,\ell}(\rho))\right) \label{eq:hql-term}\\
        &\qquad+ 2n\exp_q\left(-nD_q\left(1-\rho\|(1-\rho+\eta)^2+(\rho-\eta)^2 \cdot \frac{\ell}{q-\ell}\right)\right) \label{eq:binom-term} \\
        &\leq q^{-n\delta} \nonumber,
    \end{align}
    for $\delta$ as in the statement. 

    Set $\gamma = \frac{\ell}{q-\ell}$.
    Note that then $0<\rho<1-\ell/q$ is equivalent to $0<\rho<\frac{1}{1+\gamma}$.
    Define $\Delta$ so that 
     \[
        1-\rho-\Delta = (1-\rho+\eta)^2 +(\rho-\eta)^2 \cdot \gamma,
    \]
    which means that
    \begin{align}
    \Delta &= (1+2\eta+2\eta\gamma)\rho - (1+\gamma)\rho^2 - 2\eta - \eta^2 - \eta^2 \gamma \nonumber \\
    &\ge (1+2\eta+2\eta\gamma)\rho - (1+\gamma)\rho^2 - (3+\gamma)\eta.\label{eq:Delta-LB}
    \end{align}
    Rearranging, we get that \cref{eq:Delta-LB} is at least $\eta \rho$ whenever
    \begin{equation*}
        0<\eta \leq  \frac{\rho-\rho^2(1+\gamma)}{(3+\gamma)-\rho(1+2\gamma)}.
    \end{equation*}
    Given this, we may choose
    \begin{equation*}
        \eta = \frac{\rho(1-\ell/q-\rho)}{3}>0.
    \end{equation*}
    Indeed, recalling that $\gamma=\frac{\ell}{q-\ell}$, we have
    \begin{equation*}
        \frac{\rho-\rho^2(1+\gamma)}{(3+\gamma)-\rho(1+2\gamma)}\geq \frac{\rho-\rho^2(1+\gamma)}{3+\gamma} = \frac{\rho(q(1-\rho)-\ell)}{3q-2\ell} \geq \frac{\rho(q(1-\rho)-\ell)}{3q} = \eta.
    \end{equation*}
    
    By \Cref{lem:pinsker}, $D(1-\rho\|1-\rho-\Delta) \leq 2\Delta^2$, and so $D_q(1-\rho\|1-\rho-\Delta) \leq \frac2{\ln q} \Delta^2$. 
    We therefore have
    \begin{align}
        2n\exp_q&\left(-nD_q\left(1-\rho\|(1-\rho+\eta)^2+(\rho-\eta)^2 \cdot \frac{\ell}{q-\ell}\right)\right)\le 2n\exp_{q}\left( -\frac{2}{\ln q}\eta^2\rho^2 n \right) \nonumber \\ 
        &\qquad\le \exp_{q}\left(- \frac{ \rho^4 (1-\ell/q-\rho)^2}{16\log q} \cdot n\right),
        \label{eq:case-comb}
    \end{align}
    assuming that $n \ge \poly(\log q,1/\rho, 1/(1-\ell/q - \rho))$. 

    We now obtain a lower bound on $ h_{q,\ell}(\rho) - h_{q,\ell}(\rho-\eta)$ with our chosen $\eta$. \cref{lem:bound-diff-entropy} tells us that
    \begin{equation*}
        h_{q,\ell}(\rho) - h_{q,\ell}(\rho-\eta) \ge \log_q\left(\frac{(q-\ell)(1-\rho)}{\rho \ell}\right)\eta = \log_q\left(\frac{(q-\ell)(1-\rho)}{\rho \ell}\right) \cdot  \frac{\rho(1-\ell/q-\rho)}{3},
    \end{equation*}
    which is positive whenever $\rho < 1-\ell/q$.
    Thus, 
    \begin{multline}\label{eq:case-comb2}
        2\sqrt{2n} \cdot \exp_{q}\left(-\log_q\left(\frac{(q-\ell)(1-\rho)}{\rho \ell}\right)   \frac{\rho(1-\ell/q-\rho)}{3}\cdot  n \right)
    \\
    \le \exp_{q}\left(- \log_q\left(\frac{(q-\ell)(1-\rho)}{\rho \ell}\right)   \frac{\rho(1-\ell/q-\rho)}{6}\cdot  n \right),
    \end{multline}
    where we assumed that $n \ge \poly(\log q,1/\rho, 1/(1-\ell/q - \rho))$. 

    By \cref{eq:case-comb,eq:case-comb2}, it follows that a rough, but sufficient, bound on \eqref{eq:hql-term}+\eqref{eq:binom-term} is 
    \[
        \exp_q\left(-\log_q\left(\frac{(q-\ell)(1-\rho)}{\rho \ell}\right)\cdot \frac{\rho^4 (1-\ell/q-\rho)^2}{16\log q} \cdot n\right),
    \]
    completing the proof.
\end{proof}

\subsection{List Recovery from Errors}

Having established that the list-recovery ball is $\delta$-mixing, we can
repeat roughly the same argument we had for list recovery with erasures, using \cref{lem:main-erasures}.

\begin{theorem}[list recovery with errors]\label{thm:errors}
For every $n \in \mathbb{N}$, a prime power $q$, an integer $\ell < q$, $\rho \in (0,1-\ell/q)$, and $\eps \in (0,1)$, the following holds. With probability at least $1-q^{-n}$, a random linear code $\mathcal{C} \subseteq \F_q^n$ of rate
\[
R = 1-h_{q,\ell}(\rho)-\eps
\]
is $(\rho,\ell,L)$-list-recoverable from errors, with
\[
L = C_{\rho,\ell,q} \cdot \frac{1}{\eps},
\]
provided that $n$ is large enough. More concretely, there exists a universal constant $C$ such that 
\[
C_{\rho,\ell,q} \le q^{\ell \cdot \left( \frac{\log q}{\rho(1-\ell/q-\rho)} \right)^{C}}
\] provided that $n \ge \left( \frac{\log q}{\rho(1-\ell/q-\rho)} \right)^{C}$.
\end{theorem}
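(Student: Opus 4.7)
The plan is to follow the template established in the proof of \cref{thm:erasures}, with the combinatorial rectangle $T_1 \times \cdots \times T_n$ replaced everywhere by the list-recovery ball $B = B_\rho(T_1 \times \cdots \times T_n)$. The key new ingredient, which plays the role that \cref{cor:prime-mixing} played for erasures, is the $\delta$-mixing property of list-recovery balls established in \cref{cor:mixing-list-rec-ball}, giving $\delta = \Omega(\rho^4(1-\ell/q-\rho)^2/\polylog(q))$ independently of the base field. With this in hand, \cref{lem:main-erasures} applied to $T = B$ controls the size of the intersection of any spanned subspace with $B$.

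Specifically, by monotonicity of list-recoverability we may assume $|T_i|=\ell$ for all $i$, giving at most $\binom{q}{\ell}^n \leq q^{n\ell}$ balls to union bound over. If $|B \cap \cC| > L$, then taking a maximal linearly independent subset of $B \cap \cC$ yields $v_1,\ldots,v_b \in B \cap \cC$ with $b \in [\log_q(L+1),\,L+1]$ satisfying $|\Sp(v_1,\ldots,v_b) \cap B| > L$. A fixed such tuple lies entirely in $\cC$ with probability $q^{-(1-R)nb}$ by \cref{prop:prob-contained-in-rlc}. Using $|B| \leq q^{n h_{q,\ell}(\rho)}$ from \cref{prop:estimate} together with $1-R = h_{q,\ell}(\rho)+\eps$, one obtains the convenient identity $|B|^b \cdot q^{-(1-R)nb} = q^{-\eps n b}$.

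I would then split the sum over $b$ at a threshold $b^\star = \lceil c\ell/\eps\rceil$ for a small absolute constant $c$. For $b \leq b^\star$, \cref{lem:main-erasures} bounds the number $|\mathcal{F}_b|$ of ``bad'' $b$-tuples by $|B|^b \cdot q^{-an}$, so after union-bounding over the $\leq q^{n\ell}$ balls the total contribution is
\[
p_1 \;\leq\; q^{n\ell}\cdot b^\star \cdot q^{-an}\cdot \sum_{b=\lceil\log_q(L+1)\rceil}^{b^\star} q^{-\eps n b} \;\leq\; q^{-n},
\]
provided $a$ is chosen larger than $\ell$ by an additive constant. For $b > b^\star$, the trivial bound $|\mathcal{F}_b|\leq |B|^b$ gives
\[
p_2 \;\leq\; q^{n\ell}\cdot (L+1)\cdot q^{-\eps n b^\star}\;\leq\; q^{-n},
\]
provided $c$ is a sufficiently large absolute constant so that $\eps b^\star \geq \ell + 2$.

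The final constraint from \cref{lem:main-erasures} is $L \geq q^{8a/\delta}\cdot b^\star$, i.e., $L \geq q^{O(\ell/\delta)}\cdot (\ell/\eps)$. Substituting $1/\delta = O\bigl((\log q/(\rho(1-\ell/q-\rho)))^{O(1)}\bigr)$ from \cref{cor:mixing-list-rec-ball} yields the claimed bound $L \leq q^{\ell\cdot (\log q/(\rho(1-\ell/q-\rho)))^{O(1)}}/\eps$. The main conceptual obstacle -- establishing that list-recovery balls mix nontrivially in a field-insensitive manner -- has already been overcome in \cref{lem:mixing-errors-helper} and \cref{cor:mixing-list-rec-ball}; the remaining work is a routine instantiation of the \cref{thm:erasures} template, the only genuinely new accounting being the replacement of the rectangle count $\binom{q}{\ell}^{(1-\alpha)n}$ by $\binom{q}{\ell}^n \leq q^{n\ell}$ and of the rectangle size by the ball size from \cref{prop:estimate}.
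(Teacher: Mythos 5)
Your proposal is correct and follows essentially the same route as the paper: union-bound over the $\binom{q}{\ell}^n$ list-recovery balls, invoke \cref{cor:mixing-list-rec-ball} for $\delta$-mixing and \cref{lem:main-erasures} to bound the bad tuples, split the sum over $b$ at $b^\star = \lceil c\ell/\eps\rceil$, and use $|B|\le q^{nh_{q,\ell}(\rho)}$ with $1-R = h_{q,\ell}(\rho)+\eps$ exactly as in the \cref{thm:erasures} template. The only differences are immaterial constant choices (e.g., your $a = \ell + O(1)$ and absolute $c$ versus the paper's $a=2\ell+2$ and $c = \frac{1}{4h_{q,\ell}(\rho)}$), which do not affect the claimed form of $C_{\rho,\ell,q}$.
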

\begin{proof}
The technique follows the proof of \cref{thm:erasures}, with the necessary changes. We fix some $T_1,\ldots,T_n \subseteq \F_q$, each of size $\ell$, and denote by $T = B_{\rho}(T_1 \times \cdots \times T_n)$ the list-recovery ball. Recall that we wish to show that $|T \cap \mathcal{C}| \le L$. The rest of the argument exactly matches \cref{thm:erasures}, that is, if $|T \cap \mathcal{C}| > L$, there must exist some $\bm{v} \in \mathcal{F}$ for which $\set{\bm{v}} \subseteq T \cap \mathcal{C}$. Three things change:
\begin{enumerate}
    \item Our bound on $\delta$, which we now take from \cref{cor:mixing-list-rec-ball}.
    \item The number of input lists we union-bound over. While in \cref{thm:erasures} it was $\mathcal{T} = \binom{n}{\alpha n}\binom{q}{\ell}^{(1-\alpha)n}$, here it's $\mathcal{T} = \binom{q}{\ell}^n$.
    \item The gap $\eps$ is now with respect to the ``list-recovery with \emph{errors}'' capacity $1-h_{q,\ell}(\rho)$.
\end{enumerate}
Concretely, the bound on $p_1$ now becomes
\[
p_1 \le \mathcal{T} \cdot b^{\star} q^{-a n} \cdot \sum_{b=\log_{q}(L+1)}^{b^{\star}}|T|^b \cdot q^{-(1-R)nb},
\]
and the bound on $p_2$ is now
\[
p_2 \le \mathcal{T} \cdot \sum_{b=b^{\star}+1}^{L+1}|T|^b q^{-(1-R)nb},
\]
and we are interested in bounding $p = p_1+p_2$. We will re-set $a$ and $c$ (which appears in $b^{\star} = \lceil c\ell/\eps \rceil$)  shortly. Recalling that $|T| \le q^{n h_{q,\ell}(\rho)}$ (see \cref{prop:estimate}), the bound on $p_1$ becomes
\begin{align*}
p_1 &\leq q^{n\ell \log_{q}(eq/\ell)+\log_{q}(b^{\star})-an} \cdot \sum_{b=\log_{q}(L+1)}^{b^{\star}}q^{nbh_{q,\ell}(\rho)-(h_{q,\ell}(\rho)+\eps)nb} \\ 
&\le q^{2n\ell - an - \eps n  h_{q,\ell}(\rho)\log_{q}(L+1)} \le q^{(2\ell-a)n}.
\end{align*}
Taking $a = 2\ell+2$, we get $p_1 \le q^{-2n}$.
Moving on to $p_2$, we have
\[
p_2 \le \mathcal{T} \cdot \sum_{b=b^{\star}+1}^{L+1}q^{nbh_{q,\ell}(\rho)-(h_{q,\ell}(\rho)+\eps)nb} \\
\le q^{2n\ell - h_{q,\ell}(\rho)\eps n b^{\star}}.
\]
Setting $c = \frac{1}{4h_{q,\ell}(\rho)}$ we get $p_{2} \le q^{-2n}$, and overall $p \le q^{-n}$, as desired. The output list size is $q^{\frac{8a}{\delta}} \cdot \frac{c\ell}{\eps}$. Using the fact that $\frac{1}{h_{q,\ell}(\rho)} \le \frac{1}{\log q}$, and that $\log_{q}\left( \frac{(q-\ell)(1-\rho)}{\rho\ell} \right) \le \log_{q}(q/\rho)$, the bound on $C_{q,\rho}$ follows.
\end{proof}

\ifauthors{
\section*{Acknowledgements}

This work was done in part while
the authors were visiting the Simons Institute for the Theory of Computing, supported by DOE grant \#DE-SC0024124. D.\ Doron is supported in part by NSF-BSF grant \#2022644. J.\ Mosheiff is supported by Israel Science Foundation grant 3450/24 and an Alon Fellowship. N.\ Resch is supported by an NWO (Dutch Research Council) grant with number C.2324.0590.
J.\ Ribeiro is supported by
FCT/MECI through national funds and when applicable co-funded EU funds under UID/50008: Instituto de Telecomunicações.
}
\fi

\bibliographystyle{alpha}
\bibliography{refs}

\end{document}